\newtheorem{theorem}{Theorem}[section]
\newtheorem{lemma}[theorem]{Lemma}
\newtheorem{paradox}[theorem]{Paradox}
\newtheorem{proposition}[theorem]{Proposition}
\newtheorem{corollary}[theorem]{Corollary}
\theoremstyle{definition}
\newtheorem{definition}[theorem]{Definition}
\newtheorem{example}[theorem]{Example}
\newtheorem{xca}[theorem]{Exercise}
\theoremstyle{remark}
\newtheorem{remark}[theorem]{Remark}
\numberwithin{equation}{section}
\newcommand{\abs}[1]{\lvert#1\rvert}
\newcommand{\blankbox}[2]{%
  \parbox{\columnwidth}{\centering
	   \setlength{\fboxsep}{0pt}%
    \fbox{\raisebox{0pt}[#2]{\hspace{#1}}}%
  }%
}
\begin{document}

\title[The role of integrability in FPUT-like models]{The role of integrability in Fermi-Pasta-Ulam-Tsingou-like models}

\author{Matteo Gallone}
\address{Mathematics Area, SISSA, Via Bonomea 265, Trieste (Italy)}
\curraddr{Dipartimento di Matematica ``F.~Enriques'',
Università Statale di Milano, Milano, Italia, 20133}
\email{matteo.gallone@unimi.it}


\subjclass[2020]{Primary 35Q53  -- KdV equations, 37K60 -- Lattice dynamics; integrable lattice equations}
\date{\today}


\keywords{Thermalization, Prethermalization, Fermi-Pasta-Ulam-Tsingou, Toda lattice, Korteweg-de Vries}

\begin{abstract}
In these lecture notes, we present, contextualize, and discuss the phenomenon of metastability (or prethermalization) in the Fermi-Pasta-Ulam-Tsingou lattice and similar models from the viewpoint of perturbation theory. We provide an updated state-of-the-art description of this intermediate state in terms of the underlying integrable dynamics, in particular the Korteweg-de Vries and inviscid Burgers equations.
\end{abstract}

\maketitle

\tableofcontents

\section{Introduction}
Seventy years ago Enrico Fermi, John Pasta, Stanislaw Ulam and Mary Tsingou revolutionized the study of nonlinear dynamical systems by performing the first series of numerical simulations with the Maniac II computer at Los Alamos. In modern terms, they were interested in the dynamics of Fourier modes along the time evolution of the Hamiltonian system
\begin{equation}\label{eq:IntroFPUHam}
	H=\sum_{j=1}^N \left[\frac{p_j^2}{2}+\Phi(q_{j+1}-q_j) \right]
\end{equation}
where $p_j,q_j \in \mathbb{R}$ and $\Phi$ models a nearest neighbor interaction
\begin{equation}
	\Phi(z)=\frac{1}{2}z^2+\frac{\alpha}{3}z^3+\frac{\beta}{4}z^4+\dots
\end{equation}
in which the real numbers $\alpha,\beta,\dots$ measure the strength of the nonlinearity.\footnote{To give a precise mathematical model one shall also fix boundary conditions, e.g.~$q_N=q_0$, $p_N=p_0$ (periodic boundary conditions) or $q_N=q_0=p_N=p_0=0$.} The equations of motion associated to $H$ can be written as second order ordinary differential equations
\begin{equation}
	\begin{split}
	\ddot{q}_j&=\Phi'(q_{j+1}-q_j)-\Phi'(q_j-q_{j-1}) \\
	&\simeq q_{j+1}+q_{j-1}-2 q_j + \alpha \Big[(q_{j+1}-q_j)^2-(q_j-q_{j-1})^2 \Big]+\cdots
	\end{split}
\end{equation}
from which one explicitly recognizes a nonlinear perturbation of the harmonic chain.

Their goal was to investigate the statistical behavior of energy redistribution in one-dimensional nonlinear chains of particles, with the expectation that energy would gradually spread among the modes, leading to thermal equilibrium -- a fundamental assumption in statistical mechanics and hydrodynamics.

Surprisingly, the numerical results did not meet the expectations and it became customary to refer to this problem as to the \emph{Fermi-Pasta-Ulam (FPU) paradox} and to the model \eqref{eq:IntroFPUHam} to the \emph{FPU model} or chain. The letter `T' to the acronym was added much later, when the role of Mary Tsingou was acknowledged and thus, nowadays, one refers to the FPUT paradox and to the FPUT model (see \cite{Dauxois2008}). Instead of a straight relaxation to equilibrium, the numerical experiment showed a strikingly regular and recurrent dynamics. This unexpected behavior challenged the prevailing understanding of ergodicity and thermalization, calling for extensive further investigations. Among the most notable consequences of these findings was the pioneering work of Zabusky and Kruskal, who linked the anomalous dynamics of the FPUT model to the integrable structure of the Korteweg-de Vries (KdV) equation. This insight opened the way for the discovery of solitonic behavior and the broader field of infinite dimensional integrable systems.

At present, vicinity to integrable systems is used as a paradigm to understand the slow thermalization observed in many different physical systems. Thanks to advances in experimental techniques, phenomena similar to those observed in the FPUT model can now be studied in controlled laboratory settings, particularly in quantum many-body systems. 

In these lecture notes, we will introduce the FPUT model, outline its historical and physical motivations, and explore its deep connections with the KdV hierarchy and the Toda chain. By surveying both classical results and contemporary developments, we aim to provide a comprehensive perspective on the role of integrability in FPUT-like models and its implications for modern physics and mathematics.

Before starting, let me remark that the style of the notes is informal and an they reflect the author's point of view on the question. These lecture notes are not to be meant in any sense exhaustive of the topic. 

I tried to do my best to clarify and simplify many questions and ideas, to make them accessible to graduate students and I also tried to highlight the modernity of the question(s) posed by Fermi and collaborators. If I had succeeded in clearly presenting the beauty, complexity, and interest of the problem, I would be pleased. If I had not, the fault would be mine alone, and I encourage the reader to consult other sources, perhaps more stimulating. A particularly valuable starting point is the volume compiled on the occasion of the problem’s fiftieth anniversary \cite{GallavottiBook2008}.

\section{The problem of thermalization}

\subsection{Macroscopic and microscopic dynamics}
One of the first physical phenomena we encounter in everyday life is macroscopic irreversibility. Take, for instance, my breakfast: coffee is a mixture of various substances that appear homogeneous in a (usually) tasty drink. During the preparation, I put some brown powder and water into the moka, and then the heat does something fascinating. From the top part of the moka, a brown, flavorful liquid emerges. In essence, the two ingredients have combined in what seems to be an irreversible process (I have never experienced the two substances separating back to their original states in my cup).

We can find countless examples where \emph{macroscopic irreversibility} plays a role in our daily lives. It is worth mentioning that, for a thermally isolated system, this irreversibility is governed by the \emph{second law of thermodynamics}, which states that the entropy of the final state of a thermodynamic transformation is always greater than or equal to that of the initial state. Macroscopic irreversibility is thus established as one of the fundamental principles of thermodynamics, with thermalization being one of its consequences. Everyday experiences suggest that a system initially driven out of thermal equilibrium will eventually relax to its thermal equilibrium state.

Although it is a phenomenon that can be easily observed, the mechanism by which physical systems thermalize is still largely not understood, especially in solids. Thanks in part to studies on the Fermi-Pasta-Ulam-Tsingou model, it has been shown to be more complex than initially expected. To date, and quite recently, this type of analysis has led to the establishment of the \emph{prethermalization} paradigm. It has been observed that, in certain regimes, different physical systems reach thermodynamic equilibrium through a two-stage process: initially, the system relaxes to a state where \emph{macroscopic observables} appear frozen, and then, over much longer timescales, their values evolve until they reach those predicted by thermodynamic equilibrium.

To discuss the notion of thermalization, one could begin by recalling that each thermodynamic system is composed of a large number of elementary constituents, such as gas molecules or atoms in a solid. The dynamics of these constituents are governed by Newton's laws (in the classical framework) or the Schr\"odinger equation (in the quantum framework). For the purposes of these lecture notes, we will assume that the elementary constituents are classical particles, obeying Newton's equations in a conservative environment. These equations can conveniently be reformulated as an equivalent first-order system using Hamiltonian mechanics.

By general facts, that we are going to review lately in these lecture notes, the Hamiltonian flow of a system of of $N$ particles confined in a vessel lies on the surface $H(q,p)=E$ which is compact, when we assume that the inter-particle potential is lower semi-bounded. When this is the case, the solution to Hamilton equations is globally defined in time and its flow $\Phi_H^t$ is a one-parameter real group. In particular, it is \emph{reversible}, that means: an ideal gas evolving according to Hamilton equations will concentrate in a corner of a room, provided the initial datum is suitably chosen! This fact is clearly contrary to our everyday experience discussed above. Even more, due to Poincaré recurrence theorem, if initially the gas is concentrated in one of the corners of the room, there will be a future time in which all the gas will be \emph{substantially} concentrated on the same corner. Again, this fact is contrary to our experience and conceptually seems to contradict the possibility of describing \emph{thermalization} from a microscopic point of view.

We can thus summarize those two possibilities as the following two paradoxes:

\begin{paradox}[Loschmidt paradox]
	Mechanical processes are reversible, while thermodynamic processes are not. How does this irreversibility emerge?
\end{paradox}

\begin{paradox}[Zermelo paradox]\label{par:Zermelo}
	In all mechanical conservative systems with compact phase space, for almost all initial data, the dynamics returns infinite times close to the initial condition. How does recurrence disappear in macroscopic systems?
\end{paradox}

Both paradoxes originates form the fact that we didn't define yet what is a \emph{macroscopic state} from a \emph{microscopic point of view}. This latter point is very delicate and it is very difficult to formulate it in a clear, universal and mathematically precise way. Ergodic theory, at least in its origins, is an attempt to provide such a definition and it is the starting point to contextualize and analyze the Fermi-Pasta-Ulam-Tsingou problem. Before entering ergodic theory, let us discuss some heuristics.

\subsection{Macroscopic observables as time averages}\label{subsec:MacroTimeAverages} To focus our attention and develop some intuition, let us discuss a possible definition of macroscopic observables describing a perfect gas of identical particles inside a box of size $L$. A perfect gas is a collection of particles that moves freely without interacting among them and which collides elastically with the walls of the box. Experimentally it was verified that for temperatures high enough and densities low enough, the thermodynamic state of a gas can be described by only four variables: the volume $V=L^3$ of the box, the pressure $P$ that the gas exerts on the walls of the box, the number of particles of the gas $N$ and its temperature $T$. Not all these quantities are independent, as they have to satisfy a so-called \emph{equation of state} that is
\begin{equation}\label{eq:LawPerfGas1}
	P \, V \, = \, N \, k_B \, T
\end{equation}
where $k_B$ is a universal constant, called \emph{Boltzmann constant}, whose numerical value is  $k_B=1.380\,649 \times 10^{-23} \, \text{J} \, \text{K}^{-1}$. 

Among the four thermodynamic variables mentioned above, the volume $V$ and the number of particles $N$ already have a natural interpretation in terms of microscopic dynamics, while $P$ and $T$ do not.

From its very physical definition, pressure is the force exerted by the gas on one of the faces of the box divided by the surface of the face. Let us concentrate, for the sake of concreteness on the right face perpendicular to the $x$ axis as seen in Figure~\ref{fig:Gas}.

Let us consider the contribution of a single particle, whose velocity along the $x$-axis is $v_x$ and let us assume the collision to be completely elastic, so that if the collision happens at time $t_{\text{coll}}$, we have by the impulse theorem\footnote{The impulse theorem states that if to a particle of mass $m$ it is applied a force $F(t)$, then
\[
	\int_{t_0}^{t_1} F(s) \, ds \, = \, m v(t_1)-mv(t_0) \, .
\]} 
\begin{equation}
	\int_{t_{\text{coll}}^-}^{t_{\text{coll}}^+} F(s) \, ds \, = \, 2 m v_x \, ,
\end{equation}
where $v_x$ is the modulus of the initial velocity of the particle and by $t_{\mathrm{coll}}^\mp$ we mean times shortly before and shortly after $t_{\mathrm{coll}}$. For all the times where there is no collision, the force exerted by the particle on the wall is zero. The net force is the derivative of the impulse with respect to time, which is $F(t)=\sum_{n \in \mathbb{Z}} 2 m v_x \delta(t-n t_{\mathrm{coll}})$.\footnote{Here $\delta$ denotes the Dirac delta distribution.} We can reasonably assume that our measurement lasts a time $t_{\mathrm{meas}}$ and the value of the pressure is the time-average over an interval of length $t_{\mathrm{meas}}$. Then, the contribution to the pressure of the single particle labeled by $j$ is
\begin{equation}
	P_j(t)\, :=\, \frac{1}{t_{\mathrm{meas}}} \int_0^{t_{\mathrm{meas}}} \frac{F(s)}{L^2} \, ds \,=\,\frac{1}{L^2} \frac{1}{t_{\mathrm{meas}}} \sum_{\substack{n \in \mathbb{N} \\ n t_{\text{coll}} < t_{\mathrm{meas}}}} 2 m v_{j,x} \, .
\end{equation}
We want now to take into account that $t_{\mathrm{meas}}$ is much longer than $t_{\mathrm{coll}}$. This latter information can be encoded in the definition of pressure by taking the limit as $t_{\mathrm{meas}} \to +\infty$. This yields
\begin{equation}
	P_j \,=\, \frac{1}{L^2} \lim_{t_{\mathrm{meas}} \to +\infty} \frac{1}{t_{\mathrm{meas}}} 2mv_{j,x} \frac{t_{\mathrm{meas}}}{t_{\text{coll}}} = \frac{ m v_{j,x}^2}{L^3} \, .
\end{equation}

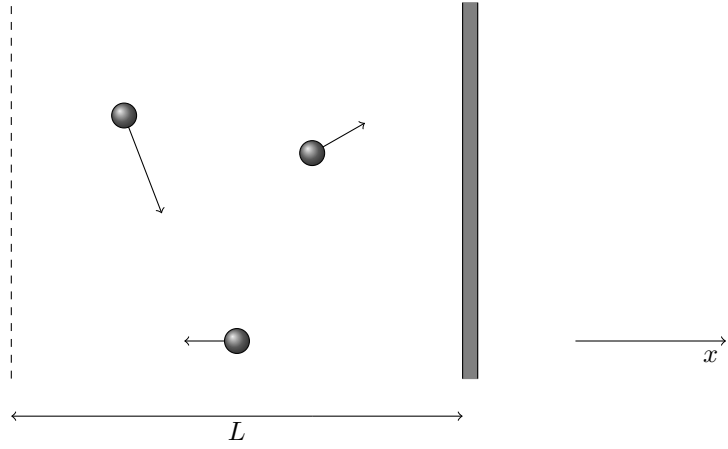
\begin{figure}[t]
	\begin{center}
		\begin{tikzpicture}
			\draw[draw=gray,fill=gray] (0,0) rectangle ++(0.2,5);
			\draw[-] (0,0) --(0,5);
			\draw[-] (0.2,0) -- (0.2,5);
			
			\draw[dashed] (-6,0) -- (-6,5);
			
			\draw[->] (-2,-0.5) -- (-6,-0.5);
			\draw[->] (-2,-0.5) -- (0,-0.5);
			\node at (-3,-0.7) {$L$};
			
			\node at (-3,0.5) [circle, draw, ball color = gray] {};
			\draw[->] (-3.16,0.5) -- (-3.7,0.5);

			\draw[->] (-2,3) -- (-1.3,3.4);
			\node at (-2,3) [circle, draw, ball color = gray] {};				
				
			\draw[->] (-4.5,3.5) -- (-4,2.2);
\node at (-4.5,3.5) [circle, draw, ball color = gray] {};

\draw[->] (1.5,0.5) -- (3.5,0.5);
	\node at (3.3,0.3) {$x$};
		\end{tikzpicture}
	\end{center}
	\caption{Pictorial representation of the gas of particles.}
	\label{fig:Gas}
\end{figure}

We can thus obtain the pressure of the gas on the wall by taking the sum over all particles, that is
\begin{equation}
	P=\sum_{j=1}^N P_j = \frac{m}{L^3} N \frac{1}{N} \sum_{j=1}^N v_{j,x}^2 = \frac{N}{L^3} m \langle v_{j,x}^2 \rangle_{N} \, .
\end{equation}
where $\langle \cdot \rangle_{N}$ denotes the average over the $N$ particles.

Interpreting now $m \langle v_{j,x}^2 \rangle_{N}=k_B T$, we obtain the law of perfect gases \eqref{eq:LawPerfGas1}.

At the end, with this heuristic argument we showed that it is reasonable that microscopic definitions of macroscopic quantities can be given as time-averages along the dynamics:
\begin{equation}\label{eq:TimeAverages1}
	F_{\text{macro}}(q_0,p_0)\,=\, \lim_{t \to +\infty} \frac{1}{t} \int_0^t F_{\text{micro}}(\Phi_H^s(q_0,p_0)) \, ds \, ,
\end{equation}
where $(q_0,p_0)$ is the microscopic initial state of the system. Actually, this idea is not so weird: think of when someone wants to measure their fever. To measure body temperature, the thermometer takes time before providing an accurate result. The procedure just described suggests that this \emph{time-averaging} is the general case. Nevertheless, this definition cancels any dependence on time and therefore, as it is, it cannot be used to detect \emph{thermalization} but it can be a definition for \emph{macroscopic observables at thermal equilibrium}. 

Leaving the thermalization process apart for one moment, we notice that \eqref{eq:TimeAverages1} poses a number of urgent questions:
\begin{itemize}
	\item[1.] Does the integral/limit in \eqref{eq:TimeAverages1} make sense?
	\item[2.] As it is, \eqref{eq:TimeAverages1} requires too much information. In fact, it requires the complete knowledge of the initial state of the system which, for a gas of $N \sim 10^{23}$ particles is much more than the bare knowledge of $P,V,N$ and $T$ suggested by experiments.
	\item[3.] Given a macroscopic observable $F_{\text{macro}}$, is it always possible to associate to it a function on the phase space $F_{\text{micro}}$? What about the contrary?
\end{itemize}
A partial answer to these questions will be given in Section \ref{sec:Ergodic}.

\section{Elements of ergodic theory and classical statistical mechanics}\label{sec:Ergodic}

In this section, we introduce some basic concepts from ergodic theory and classical statistical mechanics to address, in a mathematically well-defined framework, problems on thermalization and description of thermal equilibrium. 

\subsection{Ergodic theory}
The mathematical objects of this section are dynamical systems intended in an abstract sense. A dynamical system is a collection of four elements:
\begin{itemize}
	\item[(i)] A set $\Omega$ which is the phase space of the dynamical system and whose elements $x \in \Omega$ represent the microscopic states;
	\item[(ii)] A map $\Phi:\mathbb{M} \times \Omega \to \Omega$ (with $\mathbb{M}=\mathbb{R},\mathbb{Z}$) which is a group action of $\mathbb{M}$ on $\Omega$ and realizes the dynamics (see Definition \ref{def:OneParameterGroup} below);
	\item[(iii)] A sigma-algebra on $\Omega$ which is the collection of all measurable sets;
	\item[(iv)] A measure $\mu$ on the sigma algebra, which is normalized in the sense that
	\begin{equation}
		\mu(\Omega)\,=\,\int_{\Omega} d \mu(x) \,=\, 1 \, .
	\end{equation}
\end{itemize}
In the following, we will never mention explicitly the sigma-algebra and we will always intend a measure to be given together with a sigma-algebra.

Following the standard construction of Lebesgue integral, since $(\Omega,\mu)$ is a measure space we can construct the set of measurable function and for $1 \leq p < +\infty$ we can define the $L^p$ spaces as the (quotient) spaces of function
\begin{equation}
	L^p(\Omega, d \mu) \,:=\, \Big\{[f]\, \text{$\mu$-measurable} \, : \, \int_{\Omega} |f(x)|^p \, d \mu(x) < +\infty\Big\}
\end{equation}
with $[f]:=\{g \, \text{$\mu$-measurable} : \mu(\{g-f \neq 0 \})=0\}$. As it is customary, we will never make an explicit reference to the equivalence classes. Also, for compactness, given a measure $\mu$ and a function $f \in L^1(\Omega,d \mu)$, we will denote by
\begin{equation}
	\langle f \rangle_\mu \,:=\, \int f(x) \, d\mu(x) \, .
\end{equation}
Note that, the notation $\langle \cdot \rangle_\mu$ is reminiscent of the theory of probability: it denotes the average of $f$ with respect to the probability measure $\mu$. 

\begin{definition}\label{def:OneParameterGroup}
	The action of the one-parameter group $\mathbb{M}$ on $\Omega$ is a map $\Phi:\mathbb{M} \times \Omega \to \Omega$ such that
	\begin{itemize}
		\item[(i)] $\Phi^0 = \mathbbm{1}$;
		\item[(ii)] $\Phi^{-t} = (\Phi^t)^{-1}$ for all $t \in \mathbb{M}$;
		\item[(iii)] $\Phi^{s+t}=\Phi^s \circ \Phi^t=\Phi^t \circ \Phi^s$ for all $t,s \in \mathbb{M}$.
	\end{itemize}
\end{definition}

\begin{definition}
	A measure $\mu$ on the phase space is invariant under the action of a one-parameter group $\Phi$ if for all measurables $A$ and for all $t \in \mathbb{M}$, $\mu(\Phi^t(A))=\mu(A)$.
\end{definition}

\begin{definition}\label{def:DynSist}
	Let $\Omega$ be a set, $\Phi:\mathbb{M} \times \Omega \to \Omega$ (with $\mathbb{M}=\mathbb{Z},\mathbb{R}$) a one-parameter group on $\Omega$ and $\mu$ an invariant measure along along $\Phi$. We call $(\Omega,\Phi,\mu)$ an \emph{(abstract) dynamical system}.
\end{definition}

\begin{example} \textbf{}
	\begin{itemize}
		\item[(i)] Let $\Omega=\mathbb{T}^n=(\mathbb{R}/\mathbb{Z})^n$ be the unit $n$-dimensional torus of size one and $\omega \in \mathbb{R}^n$ be a vector. We can define the map
	\begin{equation}
		\Phi^t(\varphi)=\varphi+ \omega t \, \text{mod} \, \mathbb{Z}^n \, .
	\end{equation} Then, $(\mathbb{T}^n,\Phi, d\varphi_1 \cdots d \varphi_n)$ is a dynamical system.
		\item[(ii)] Let $\Omega=\mathbb{T}^2$ with $d \mu = d \varphi_1 d \varphi_2$ and let $A$ be the $2\times 2$ matrix
		\begin{equation}
			A\,=\, \begin{pmatrix}
				1 & 1 \\
				1 & 2
			\end{pmatrix} \, .
		\end{equation}
		Let the dynamics be defined by
		\begin{equation}
			\Phi^1(\varphi) \, = \, A \varphi \, \mathrm{mod} \, \mathbb{Z}^2 \, .
		\end{equation}
		Then $\Phi : \mathbb{Z} \times \Omega \to \Omega$ is defined as $\Phi(j,\varphi)=\Phi^j(\varphi):=\Phi^1(\Phi^{j-1}(\varphi))$. This is a dynamical system known as \emph{Arnol'd's cat}.
	\end{itemize}
\end{example}

We will now state all the results in the time-continuous case, $\mathbb{M}=\mathbb{R}$. Similar statements and properties can be stated in the discrete case.

\begin{theorem}\label{thm:ExistenceTimeAverage}
	Let $(\Omega,\Phi,\mu)$ be a dynamical system and let $f \in L^1(\Omega, d \mu)$, then 
	\begin{equation}\label{eq:DefinitionTimeAverageGeneral}
		\bar{f}(x) \,:=\, \lim_{t \to +\infty} \frac{1}{t} \int_0^t f(\Phi^s(x)) \, ds \, ,
	\end{equation}
	exists for $\mu$-a.e.~$x \in \Omega$.
\end{theorem}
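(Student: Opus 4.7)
The statement is Birkhoff's pointwise ergodic theorem in continuous time, so my plan is to follow the classical route: first prove the discrete-time version via a maximal inequality, then reduce the continuous case to the discrete one.

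For the discrete case I would fix a measure-preserving transformation $T:\Omega\to\Omega$ and an observable $g\in L^1(\Omega,d\mu)$, set $S_n g(x):=\sum_{k=0}^{n-1} g(T^k x)$ and $A_n g := S_n g / n$, and establish the \emph{maximal ergodic theorem}: if $E_N := \{x : \max_{1\le n\le N} S_n g(x)>0\}$, then $\int_{E_N} g\, d\mu \ge 0$. This is the arithmetic heart of the argument, proved by a clever combinatorial rearrangement due to Garsia: one observes that on $E_N$ the "reset-when-positive" functional $\max(0, S_1 g,\dots, S_N g)$ satisfies an inequality of the form $g(x) \ge M_N(x)-M_N(Tx)$ on $E_N$, and then integrates, using invariance of $\mu$. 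From the maximal theorem applied to $g-\alpha$ one immediately obtains a Chebyshev-type bound: for any $\alpha>0$,
\begin{equation*}
\mu\bigl\{x : \sup_n A_n g(x) > \alpha\bigr\} \le \frac{\|g\|_{L^1}}{\alpha}.
\end{equation*}

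Next I would deduce existence of the limit a.e.\ by ruling out oscillation. For rationals $\alpha>\beta$ define $B_{\alpha,\beta} := \{x : \liminf A_n g(x) < \beta < \alpha < \limsup A_n g(x)\}$. The set $B_{\alpha,\beta}$ is $T$-invariant, hence so is the restriction of the dynamics, and the maximal inequality applied to $g-\alpha$ on one side and $\beta-g$ on the other forces $\alpha\,\mu(B_{\alpha,\beta}) \le \beta\,\mu(B_{\alpha,\beta})$, which is possible only if $\mu(B_{\alpha,\beta})=0$. Since there are countably many rational pairs, $\mu$-a.e.\ $x$ lies outside every $B_{\alpha,\beta}$, so $\bar g(x)=\lim_n A_n g(x)$ exists in $[-\infty,+\infty]$; integrability of $g$ plus the maximal bound excludes infinite values except on a null set.

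To transfer this to the continuous-time statement, I would introduce the "sampled" observable $g(x):=\int_0^1 f(\Phi^s x)\,ds$, which lies in $L^1(\Omega,d\mu)$ by Fubini together with invariance of $\mu$. Since $\Phi^s$ is a one-parameter group, the cocycle identity gives $\int_0^n f(\Phi^s x)\,ds = \sum_{k=0}^{n-1} g(\Phi^k x)$, so the discrete Birkhoff theorem applied to $g$ under the map $T:=\Phi^1$ yields the existence of $\bar f(x)$ along the integer sequence $t=n$. For general $t=n+\tau$ with $\tau\in[0,1)$ one writes
\begin{equation*}
\frac{1}{t}\int_0^t f(\Phi^s x)\,ds = \frac{n}{t}\,A_n g(x) + \frac{1}{t}\int_n^{t} f(\Phi^s x)\,ds,
\end{equation*}
and controls the tail by the auxiliary function $h(x):=\int_0^1 |f(\Phi^s x)|\,ds\in L^1$ via another application of the maximal inequality to $h$, which shows $\frac{1}{n}h(\Phi^n x)\to 0$ a.e.

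The main obstacle in this plan is, as always, the maximal ergodic theorem: the combinatorial inequality on $E_N$ is elementary but not obvious, and it is the only step where one really exploits measure preservation in a non-trivial way. Everything else --- the oscillation argument, and the passage from discrete to continuous time --- is a standard consequence once this inequality is in hand.
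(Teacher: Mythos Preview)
Your outline is correct and follows the classical Garsia route to the maximal ergodic theorem, then the standard oscillation argument, then the sampling trick to pass from $\mathbb{Z}$-actions to $\mathbb{R}$-actions. The paper, however, does not prove this theorem at all: it simply identifies Theorem~\ref{thm:ExistenceTimeAverage} (together with the subsequent corollary) as the Birkhoff--Khinchin ergodic theorem and refers the reader to Cornfeld--Fomin--Sinai and to Halmos for the proof. So you are supplying precisely what the paper omits, and the argument you sketch is essentially the one in those references (Halmos in particular presents the Garsia maximal inequality and the $\limsup/\liminf$ argument in the form you describe).

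One small point worth making explicit in your write-up: the reduction to discrete time uses Fubini to guarantee that $g(x)=\int_0^1 f(\Phi^s x)\,ds$ is measurable and integrable, which requires joint measurability of $(s,x)\mapsto f(\Phi^s x)$. The paper's abstract Definition~\ref{def:DynSist} does not state this hypothesis, but it is implicitly assumed (and automatic in the Hamiltonian setting the notes ultimately care about). Apart from that, your plan is complete and the ``main obstacle'' you flag---the combinatorial inequality behind the maximal theorem---is exactly where the substance lies.
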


\begin{corollary}\label{cor:TimeAverage}
	If $x \in \Omega$ is s.t.~$\bar{f}(x)$ exists, then $\forall t \in \mathbb{R}$, $\bar{f}(\Phi^t(x))$ exists and $\bar{f}(x)=\bar{f}(\Phi^t(x))$.
\end{corollary}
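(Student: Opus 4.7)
The plan is to reduce the claim to a change of variables in the defining integral, exploiting property (iii) of Definition \ref{def:OneParameterGroup} (i.e. $\Phi^{s+t}=\Phi^s\circ\Phi^t$). Fix $x$ where $\bar f(x)$ exists and fix $t\in\mathbb{R}$. I will compute, for $T>0$ large enough that $T+t>0$,
\begin{equation*}
\frac{1}{T}\int_0^T f(\Phi^s(\Phi^t(x)))\,ds \;=\; \frac{1}{T}\int_0^T f(\Phi^{s+t}(x))\,ds \;=\; \frac{1}{T}\int_t^{T+t} f(\Phi^u(x))\,du,
\end{equation*}
via the group law and the substitution $u=s+t$.

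Next, I would rewrite this last expression as
\begin{equation*}
\frac{T+t}{T}\cdot\frac{1}{T+t}\int_0^{T+t} f(\Phi^u(x))\,du \;-\; \frac{1}{T}\int_0^{t} f(\Phi^u(x))\,du,
\end{equation*}
where for $t<0$ the second term is understood with a sign. The prefactor $(T+t)/T\to 1$, the first integral converges to $\bar f(x)$ by hypothesis, and the second term is a fixed finite quantity divided by $T$ (finiteness of $\int_0^t f(\Phi^u(x))\,du$ follows because the very existence of $\bar f(x)$ as a limit of $\tfrac{1}{T}\int_0^T f(\Phi^s(x))\,ds$ presupposes that these integrals are well-defined for all $T>0$, hence in particular on $[0,|t|]$). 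Letting $T\to+\infty$, the second term vanishes and we obtain $\bar f(\Phi^t(x))=\bar f(x)$, which in particular establishes existence of the limit on the left-hand side.

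There is no real obstacle: the argument is a single change of variables plus the observation that a bounded correction divided by $T$ disappears in the limit. The only thing to be slightly careful about is the bookkeeping when $t<0$, where one should first take $T>|t|$ so that the interval $[t,T+t]$ contains $[0,T+t]$ and the same splitting works with the appropriate sign on the boundary term. No invariance of $\mu$ or integrability in the $L^1$ sense is needed for this particular statement; it is a purely pointwise consequence of the group action.
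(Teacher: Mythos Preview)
Your argument is correct and is the standard one: the group law $\Phi^s\circ\Phi^t=\Phi^{s+t}$ reduces everything to a shift of the integration interval, and the boundary correction $\frac{1}{T}\int_0^t f(\Phi^u(x))\,du$ vanishes in the limit. The paper does not actually supply its own proof of this corollary; it simply states it and refers the reader to Cornfeld--Fomin--Sinai and Halmos for the full Birkhoff--Khinchin theorem, so there is nothing to compare against beyond noting that your approach is exactly the expected elementary one.
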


Theorem \ref{thm:ExistenceTimeAverage} and Corollary \ref{cor:TimeAverage} constitute the so-called \emph{Birkhoff-Kinchin ergodic theorem}, see Theorem 1 in \cite{Cornfeld1982} and Section ``Pointwise convergence'' in \cite{Halmos2018-jl}. 

\begin{corollary}
	Let $f \in L^1(\Omega,d\mu)$, then $\langle \bar{f} \rangle_\mu = \langle f \rangle_\mu$.
\end{corollary}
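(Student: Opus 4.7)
The key identity is that, for any $s \in \mathbb{R}$, the invariance of $\mu$ under $\Phi^s$ gives
\begin{equation*}
\langle f \circ \Phi^s \rangle_\mu \,=\, \int_\Omega f(\Phi^s(x)) \, d\mu(x) \,=\, \int_\Omega f(x) \, d\mu(x) \,=\, \langle f \rangle_\mu \, ,
\end{equation*}
simply by the change of variables $y = \Phi^s(x)$. The plan is then to integrate the defining formula \eqref{eq:DefinitionTimeAverageGeneral}, swap limit and integral, and apply this identity.

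Concretely, set $g_t(x) := \tfrac{1}{t} \int_0^t f(\Phi^s(x)) \, ds$. First I would observe, by Fubini's theorem (justified since the integrand is in $L^1([0,t] \times \Omega)$ because $\int_\Omega |f \circ \Phi^s| \, d\mu = \|f\|_{L^1}$ for every $s$), that
\begin{equation*}
\langle g_t \rangle_\mu \,=\, \frac{1}{t} \int_0^t \langle f \circ \Phi^s \rangle_\mu \, ds \,=\, \frac{1}{t} \int_0^t \langle f \rangle_\mu \, ds \,=\, \langle f \rangle_\mu \, ,
\end{equation*}
which holds for every $t > 0$. Theorem \ref{thm:ExistenceTimeAverage} guarantees that $g_t(x) \to \bar{f}(x)$ for $\mu$-a.e.~$x$, so it only remains to pass to the limit under the integral sign.

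The main obstacle is precisely this exchange, since pointwise a.e.\ convergence together with the uniform bound $\|g_t\|_{L^1} \leq \|f\|_{L^1}$ is not by itself enough to conclude $L^1$ convergence. I would handle it in two steps. For $f \in L^\infty(\Omega, d\mu)$ (in particular $f \in L^\infty \cap L^1$), one has $|g_t(x)| \leq \|f\|_{L^\infty}$ uniformly in $t$ and $x$, and $\mu$ is finite, so the dominated convergence theorem immediately yields $\langle \bar{f} \rangle_\mu = \lim_{t \to \infty} \langle g_t \rangle_\mu = \langle f \rangle_\mu$.

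For general $f \in L^1(\Omega, d\mu)$, I would extend by approximation. Pick $f_n \in L^\infty \cap L^1$ with $\|f_n - f\|_{L^1} \to 0$. Writing $\bar{f}_n$ for the corresponding time averages, the contraction property
\begin{equation*}
\int_\Omega |\bar{f}_n - \bar{f}| \, d\mu \,\leq\, \liminf_{t \to \infty} \int_\Omega |g_t^{(n)} - g_t| \, d\mu \,\leq\, \|f_n - f\|_{L^1}
\end{equation*}
(from Fatou's lemma together with $\int |g_t^{(n)} - g_t| \, d\mu \leq \|f_n - f\|_{L^1}$, itself a consequence of Fubini and invariance) shows that $\bar{f}_n \to \bar{f}$ in $L^1$. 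Since the statement has already been proved for the bounded approximants, $\langle \bar{f}_n \rangle_\mu = \langle f_n \rangle_\mu$, and passing to the limit in both sides yields $\langle \bar{f} \rangle_\mu = \langle f \rangle_\mu$.
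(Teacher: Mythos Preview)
Your proof is correct and follows essentially the same route as the paper: use invariance of $\mu$ to get $\langle f\circ\Phi^s\rangle_\mu=\langle f\rangle_\mu$, hence $\langle g_t\rangle_\mu=\langle f\rangle_\mu$ for all $t$, and then pass to the limit. The paper's argument is very terse and simply says ``taking the time-average on both sides'' without addressing the exchange of limit and integral; your two-step justification (DCT for $f\in L^\infty$, then $L^1$-approximation via the contraction $\|\bar f_n-\bar f\|_{L^1}\le\|f_n-f\|_{L^1}$) fills in exactly the point the paper glosses over.
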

\begin{proof}
Since $f \in L^1(\Omega, d \mu)$, and the measure $\mu$ is invariant, we have
\[
	\langle f \rangle_\mu= \int_{\Omega} f(x) \, d \mu(x) = \int_{\Omega} f(\Phi^t(x)) \, d \mu(x) \, .
\]
Taking now the time-average on both sides, using the fact that by Theorem~\ref{thm:ExistenceTimeAverage}, the average $\bar{f}(x)$ exists for $\mu$-a.e.~$x \in \Omega$, we complete the proof.
\end{proof}

\begin{definition}\label{def:FrequencyOfVisit}
	Let $B \subset \Omega$ be a $\mu$-measurable set. Then 
	\begin{equation}
		\chi_B(x) \,:=\, \left\{\begin{array}{lcl}
			1 & \qquad & x \in B \\
			0 & \qquad & x \notin B
		\end{array}\right.
	\end{equation}
	is the \emph{characteristic function} of $B$. The \emph{frequency of visit} of $B$ starting at $x \in \Omega$ is
	\begin{equation}
		\nu_B(x) \,:=\, \overline{\chi_B}(x) \, ,
	\end{equation}
	where the bar denotes the time average in \eqref{eq:DefinitionTimeAverageGeneral}.
\end{definition}

\begin{theorem}\label{thm:ErgodicEquivalence}
	Let $(\Omega,\Phi,\mu)$ be a dynamical system. The following are equivalent:
	\begin{itemize}
		\item[(a)] For any $f \in L^1(\Omega,d\mu)$, $\overline{f}(x)= \langle f \rangle_\mu$ for $\mu$-a.e.~$x \in \Omega$;
		\item[(b)] For any $\mu$-measurable $B$, $\nu_B(x)=\mu(B)$ for $\mu$-a.e.~$x \in \Omega$;
		\item[(c)] Let $B$ be $\mu$-measurable, if $\Phi^t(B)=B$, then $\mu(B)=0$ or $\mu(B)=1$. 
		\item[(d)] Let $f \in L^1(\Omega,d \mu)$, if for all $t \in \mathbb{R}$ and for $\mu$-a.e.~$x \in \Omega$, $f(\Phi^t(x))=f(x)$ then $f=\mathrm{const}$ for $\mu$-a.e.~$x \in \Omega$.
	\end{itemize}
\end{theorem}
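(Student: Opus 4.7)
The plan is to prove the equivalences cyclically, (a)$\Rightarrow$(b)$\Rightarrow$(c)$\Rightarrow$(d)$\Rightarrow$(a), using Theorem~\ref{thm:ExistenceTimeAverage}, Corollary~\ref{cor:TimeAverage}, and the preceding corollary relating $\langle\bar f\rangle_\mu$ to $\langle f\rangle_\mu$. All four conditions are about invariance, so each implication only needs a clean choice of test object.

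For (a)$\Rightarrow$(b), I apply (a) to $f=\chi_B\in L^1(\Omega,d\mu)$ (it is in $L^1$ because $\mu$ is finite). By Definition~\ref{def:FrequencyOfVisit}, $\overline{\chi_B}=\nu_B$, and $\langle\chi_B\rangle_\mu=\mu(B)$, so $\nu_B(x)=\mu(B)$ for $\mu$-a.e.~$x$. For (b)$\Rightarrow$(c), assume $\Phi^t(B)=B$ for all $t$. Then $\chi_B(\Phi^s(x))=\chi_B(x)$ identically, hence the time average in \eqref{eq:DefinitionTimeAverageGeneral} is simply $\nu_B(x)=\chi_B(x)$. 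Applying (b) gives $\chi_B(x)=\mu(B)$ for $\mu$-a.e.~$x$. Since $\chi_B$ takes only the values $0$ and $1$, necessarily $\mu(B)\in\{0,1\}$.

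For (c)$\Rightarrow$(d), let $f\in L^1(\Omega,d\mu)$ be $\Phi$-invariant in the sense of (d). For each $c\in\mathbb{R}$ consider the super-level set $B_c:=\{x:f(x)>c\}$. Outside a null set $N_t$, one has $x\in B_c\iff\Phi^t(x)\in B_c$, i.e.\ $\Phi^t(B_c)\triangle B_c$ is $\mu$-null for every $t$. Replacing $B_c$ by $\tilde B_c:=\bigcap_{t\in\mathbb{Q}}\Phi^t(B_c)\cap\{x:f(x)=f(\Phi^t(x))\,\forall t\in\mathbb{Q}\}$ one obtains a genuinely $\Phi^t$-invariant set for $t\in\mathbb{Q}$ with $\mu(\tilde B_c)=\mu(B_c)$; continuity of the group action on the a.e.-defined $f$ promotes invariance to all real $t$. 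Applying (c) to $\tilde B_c$ yields $\mu(B_c)\in\{0,1\}$ for every $c$. Letting $c^*:=\sup\{c:\mu(B_c)=1\}$ gives $f=c^*$ for $\mu$-a.e.~$x$. Finally, (d)$\Rightarrow$(a): given $f\in L^1$, Theorem~\ref{thm:ExistenceTimeAverage} produces $\bar f$ a.e., Corollary~\ref{cor:TimeAverage} says $\bar f(\Phi^t(x))=\bar f(x)$ wherever $\bar f$ is defined, so (d) forces $\bar f\equiv \text{const}$ a.e.; by the corollary stating $\langle\bar f\rangle_\mu=\langle f\rangle_\mu$, this constant must equal $\langle f\rangle_\mu$.

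The one step I expect to require actual care is (c)$\Rightarrow$(d): the invariance in (d) is stated only almost everywhere and for each fixed $t$, whereas (c) demands a set with $\Phi^t(B)=B$ exactly and for all $t$. The technical content is passing from an a.e.-invariant measurable function to genuinely invariant level sets, which is done by the standard trick of intersecting over a countable dense subset of times and discarding a null set; the other three implications are essentially one-line consequences of the hypothesis and the Birkhoff--Khinchin theorem.
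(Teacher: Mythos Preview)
Your proof is correct and follows a slightly different route than the paper. You close the cycle (a)$\Rightarrow$(b)$\Rightarrow$(c)$\Rightarrow$(d)$\Rightarrow$(a), whereas the paper proves (a)$\Rightarrow$(b)$\Rightarrow$(c)$\Rightarrow$(a) and then separately (c)$\Leftrightarrow$(d). Your (d)$\Rightarrow$(a) step---apply Birkhoff to get $\bar f$, invoke Corollary~\ref{cor:TimeAverage} to see $\bar f$ is invariant, use (d) to force $\bar f$ constant, and identify the constant via $\langle\bar f\rangle_\mu=\langle f\rangle_\mu$---is clean and is not in the paper; instead the paper does (c)$\Rightarrow$(a) by contrapositive, building the invariant set $C_>=\{x:\bar f(x)>\langle f\rangle_\mu\}$ and noting $0<\mu(C_>)<1$.

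For (c)$\Rightarrow$(d) the paper's argument is shorter than yours: rather than scanning all level sets $\{f>c\}$ and taking a supremum, it fixes the single threshold $\langle f\rangle_\mu$ and looks at $C_>=\{f>\langle f\rangle_\mu\}$ and $C_<=\{f<\langle f\rangle_\mu\}$; if $f$ is non-constant one of these has positive measure, and it cannot have full measure (else $\langle f\rangle_\mu$ would exceed itself), giving a non-trivial invariant set directly. On the other hand, you are more explicit than the paper about the passage from a.e.\ invariance of $f$ to exact invariance of a set---the paper simply asserts ``since $f$ is invariant, both sets are invariant'' without addressing the null-set issue. Your intersection over $t\in\mathbb{Q}$ is the right idea, though the appeal to ``continuity of the group action'' to pass to all real $t$ is itself a handwave; in a fully rigorous treatment one either assumes joint measurability of the flow or weakens (c) to sets invariant modulo null sets.
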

\begin{proof}
	$\mathrm{(a)} \Rightarrow \mathrm{(b)}$. If (a) holds for any $f \in L^1(\Omega, d\mu)$, then since $\chi_B \in L^1(\Omega, d \mu)$, (b) follows from Definition~\ref{def:FrequencyOfVisit}.
	
	$\mathrm{(b)} \Rightarrow \mathrm{(c)}$. From $\Phi^t(B)=B$ it follows that $\overline{\chi_B}(x)=\chi_B(x)$. Therefore, from (b) it follows that for $\mu$-a.e.~$x \in \Omega$,
	\[
		\mu(B)=\nu_B(x) =\overline{\chi_B}(x)=\chi_B(x) \, .
	\]
	Since by Definition~\ref{def:FrequencyOfVisit} $\chi_B(x) \in \{0,1\}$, (c) follows.
	
	$\mathrm{(c)} \Rightarrow \mathrm{(a)}$. Equivalently, we prove $\mathrm{not(a)} \Rightarrow \mathrm{not(c)}$. Suppose there exists $f \in L^1(\Omega, d \mu)$ and a measurable $C \subset \Omega$ such that, for $\mu$-a.e.~$x \in C$, $\overline{f}(x) \neq \langle f \rangle_\mu$. Then, at least one among the sets
	\[
		C_>\,:=\,\{x \in \Omega \, : \, \overline{f}(x) > \langle f \rangle_\mu\} \, , \qquad C_< \,:=\, \{x \in \Omega \, : \, \overline{f}(x) < \langle f \rangle_\mu\}
	\]
	has positive measure. Suppose that $\mu(C_>)>0$. Then, $0< \mu(C_>)<1$. Indeed, $\mu(C_>)=1$ would yield $\langle \overline{f} \rangle_\mu > \langle f \rangle_\mu$, contradicting Corollary~\ref{cor:TimeAverage}, thus $\mu(C_>) \neq 1$. We now prove that $C_>$ is an invariant set.
	\[
		\begin{split}
			\Phi^t(C_>) \,&=\, \{x \in \Omega \, : \, x=\Phi^t(y) \,, \, y \in C_> \} \\
			&=\,\{x \in \Omega \, : \, y=\Phi^{-t}(x) \, , \, \overline{f}(y) > \langle f \rangle_\mu \} \\
			&=\,\{ x \in \Omega \, : \, \overline{f}(\Phi^{-t}(x)) > \langle f \rangle_\mu \} \\
			&=\, \{x \in \Omega \, : \, \overline{f}(x) > \langle f \rangle_\mu \} \, = \, C_> \, ,
		\end{split}
	\]
	where in the last step we used Corollary~\ref{cor:TimeAverage}. The same argument works supposing instead that $\mu(C_<)>0$. Therefore $\mathrm{not(a)} \Rightarrow \mathrm{not(c)}$.
	
	$\mathrm{(c)} \Rightarrow \mathrm{(d)}$. Equivalently, we prove $\mathrm{not(d)} \Rightarrow  \mathrm{not(c)}$. Suppose $f \in L^1(\Omega, d \mu)$ invariant and $f \neq \text{const}$ a.e.~in $\Omega$. Let us define,
	\[
		C_>\,:=\,\{ x \in \Omega \, : \, f(x) > \langle f \rangle_\mu\} \, , \qquad C_<\,:= \, \{ x \in \Omega \, : \, f(x) < \langle f \rangle_\mu \} \, .
	\]
	Since $f$ is invariant, both sets are invariant. Since $\mathrm{not(d)}$, at least one among $C_>$ and $C_<$ has positive measure. Let us suppose that $\mu(C_>)>0$, then $\mu(C_>)<1$, otherwise Corollary~\ref{cor:TimeAverage} would be contradicted. Therefore, $C_>$ is a non-trivial invariant set. If, instead, we assume that $C_<$ has positive measure, the argument is the same. This completes the proof that $\mathrm{not(d)} \Rightarrow \mathrm{not(c)}$ and therefore $\mathrm{(c)} \Rightarrow \mathrm{(d)}$.
	
	$\mathrm{(d)} \Rightarrow \mathrm{(c)}$. Equivalently, we prove $\mathrm{not(c)} \Rightarrow \mathrm{not(d)}$. If there exists a non-trivial set $B$, then $\chi_B \in L^1(\Omega,d\mu)$ is invariant and is not a.e.~constant. This completes the proof.
\end{proof}

\begin{definition}
	A dynamical system is said to be \emph{ergodic} if any (and, therefore, all) of (a)-(d) in Theorem \ref{thm:ErgodicEquivalence} holds.
\end{definition}

\begin{remark}\label{rem:SimpleFunctions}
	Note that, requirements (a) and (d) in Theorem \ref{thm:ErgodicEquivalence} can be weakened. Indeed, item (c) states that we can limit ourselves to requiring (a) and (d) for simple functions or for any $L^p$ space. 
\end{remark}

Characterizations (a)--(d) in Theorem \ref{thm:ErgodicEquivalence} have a meaning in terms of dynamical constraints. Item (a) is the equality between time and space average, which is the definition of ergodicity in the spirit of Boltzmann. Item (b) establishes the equality between the measure of a set and the frequency of visit. It can be paraphrased by saying that an ergodic system explores uniformly (with respect to $\mu$) the whole set $\Omega$. Item (c) states \emph{metric indecomposability} of $\Omega$, that is: we cannot split $\Omega$ into two (or more) parts that are invariant for the dynamics and with positive measure: that is, in other words, apart from a zero-measure set, all the points belong to the same indecomposable invariant set. Item (d) states that an ergodic system does not have any $L^1$-constant of motion. This latter point is quite important since it relates the lack of ergodicity and the presence of integrals of motion. 

\begin{example}\label{ex:TranslationTorus}
	The dynamical system $(\mathbb{T}^1, \Phi, d\varphi)$ with $t \in \mathbb{Z}$ and
	\begin{equation}\Phi^j(\varphi)=\varphi+ \omega j \, \text{mod} \, \mathbb{Z}
	\end{equation}
 with $\omega \in \mathbb{R}$, is ergodic if and only if $\omega \in \mathbb{R} \setminus \mathbb{Q}$.
	
	If $\omega \in \mathbb{Q}$, then there are two integers $p,q \in \mathbb{Z}$ so that $\omega=p/q$. Then, $f(\varphi)=\cos( 2 \pi q \varphi)$ is a nontrivial integral of motion, therefore the system is not ergodic.
	
	If $\omega \in \mathbb{R}\setminus \mathbb{Q}$ then, we show that there is no $L^2(\mathbb{T}^1, d \varphi)$ integral of motion. This is sufficient because of Remark \ref{rem:SimpleFunctions} and Theorem \ref{thm:ErgodicEquivalence} (d). If $f$ is such a constant of motion, it admits a Fourier series
	\begin{equation}
		f(\varphi)=\sum_{k \in \mathbb{Z}} \hat{f}_k e^{ i 2 \pi k \varphi} \, ,
	\end{equation}
	where the series converge in $\ell^2(\mathbb{Z})$. Then, we have
	\begin{equation}
		f(\Phi^j(\varphi)) \,=\, \sum_{k \in \mathbb{Z}} \hat{f}_k e^{i 2 \pi k (\varphi + \omega)} \, ,
	\end{equation}
	and this implies that if $f(\Phi^j(\varphi))=f(\varphi)$, we have
	\begin{equation}
		\hat{f}_k(1-e^{2 \pi i k \omega}) \, =\, 0 \qquad \forall j \in \mathbb{Z} \, , 
	\end{equation}
	that is $\hat{f}_k=0$ for all $k \in \mathbb{Z}$ and, by Plancherel, $f=0$.
\end{example}

\subsection{Mixing and approach to thermal equilibrium}\label{subsec:Mixing}
Even if ergodicity is a good candidate as microscopic requirement to justify equilibrium statistical mechanics, it is still not enough to guarantee an approach to thermal equilibrium in an irreversible dynamical way (in a sense to be defined below).

We can define the macroscopic state of a system as an absolutely continuous measure with respect to the invariant measure $\mu$. That is, by Radon-Nikodym theorem (see Theorem 2.1 of \cite{Berezansky1996-xg}), there is a positive $L^1(\Omega,d \mu)$ function $\rho(x)$ (the \emph{density} of the measure) by which one can compute the value of macroscopic observables as
\begin{equation}
	\langle f \rangle_{\rho,\mu} \,:=\, \int_{\Omega} f(x) \rho(x) \, d \mu(x) \, .
\end{equation}
From this point of view, the thermal equilibrium corresponds to the macroscopic state of constant density $\rho(x)=1$.

One of the consequences of ergodicity, is the uniqueness of the equilibrium state in the following, very delicate sense. Fixing $\mu$, $\rho=1$ is the unique function in $L^1(\Omega,d\mu)$ invariant along the action of the one-parameter group $\Phi$. Still, fixing a sigma-algebra, it is possible to have different invariant measures.

\begin{xca}
	Let $\Omega \subset \mathbb{R}^n$ be a bounded and connected set and $\Phi:\mathbb{R} \times \Omega \to \Omega$ be the action of $\mathbb{R}$ on $\Omega$. Suppose that there exists $x_0 \in \Omega$ such that $\Phi^t(x_0)=x_0$ for all $t \in \mathbb{R}$. Consider $\mu$ the (normalized) Lebesgue measure and suppose that $(\Omega,\Phi,\mu)$ is ergodic. 
\begin{itemize}	
	\item[(i)] Find a normalized measure $\nu \neq \mu$ such that $(\Omega,\Phi,\nu)$ is ergodic.
	\item[(ii)] Show that, for any $a \in (0,1)$, the system $(\Omega,\Phi, (1-a)\mu + a \nu)$ is not ergodic.
\end{itemize}
\end{xca}

The dynamics generated by $\Phi$, induces a dynamics on the densities:
\begin{equation}
	\rho(t,x) \, := \, \rho(\Phi^{-t}(x)) \, .
\end{equation}
A meaningful approach to equilibrium, would be to require, in some sense, that $\rho(t,x) \underset{t \to +\infty}{\longrightarrow} 1$.

\begin{example}
	Let us consider the dynamical system in Example \ref{ex:TranslationTorus} with $\omega \in \mathbb{R} \setminus \mathbb{Q}$. It has been shown that it is ergodic. We want now to investigate if a property resembling $\rho(t,x) \to 1$ is possible. Indeed, starting from 
	\begin{equation}\label{eq:RhoExample}
		\rho(\varphi)=\left\{ \begin{array}{lll} 2 & \qquad & 0 \leq \varphi \leq \frac{1}{2} \, , \\
		0 & & \frac{1}{2} < \varphi < 1 \, .
		\end{array} \right.
	\end{equation}
	Then, $\rho(t,\varphi)$ is a rigid rotation of angle $\omega$ and has no chance to converge to $1$ in any strong sense (see Figure~\ref{fig:EvolutionDensity}).
	
	\begin{figure}[h]
		\begin{center}
			\begin{tikzpicture}[scale=0.9]

				\draw[draw=gray,fill=gray] (0,0) rectangle ++(1.5,1);		

				\draw[-] (0,0) -- (3,0);
				\draw[-] (0,-0.1) -- (0,0.1);
				\draw[-] (3,-0.1) -- (3,0.1);
				
				\node at (0,-0.3) {$0$};
				\node at (3,-0.3) {$1$};

			\draw[draw=gray,fill=gray] (5.3,0) rectangle ++(1.5,1);	

				\draw[->,thick] (3.5,0.5) -- (4.5,0.5);
				\node at (4,0.8) {$\Phi$};

				\draw[-] (5,0) -- (8,0);
				\draw[-] (5,-0.1) -- (5,0.1);
				\draw[-] (8,-0.1) -- (8,0.1);
				
				\node at (5,-0.3) {$0$};
				\node at (8,-0.3) {$1$};
				
			\draw[draw=gray,fill=gray] (10.6,0) rectangle ++(1.5,1);		
				
				\draw[->,thick] (8.5,0.5) -- (9.5,0.5);
				\node at (9,0.8) {$\Phi$};		
				
				\draw[-] (10,0) -- (13,0);
				\draw[-] (10,-0.1) -- (10,0.1);
				\draw[-] (13,-0.1) -- (13,0.1);
				
				\node at (10,-0.3) {$0$};
				\node at (13,-0.3) {$1$};				
				
			\end{tikzpicture}
		\end{center}
		\caption{Schematic dynamics of the density $\rho$ in Eq.~\eqref{eq:RhoExample}.}\label{fig:EvolutionDensity}
	\end{figure}
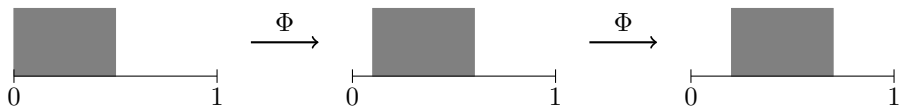
\end{example}

To take into account of the loss of memory of the initial datum, it is possible to introduce a property which is stronger than ergodicity, called \emph{mixing}. 

\begin{proposition}\label{prop:MixingEquivalence}
	Let $(\Omega,\Phi,\mu)$ be a dynamical system. The following two conditions are equivalent:
	\begin{itemize}
		\item[(a)] for any measurables $A,B \subset \Omega$,
		\begin{equation}
			\lim_{t \to +\infty} \mu(\Phi^{-t}(A) \cap B) = \mu(A) \mu(B) \, ;
		\end{equation}
		\item[(b)] for any pair of functions $f,g \in L^2(\Omega,d \mu)$,
		\begin{equation}
			\lim_{t \to +\infty} \langle (f \circ \Phi^t) g \rangle_\mu = \langle f \rangle_\mu \langle g \rangle_\mu \, .
		\end{equation}
	\end{itemize}
\end{proposition}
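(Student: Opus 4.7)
The plan is to prove the equivalence by the standard ``specialize then extend'' strategy: direction $\mathrm{(b)} \Rightarrow \mathrm{(a)}$ follows immediately by plugging in indicator functions, while $\mathrm{(a)} \Rightarrow \mathrm{(b)}$ requires extending from indicators to simple functions by linearity, and then to all of $L^2$ by density.

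For $\mathrm{(b)} \Rightarrow \mathrm{(a)}$, I would choose $f = \chi_A$ and $g = \chi_B$. Since $\mu$ is a probability measure, $\chi_A, \chi_B \in L^2(\Omega, d\mu)$. Observing that $(\chi_A \circ \Phi^t)(x) = \chi_A(\Phi^t(x)) = \chi_{\Phi^{-t}(A)}(x)$, we have $(f \circ \Phi^t) g = \chi_{\Phi^{-t}(A) \cap B}$. Taking the $\mu$-integral of this product and of each factor separately yields the statement in (a).

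For $\mathrm{(a)} \Rightarrow \mathrm{(b)}$, first note that (a) rewrites as $\langle (\chi_A \circ \Phi^t) \chi_B \rangle_\mu \to \langle \chi_A \rangle_\mu \langle \chi_B \rangle_\mu$. By bilinearity of $(f,g) \mapsto \langle (f \circ \Phi^t) g \rangle_\mu$ and of $(f,g) \mapsto \langle f \rangle_\mu \langle g \rangle_\mu$, the convergence extends to all pairs of simple functions $f = \sum_i a_i \chi_{A_i}$, $g = \sum_j b_j \chi_{B_j}$. Then I would approximate general $f, g \in L^2(\Omega,d\mu)$ by simple functions $f_n, g_n$ with $\|f - f_n\|_{L^2}, \|g - g_n\|_{L^2} \to 0$, and split
\begin{equation*}
\langle (f \circ \Phi^t) g \rangle_\mu - \langle f \rangle_\mu \langle g \rangle_\mu \,=\, \bigl[\langle (f \circ \Phi^t) g \rangle_\mu - \langle (f_n \circ \Phi^t) g_n \rangle_\mu\bigr] + \bigl[\langle (f_n \circ \Phi^t) g_n \rangle_\mu - \langle f_n \rangle_\mu \langle g_n \rangle_\mu\bigr] + \bigl[\langle f_n \rangle_\mu \langle g_n \rangle_\mu - \langle f \rangle_\mu \langle g \rangle_\mu\bigr].
\end{equation*}
The middle term vanishes as $t \to +\infty$ for each fixed $n$ by the simple-function case. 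The first and third terms are controlled uniformly in $t$ by Cauchy--Schwarz together with the crucial $\Phi$-invariance of $\mu$, which yields $\|f \circ \Phi^t\|_{L^2} = \|f\|_{L^2}$; thus, e.g., $|\langle (f \circ \Phi^t) g \rangle_\mu - \langle (f_n \circ \Phi^t) g_n \rangle_\mu| \leq \|f - f_n\|_{L^2} \|g\|_{L^2} + \|f_n\|_{L^2} \|g - g_n\|_{L^2}$, which can be made arbitrarily small by choosing $n$ large, independently of $t$.

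The main technical point to get right is the order of limits: one first fixes $n$ large enough that the uniform-in-$t$ errors from the approximation are below $\varepsilon/2$, and only then sends $t \to +\infty$ to kill the middle term. Everything else is bookkeeping; the substantive input is that $\mu$-invariance of $\Phi$ makes $f \mapsto f \circ \Phi^t$ an $L^2$-isometry, which is what allows the approximation to be uniform in $t$.
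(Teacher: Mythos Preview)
Your proposal is correct and follows exactly the same strategy as the paper: specialize to indicators for $\mathrm{(b)}\Rightarrow\mathrm{(a)}$, then for $\mathrm{(a)}\Rightarrow\mathrm{(b)}$ pass to simple functions by bilinearity and conclude by density of simple functions in $L^2$. The paper's proof is very terse---it simply asserts ``Since simple functions are dense in $L^2(\Omega,d\mu)$, one concludes''---whereas you have correctly identified and spelled out the key technical point the paper leaves implicit, namely that $\Phi$-invariance of $\mu$ makes $f\mapsto f\circ\Phi^t$ an $L^2$-isometry, which is precisely what gives the uniform-in-$t$ control needed for the density argument to go through.
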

\begin{proof}
	(a) is a special case of (b), corresponding to the case $f=\chi_A$ and $g=\chi_B$, therefore (b) implies (a). 
	
	For the opposite inclusion, (b) implies that (a) holds for simple functions. Since simple functions are dense in $L^2(\Omega,d\mu)$, one concludes.
\end{proof}

\begin{definition}
	A dynamical system is said to be \emph{mixing} if either (a) or (b) (and then, both) in Proposition \ref{prop:MixingEquivalence} holds.
\end{definition}

We now show that mixing is stronger than ergodicity.

\begin{lemma}
	If $(\Omega,\Phi,\mu)$ is mixing, then it is ergodic.
\end{lemma}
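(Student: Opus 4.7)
The plan is to verify characterization (c) of Theorem~\ref{thm:ErgodicEquivalence}, namely that every $\Phi$-invariant measurable set has measure $0$ or $1$. This is by far the cleanest bridge between the mixing definition and ergodicity, since it only requires testing property (a) of Proposition~\ref{prop:MixingEquivalence} on a single set.

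Concretely, let $B \subset \Omega$ be measurable with $\Phi^t(B) = B$ for all $t \in \mathbb{R}$. Because the action is invertible (Definition~\ref{def:OneParameterGroup}), invariance under $\Phi^t$ for every $t$ immediately gives $\Phi^{-t}(B) = B$ as well, and therefore
\begin{equation*}
\mu(\Phi^{-t}(B) \cap B) \;=\; \mu(B \cap B) \;=\; \mu(B) \qquad \text{for all } t \in \mathbb{R}.
\end{equation*}
Now I would apply the mixing hypothesis (Proposition~\ref{prop:MixingEquivalence}(a)) with the choice $A = B$, letting $t \to +\infty$:
\begin{equation*}
\mu(B) \;=\; \lim_{t \to +\infty} \mu(\Phi^{-t}(B) \cap B) \;=\; \mu(B)\,\mu(B) \;=\; \mu(B)^2.
\end{equation*}
The equation $x = x^2$ on $[0,1]$ forces $\mu(B) \in \{0,1\}$, which is exactly condition (c) of Theorem~\ref{thm:ErgodicEquivalence}. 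By the equivalence established there, the system is ergodic.

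There is no real obstacle here: the proof is essentially a one-line application of mixing to an invariant set, with the crucial trick being to use the same set $B$ on both sides of the intersection so that the limit $\mu(A)\mu(B)$ collapses to $\mu(B)^2$. The only subtlety worth flagging is that one must invoke the group property to ensure $\Phi^{-t}(B) = B$, which is why the definition of mixing phrased with $\Phi^{-t}$ (rather than $\Phi^t$) poses no obstruction.
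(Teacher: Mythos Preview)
Your proof is correct and follows exactly the same route as the paper's: verify condition~(c) of Theorem~\ref{thm:ErgodicEquivalence} by applying the mixing property to an invariant set $B$ with $A=B$, so that $\mu(B)=\mu(\Phi^{-t}(B)\cap B)\to\mu(B)^2$. The paper's version is simply a one-line compression of what you wrote.
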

\begin{proof}
	We show property (c) of Theorem \ref{thm:ErgodicEquivalence}. Let $A$ be a measurable invariant set. Then,
	\[
		\mu(A)=\mu(A \cap A) = \mu(\Phi^{-t}(A) \cap A)=\mu^2(A) \, ,
	\]
	the latter means that $\mu(A)=0$ or $\mu(A)=1$, thus proving (c) of Theorem \ref{thm:ErgodicEquivalence}.
\end{proof}

\begin{xca}
	Show rigorously that irrational translations on the one-dimensional torus are not mixing. Indeed, consider $\Omega=\mathbb{T}^1$, $d \mu = 
	d \varphi$, $\Phi^j(\varphi)=\varphi+ \omega j \, \mathrm{mod} \, \mathbb{Z}$. Show that if $\omega \in \mathbb{R} \setminus \mathbb{Q}$, and if
\begin{equation*}
	u_k(\varphi) \,=\, e^{i 2 \pi k \varphi}
\end{equation*}
then 
\begin{equation*}
	\lim_{j \to +\infty} \langle (u_k \circ \Phi^j) u_{-k} \rangle_\mu \, = \, \text{does not exist} \, . 
\end{equation*}
\end{xca}

The two (equivalent) conditions characterizing mixing in Proposition \ref{prop:MixingEquivalence} have a very natural interpretation. Item (a) encodes the idea that when $t \to +\infty$ the time evolution diffuses uniformly in $\Omega$ all the points in $A$. This is the intuitive idea of \emph{irreversible dynamics}: if the particles of a gas are originally placed in the corner of a room, they will eventually diffuse uniformly in the room. Item (b) is a property of \emph{decorrelation} of observables and encodes the idea of \emph{loss of memory} along time evolution.

In particular, from property (b) we can easily read out the approach to thermal equilibrium for observables:
\begin{equation}
	\begin{split}
		\langle f \rangle_{\rho(t),\mu}&=\int_{\Omega} f(x) \rho(t,x) \, d \mu(x) = \int_{\Omega} f(\Phi^t(x)) \rho(x) \, d \mu(x) \\
		&\underset{t \to +\infty}{\longrightarrow} \langle f \rangle_\mu \langle \rho \rangle_\mu = \langle f \rangle_\mu \, .
		\end{split}
\end{equation}
This latter consideration suggests that mixing implies approach to equilibrium $\rho(t,x) \to 1$ in a \emph{weak sense}.

\subsection{General comments and further results} Before moving to the analysis of Hamiltonian systems,
 we shall state one last theorem that is at the basis of the formulation of Zermelo's paradox (see Paradox \ref{par:Zermelo}). 

\begin{theorem}[Poincar\'e recurrence] Let $(\Omega,\Phi,\mu)$ be a dynamical system with $\Omega$ bounded. Let $A \subset \Omega$ be measurable and $\mu(A)>0$. Then, for $\mu$-a.e.~$x \in A$ and for any $T>0$, there exists $t > T$ so that $\Phi^t(x) \in A$.
\end{theorem}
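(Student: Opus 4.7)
The plan is to fix $T>0$, show that the ``bad'' set
\[
B_T \,:=\, \{x \in A \,:\, \Phi^t(x) \notin A \text{ for all } t > T\}
\]
has $\mu$-measure zero, and then take a countable union over $T \in \mathbb{N}$ to produce a single $\mu$-null subset of $A$ outside of which the stated recurrence property holds for every real $T>0$. The boundedness of $\Omega$ (together with $\mu(\Omega)=1$) will enter as a finiteness-of-measure input that makes an infinite disjoint union of sets of equal positive measure impossible.

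To estimate $\mu(B_T)$, I would fix any $\tau>T$ (for concreteness $\tau = 2T$) and set $\Psi := \Phi^\tau$, which is measure-preserving since $\mu$ is invariant along the one-parameter group $\Phi$. Observe first that if $x \in B_T$ and $k \geq 1$, then $k\tau \geq \tau > T$, so by definition $\Psi^k(x) = \Phi^{k\tau}(x) \notin A$; in particular $\Psi^k(x) \notin B_T$. The key step is then to show that the preimages $\{\Psi^{-k}(B_T)\}_{k \geq 0}$ are pairwise disjoint: indeed, if $y \in \Psi^{-m}(B_T) \cap \Psi^{-n}(B_T)$ with $m>n \geq 0$, setting $x := \Psi^n(y) \in B_T$ gives $\Psi^m(y) = \Psi^{m-n}(x) \in B_T \subset A$, contradicting the preceding observation applied with $k = m-n \geq 1$. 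Invariance of $\mu$ under $\Psi$ gives $\mu(\Psi^{-k}(B_T)) = \mu(B_T)$ for all $k$, and $\sigma$-additivity combined with the disjointness yields
\[
\sum_{k=0}^\infty \mu(B_T) \,=\, \mu\!\left(\bigsqcup_{k=0}^\infty \Psi^{-k}(B_T)\right) \,\leq\, \mu(\Omega) \,=\, 1,
\]
which forces $\mu(B_T)=0$.

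To upgrade this to a single bad set independent of $T$, I would set $N := \bigcup_{n=1}^\infty B_n$, so that $\mu(N) = 0$; for any $x \in A \setminus N$ and any real $T>0$, choosing an integer $n > T$ and using $x \notin B_n$ produces $t > n > T$ with $\Phi^t(x) \in A$. The main obstacle, if any, is the bookkeeping in the disjointness argument: one must be careful that it is the \emph{preimages} of $B_T$ that are naturally disjoint (preimages are forced out of $B_T$ by its defining property), while measure invariance then silently transports this into the equality $\mu(\Psi^{-k}(B_T)) = \mu(B_T)$ that drives the summation to zero. Everything else is formal manipulation with the group property of $\Phi$ and the normalization $\mu(\Omega)=1$.
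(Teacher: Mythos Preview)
The paper states Poincar\'e recurrence without proof, so there is no argument to compare against. Your proof is the standard one and is correct: the disjointness of the preimages $\{\Psi^{-k}(B_T)\}_{k\geq 0}$ combined with invariance of $\mu$ and $\mu(\Omega)=1$ forces $\mu(B_T)=0$, and the countable union over integer thresholds handles all real $T>0$. You also correctly spot that the boundedness hypothesis is really a proxy for finiteness of the total measure, which the paper already assumes via the normalization $\mu(\Omega)=1$.

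One small point worth flagging in a fully rigorous write-up (though the paper's informal style would not demand it): in the continuous-time case $\mathbb{M}=\mathbb{R}$, the set $B_T = A \setminus \bigcup_{t>T}\Phi^{-t}(A)$ is defined by an uncountable union, so its measurability is not automatic from the sigma-algebra axioms alone and implicitly uses joint measurability of the flow $(t,x)\mapsto \Phi^t(x)$. This is a routine technicality and does not affect the substance of your argument.
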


In the previous paragraphs, we saw that a reasonable definition of macroscopic state, is given by positive $\rho \in L^1(\Omega, d \mu)$ with $\langle \rho \rangle_\mu=1$. Accepting this definition and further requiring the microscopic dynamics to be mixing, implies that the macroscopic dynamics is irreversible while the microscopic one is recurrent. This picture is not contradictory because the recurrence times \emph{depend} on the initial datum and thus, close initial data, can have different recurrence times. Recurrence of macroscopic state would be possible in the case of coherent recurrence of the microscopic states that compose it. 

Proving that a physically meaningful system is ergodic or mixing is a quite hard task, especially for systems made by large number of particles. To the best of my knowledge, it has only recently been possible to prove that a hard-sphere gas is ergodic (see \cite{Simnyi2004}).

\subsection{Basic properties of Hamiltonian systems}
In classical mechanics, the state of a system of particles is defined once all positions and momenta are known. Then, time evolution is generated by Hamilton equations. The space of all possible states is often referred to as the \emph{phase space} of the system. We denote by $\Gamma \subset \mathbb{R}^{2N}$ the phase space of the system and we denote by $(q,p) \in \Gamma$ the state of a system.

The Hamiltonian is a real-valued function on the phase space of the system. $H(q,p)$ has the meaning of the \emph{energy} of the state $(q,p) \in \Gamma$.

Denoting with a dot the time derivative, Hamilton equations are
\begin{equation}\label{eq:HamiltonEquationsGeneral}
	\begin{split}
			\dot{q}(t) \, & = \, \frac{\partial H}{\partial p}(q(t),p(t)) \, , \\	
		\dot{p}(t) \, &= \, - \frac{\partial H}{\partial q}(q(t),p(t)) \, .
	\end{split}
\end{equation}
We assume that the solution of these equations exists and is unique for all $(q,p) \in \Gamma$. The flow associated to the Hamiltonian $H$, that is the map that associates to an initial datum $(q_0,p_0) \in \Gamma$ the solution of the Hamilton equations at time $t$, is denoted by $\Phi_H^t(q_0,p_0)=(q(t),p(t))$. 

By Liouville's theorem (see Section 16 in \cite{Arnold1989}), the Lebesgue measure is invariant under $\Phi_H$ and therefore, the triple $(\Gamma, \Phi_H, dq dp)$ is a dynamical system in the sense of Definition \ref{def:DynSist}. Here, and throughout all these notes, we will denote by $dqdp=\prod_{j=1}^N dq_j dp_j$. 

Hamilton equations can be written in a more concise form by introducing the Poisson tensor
\begin{equation}
	J\,=\, \begin{pmatrix}
		\mathbb{O} & \mathbbm{1} \\
		-\mathbbm{1} & \mathbb{O}
	\end{pmatrix} \, ,
\end{equation}
and, denoting by $z=(q,p) \in \Gamma$, the system \eqref{eq:HamiltonEquationsGeneral} is equivalent to
\begin{equation}
	\dot{z} \,=\, J \nabla_z H(z(t)) \, .
\end{equation}

\begin{definition}
	Given two smooth functions $F,G: \Gamma \to \mathbb{R}$, their Poisson bracket is the smooth function defined as
	\begin{equation}
		\{F,G\}(z) \,:=\, \nabla_z F(z) \cdot J \nabla_z G(z) \, = \, \sum_{j,\ell=1}^{2N} \frac{\partial F(z)}{\partial z_j} J_{j,\ell} \frac{\partial G(z)}{\partial z_\ell} \, .
	\end{equation}
	where $\cdot$ denotes the standard Euclidean scalar product.
\end{definition}

\begin{lemma}
	Given $F_1,F_2,F_3 \in C^\infty(\Gamma)$ and $a,b \in \mathbb{R}$, the Poisson brackets satisfy the following properties:
	\begin{itemize}
		\item[(a)] $\{aF_1+bF_2,F_3\}=a\{F_1,F_3\}+b\{F_2,F_3\}$ (left linearity);
		\item[(b)] $\{F_1,F_2\}=-\{F_2,F_1\}$ (antisymmetry);
		\item[(c)] $\{F_1,F_2F_3\}=\{F_1,F_2\}F_3+F_2\{F_1,F_3\}$ (Leibnitz rule);
		\item[(d)] $\{\{F_1,F_2\},F_3\}+\{\{F_2,F_3\},F_1\}+\{\{F_3,F_1\},F_2\}=0$ (Jacobi identity).
	\end{itemize}
\end{lemma}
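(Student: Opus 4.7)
Properties (a), (b), (c) follow immediately from the coordinate definition $\{F,G\}(z) = \sum_{j,\ell}\partial_j F \, J_{j\ell}\, \partial_\ell G$ together with the fact that $\nabla$ is linear and satisfies the Leibniz rule on products. For (a) I would simply use linearity of the partial derivative: $\partial_j(aF_1+bF_2) = a\partial_j F_1 + b \partial_j F_2$, and factor. For (b) I would use that $J^\top = -J$ (as is visible from its block form), so
\[
\{F_1,F_2\} = (\nabla F_1)\cdot J \nabla F_2 = -(\nabla F_1)\cdot J^\top \nabla F_2 = -(\nabla F_2)\cdot J \nabla F_1 = -\{F_2,F_1\}.
\]
For (c) I would compute $\partial_\ell(F_2 F_3) = (\partial_\ell F_2)F_3 + F_2 (\partial_\ell F_3)$ and split the sum defining the bracket accordingly.

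The main work is the Jacobi identity (d). The plan is to expand the three cyclic terms in coordinates and observe that all the second-derivative contributions cancel. Writing $J_{ab}$ for the constant entries of the Poisson tensor, I would expand
\[
\{\{F_1,F_2\},F_3\} = \sum_{a,b,c,d} J_{ab} J_{cd} \bigl(\partial_a \partial_c F_1 \cdot \partial_b F_2 + \partial_c F_1 \cdot \partial_a \partial_b F_2 \bigr)\, \partial_d F_3,
\]
so each term in the cyclic sum is a combination involving second derivatives of exactly two of the three functions and a first derivative of the remaining one. Collecting the six resulting terms by which pair of functions carries the second derivatives (and relabeling dummy indices using the antisymmetry $J_{ab}=-J_{ba}$), I would show that the pieces cancel in pairs. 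The key identity that makes the cancellation work is the skew-symmetry of $J$: for any smooth $\Phi$, the expression $\sum_{a,b,c,d} J_{ab} J_{cd} (\partial_a \partial_c \Phi)(\cdot)(\cdot)$ becomes antisymmetric under swapping $(a,c) \leftrightarrow$ the corresponding indices from another cyclic term, so the two contributions where $F_i$ carries the second derivatives annihilate each other.

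A conceptually cleaner alternative, which I would probably present, is to introduce the Hamiltonian vector field $X_F := J\nabla F$ acting as a derivation via $X_F G = \{F,G\}$. Then (c) reads $X_{F_1}(F_2F_3) = (X_{F_1}F_2)F_3 + F_2(X_{F_1}F_3)$, and the Jacobi identity is equivalent to the statement
\[
[X_{F_1},X_{F_2}]\, F_3 = X_{\{F_1,F_2\}}\, F_3,
\]
which one verifies by direct expansion in coordinates: again the second-derivative terms in $[X_{F_1},X_{F_2}]$ cancel because $J$ has constant entries and is antisymmetric, leaving exactly the first-order operator associated to $\{F_1,F_2\}$. The hardest bookkeeping step is tracking indices and signs in the cancellation; this is where a systematic use of the antisymmetry of $J_{ab}$ and the symmetry of mixed partials $\partial_a\partial_b = \partial_b\partial_a$ is essential.
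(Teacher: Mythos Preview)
Your argument is correct and follows the standard route; the paper itself states this lemma without proof, so there is nothing to compare against. Your treatment of (a)--(c) is exactly what one would write, and both of your approaches to (d) are valid: the direct coordinate expansion works because $J$ has constant entries (so no derivatives fall on $J_{ab}$) and is antisymmetric, while the Hamiltonian-vector-field reformulation is indeed cleaner and would be the preferred presentation in a lecture-notes setting like this one.
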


Poisson brackets are useful for several reasons. First, an equivalent way of writing Hamilton equations is
\begin{equation}
	\dot{z}(t)\,=\,\{z,H\}\,=\, J \nabla_z H \, ;
\end{equation}
second, the total derivative of a function along the flow of $H$ is
\begin{equation}
	\dot{F}(t) \,=\, \nabla_z F(z(t)) \cdot \dot{z}(t) \,=\, \nabla_zF(z(t)) \cdot J \nabla_z H(z(t)) \, = \, \{F,H\}(z(t)) \, .
\end{equation}
This last comment justifies the following definition of integral of motion.

\begin{definition}
	$F \in C^\infty(\Gamma)$ is said to be an integral of motion for the Hamiltonian system of Hamiltonian $H$ if $\{F,H\}=0$.
\end{definition}
It follows then from antisymmetry of the Poisson brackets that $\{H,H\}=0$ and therefore $H$ is an integral of motion. Due to the presence of an integral of motion, by Theorem \ref{thm:ErgodicEquivalence}-(c), a Hamiltonian system cannot be ergodic on the whole phase space. Nevertheless, if $E \in \mathbb{R}$ is a regular value of $H$, so that the set $\{H(q,p)=E\}$ is a manifold, we can still hope to prove ergodicity on that set. Indeed:
\begin{itemize}
	\item The set $\{H(q,p)=E\}$ is left invariant by the dynamics and under suitable hypotheses on $H$ is compact.
	\item The restriction of the Lebesgue measure on the set $\{H(q,p)=E\}$ is left invariant by the Hamiltonian flow as a consequence of Liouville theorem (see Section 16 in \cite{Arnold1989}). 
\end{itemize}
Note that, if two functions are such that $\{F,G\}=0$, then their Hamiltonian vector fields commute, that is $[J \nabla F, J \nabla G]=0$, and the associated flows commute as well. 

It is customary to refer to the Poisson brackets $\{z_j,z_\ell\}$ as to the \emph{fundamental Poisson brackets}. In general, we have
\begin{equation}
	\{q_j,q_\ell\}\,=\, 0 \,=\, \{p_j,p_\ell\} \, , \qquad \{q_j,p_\ell\}\,=\, \delta_{j,\ell} \, .
\end{equation}
The reason for this definition is that, for smooth functions, one has
\begin{equation}
	\{F,G\} \, = \, \sum_{j,\ell=1}^{2n} \frac{\partial F}{\partial z_j} \frac{\partial G}{\partial z_\ell} \{z_j, z_\ell\} \, ,
\end{equation}
and therefore the Poisson brackets among every pair of functions can be reconstructed once the fundamental ones are defined.

Last, from Poisson brackets one can introduce the concept of Lie derivative.
\begin{definition}\label{def:PoissonBracket}
Given two functions $F,G \in C^\infty(\Gamma)$, the Lie derivative of $F$ with respect to $G$ is the smooth function $\mathcal{L}_G(F)$ defined as
\begin{equation}
	\mathcal{L}_G(F) \, :=\, \{F,G\} \, .
\end{equation}
\end{definition}
Within this last definition, we can write the flow of the Hamiltonian system as
\begin{equation}\label{eq:FlowLieDer}
	\Phi_H^t(z) \, =\, e^{t \mathcal{L}_H}z \, .
\end{equation}

\subsection{Microcanonical and canonical ensembles}
In this section, for concreteness, we consider Hamiltonians of the form
\begin{equation}
	H(q,p; N, V) \, = \, \sum_{j=1}^N \frac{p_j^2}{2} + \sum_{j=1}^N \Phi_V(q_j) + \frac{1}{2} \sum_{j,\ell=1}^N \Phi_{\text{int}}(q_j-q_\ell)
\end{equation}
where 
\begin{itemize}
	\item $(q,p) \in \Gamma \subset \mathbb{R}^{6N}$ are points on the phase space of the system;
	\item $\Phi_V$ denotes a \emph{confining potential} that can be thought of as a ``smoothing'' of the potential that is $+\infty$ outside the box of volume $V$ and $0$ inside;
	\item $\Phi_{\text{int}}$ is an inter-particles interacting potential that is assumed to be bounded from below.
\end{itemize}
Under previous assumptions, for every value $E \geq 0$, the set $\{H(q,p;N,V)=E\}\subset \Gamma \subset \mathbb{R}^{6N}$ is compact. 

\begin{definition}
	If $E$ is a non-singular value of $H(p,q;N,V)$, the \emph{microcanonical measure} is defined as
	\begin{equation}
		d \mu_{\text{mc}}(E,V,N)\,:=\, \frac{1}{Z_{\text{mc}}(E,V,N)} \delta(H(p,q;V,N)-E) \, dp dq \, ,
	\end{equation}
	where $Z_{\text{mc}}(E,V,N)$ is the $6N-1$ Lebesgue measure of $\{H(p,q;V,N)=E\}$.
\end{definition}

\begin{lemma}
	The microcanonical measure $\mu_{\text{mc}}(E,V,N)$ is invariant under the flow of $H(p,q;V,N)$.
\end{lemma}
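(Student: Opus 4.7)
The plan is to exploit two facts already established in the notes: by Liouville's theorem the Lebesgue measure $dp\,dq$ on $\Gamma$ is invariant along $\Phi_H^t$, and since $\{H,H\}=0$ the Hamiltonian is itself an integral of motion, so $H\circ\Phi_H^t = H$. Both factors in the density of $\mu_{\text{mc}}$, namely $\delta(H(q,p)-E)$ and $dp\,dq$, are therefore individually preserved, while the normalization $Z_{\text{mc}}(E,V,N)$ is just a constant. I would organize the argument around the equivalent weak formulation: show that
\begin{equation*}
\int_\Gamma f(\Phi_H^t(q,p))\, d\mu_{\text{mc}}(q,p) \,=\, \int_\Gamma f(q,p)\, d\mu_{\text{mc}}(q,p)
\end{equation*}
for every bounded continuous $f:\Gamma\to\mathbb{R}$ and every $t\in\mathbb{R}$, which is equivalent to $\mu_{\text{mc}}(\Phi_H^t(A)) = \mu_{\text{mc}}(A)$ for all measurable $A$.

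Concretely, I would perform the change of variables $(q',p'):=\Phi_H^t(q,p)$ in the left-hand side. Liouville gives $dq\,dp = dq'\,dp'$, and energy conservation gives $H(q,p)=H(\Phi_H^{-t}(q',p'))=H(q',p')$. Thus
\begin{equation*}
\int_\Gamma f(\Phi_H^t(q,p))\, \frac{\delta(H(q,p)-E)}{Z_{\text{mc}}}\, dq\,dp \,=\, \int_\Gamma f(q',p')\, \frac{\delta(H(q',p')-E)}{Z_{\text{mc}}}\, dq'\,dp',
\end{equation*}
which is the desired identity.

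The only—rather minor—obstacle is giving a rigorous meaning to this formal manipulation of $\delta(H-E)$. I would sidestep it by regularizing. Since $E$ is a regular value, $\{H=E\}$ is a smooth compact hypersurface and one has the weak-$*$ limit
\begin{equation*}
\mu_{\text{mc}} \,=\, \lim_{\epsilon\to 0^+} \frac{1}{2\epsilon\, Z_{\text{mc}}}\, \chi_{\{|H-E|<\epsilon\}}(q,p)\, dq\,dp.
\end{equation*}
Each approximating measure is invariant by the two-step argument above applied to the bona fide function $\chi_{\{|H-E|<\epsilon\}}$: energy conservation gives $\chi_{\{|H-E|<\epsilon\}}\circ\Phi_H^t = \chi_{\{|H-E|<\epsilon\}}$, and Liouville gives $\Phi_H^t$-invariance of $dq\,dp$. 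Passing to the weak-$*$ limit preserves invariance, yielding the claim. Equivalently, one can invoke the coarea formula to identify $d\mu_{\text{mc}}$ with the induced Riemannian hypersurface measure on $\{H=E\}$ weighted by $|\nabla H|^{-1}/Z_{\text{mc}}$, after which invariance follows from the fact that $\Phi_H^t$ restricts to a smooth diffeomorphism of $\{H=E\}$ preserving the ambient symplectic volume.
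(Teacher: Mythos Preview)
Your proof is correct; the argument via Liouville's theorem plus conservation of $H$ (so that both the density $\delta(H-E)$ and the volume form $dq\,dp$ are separately preserved), together with the regularization to handle the $\delta$ rigorously, is the standard route. The paper actually states this lemma without proof, so there is nothing to compare against; your write-up simply supplies the details the notes omit.
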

	
As discussed in Subsection~\ref{subsec:Mixing}, invariant measures on the phase space are good definitions of macroscopic states corresponding to thermal equilibrium. The macroscopic state defined from the microcanonical ensemble is referred to as the \emph{microcanonical ensemble}. 

In principle, one is free to choose his own invariant measure, does it describe in \emph{any} sense a theory compatible with thermodynamics? The question is subtle and is solved by requiring that observables computed with respect to the invariant measure describes a theory which is compatible with the principles of thermodynamics (the so-called orthodicity). This delicate point is discussed exhaustively in Chapter 2 of \cite{Gallavotti1999}. For the purposes of these notes, it is sufficient to state, as a definition, the \emph{entropy} in the microcanonical ensemble:
\begin{definition}
	The \emph{entropy} of a system at thermal equilibrium whose macroscopic state is given by the microcanonical measure is given by
	\begin{equation}
		S(E,V,N)\,=\, k_B \ln (Z_{\mathrm{mc}}(E,V,N)) \, .
	\end{equation}
	The temperature is defined as
	\begin{equation}\label{eq:TemperaturaMicro}
		\frac{1}{T(E,V,N)} \, :=\, \frac{\partial S}{\partial E}(E,V,N) \, ,
	\end{equation}
	and the pressure as 
	\begin{equation}
		P(E,V,N)\,=\,T(E,V,N) \frac{\partial S}{\partial V}(E,V,N) \, .
	\end{equation}
\end{definition}

Then, to check whether thermodynamics defined by $\mu$ is compatible with standard principle of thermodynamics we shall define the internal energy of the gas
\begin{definition}
	Let $\mu(X_1,\dots,X_n)$ be an invariant measure depending on the parameters $X_1,\dots,X_n$. The \emph{internal energy} of the gas is defined as
	\begin{equation}
		U(X_1,\dots,X_n)\,:=\, \langle H \rangle_{\mu(X_1,\cdots,X_n)} \, ,
	\end{equation}
	where the dependence of the Hamiltonian $H$ on the parameters has not been written explicitly.
\end{definition}

For example, in the case of the microcanonical measure, we have $U(E,V,N)=E$.

\begin{definition}
The \emph{thermodynamic limit} of $F(N,V,E)\,:=\, \langle f \rangle_{\mu_{\text{mc}}(N,V,E)}$ is defined as
\begin{equation}
	\mathsf{f}(v,\epsilon) \,:=\, \lim_{\substack{N\to +\infty \\ E/N=\epsilon \\ V/N=v}} F(N,V,E) \, .
\end{equation}
\end{definition}

Since for Hamiltonian systems, the microcanonical measure is the invariant measure describing the thermodynamic equilibrium of an isolated system. In experiments, this is not always the case: it is often easier to keep the total temperature of the system fixed, allowing exchanges of energy between the system and the environment outside. From a statistical point of view, we assume that the equilibrium state of a system which is kept at a fixed temperature $T=\frac{1}{k_B \beta}>0$ is described by the \emph{canonical ensemble}.

\begin{definition}\label{def:CanonicalPF}
	Let $\beta>0$ and $H(q,p; N,V)>-C N$ for some $C>0$. Then, the \emph{canonical partition function} is defined as
	\begin{equation}
		Z_{\text{can}}(\beta,V,N)\,:=\, \int_{\mathbb{R}^{6N}} e^{-\beta H(q,p;V,N)} \, dq dp
	\end{equation}
	and the \emph{canonical measure} is defined as
	\begin{equation}
		d \mu_{\text{can}}(\beta,V,N)\,:=\, \frac{1}{Z_{\mathrm{can}}(\beta,V,N)}e^{-\beta H(q,p;V,N)} dq dp \, .
	\end{equation}
\end{definition}

\begin{definition}
	The \emph{Helmholtz free energy} of a system at thermal equilibrium whose macroscopic state is given by the canonical measure at (inverse) temperature $\beta$ is given by
	\begin{equation}
		A(\beta,V,N)\,=\,-\frac{1}{\beta} \ln (Z_{\mathrm{can}}(\beta,V,N)) \, .
	\end{equation}
	The entropy is defined as
	\begin{equation}
		S(\beta,V,N)\,:=\, k_B \beta^2 \frac{\partial A}{\partial \beta}(\beta,V,N) \, ,
	\end{equation}
	and the pressure as
	\begin{equation}
		P(\beta,V,N) \, :=\, -\frac{\partial A}{\partial V}(\beta,V,N) \, .
	\end{equation}
\end{definition}

When $\Phi_{\mathrm{int}}$ is a short-range interaction, i.e.~it decays quickly enough when the positions of the two particles are far apart, then the canonical and the microcanonical ensemble are \emph{equivalent}. This means that all the thermodynamic quantities have the same values if computed \emph{in the thermodynamic limit} when in the limiting procedure the temperature of the canonical ensemble is $\beta=\frac{1}{k_B T}$ with $T$ given by \eqref{eq:TemperaturaMicro} and the energy of the microcanonical ensemble is $E=\langle H \rangle_{\mu_{\text{can}}(\beta,V,N)}$.

It can be observed that ensemble equivalence can be required for the values of observables only. Indeed, if one considers $\mu$ as a probability distribution on $\Gamma$, then the requirement of equalities is limited to the \emph{averages} and cannot be extended to higher order momenta. Indeed, it is immediate to observe that
\begin{equation}
	\langle (H-\langle H \rangle_{\mu_{\text{mc}}(E,V,N)})^2 \rangle_{\mu_{\mathrm{mc}}(E,V,N)}  \, \neq \, \langle (H-\langle H \rangle_{\mu_{\text{can}}(\beta,V,N)})^2 \rangle_{\mu_{\mathrm{can}}(\beta,V,N)} \, .
\end{equation}

A detailed discussion on the topics presented in this chapter, together with the missing proofs, can be found in Chapter 2 of \cite{Gallavotti1999}. We can now recap the content of this Section.
\begin{itemize}
	\item It is reasonable to define macroscopic quantities as time averages of functions on the phase space.
	\item If the system is ergodic, it is possible to compute time averages without knowing many details of the dynamics: computation of time averages is reduced to computation of averages over certain measures on the phase space.
	\item In the thermodynamic limit, under certain assumptions, one can choose the microcanonical or the canonical measures to describe the thermodynamics of the system.
	\item We can interpret the invariant measures as the \emph{macroscopic equilibrium states} corresponding to the thermodynamic equilibrium of the physical system which is microscopically described by $H(p,q;V,N)$.
\end{itemize}

\section{Elements of Hamiltonian perturbation theory}

In this section we revisit some general facts of Hamiltonian systems and Hamiltonian perturbation theory in both its finite and in its infinite dimensional settings. There are many references for these topics and, in these notes, I closely follow the monograph \cite{Giorgilli2022} for the finite-dimensional case and the recent review \cite{Gallone2022} for the infinite-dimensional case. One of the books that where the reader can find a beautiful presentation of perturbation theory is \cite{Arnold1988}.

\subsection{Canonical transformations}
Transformation of coordinates are one of the most powerful tools to analyze dynamical systems: properties of several physical systems become clearer when the system is described with a well-chosen set of coordinates. Among the examples, the Keplerian motion or the harmonic oscillator.

In the context of Hamiltonian systems, a special role is played by the change of coordinates that maintain the structure of Hamilton equations, those are called \emph{canonical transformations}.

\begin{definition}\label{def:CanonicalTransf}
	A diffeomorphism $\Psi : \tilde{\Gamma} \to  \Gamma$ is said to be a \emph{canonical transformation} if for any $F,G  \in C^\infty(\Gamma)$, $\{F \circ \Psi, G \circ \Psi\}=\{F,G\} \circ \Psi$. 
\end{definition}

We recall that, given a diffeomorphism $\Psi:\tilde{\Gamma} \to \Gamma$, the \emph{Jacobian matrix} is the matrix whose entries are
\begin{equation}
	\big[D\Psi(\tilde{z})\big]_{j,\ell}\,:=\, \frac{\partial \Psi_j}{\partial \tilde{z}_\ell}(\tilde{z})	\, .
\end{equation}
We state in the following proposition the important properties of canonical transformations.

\begin{lemma}[Properties of canonical transformations]\label{lem:PropertiesCanonical}
Let $\Psi$ be a diffeomorphism. The following are equivalent.
\begin{itemize}
	\item[(a)] The equations of motion of the transformed system are still Hamiltonian in the sense that if $\tilde{z} \in \tilde{\Gamma}$, 
	\begin{equation}
		\dot{\tilde{z}}(t)\,= J \nabla_{\tilde{z}} (H \circ \Psi)(\tilde{z}(t))
	\end{equation}
	\item[(b)] The transformation is symplectic, that is if $D\Psi(\tilde{z})$ is the Jacobian matrix of $\Psi$ then,
	\begin{equation}
		D\Psi^T \, J \, D \Psi\,=\,J \, .
	\end{equation}
	\item[(c)] $\Psi$ preserves the fundamental Poisson brackets, i.e.~if $(\tilde{q},\tilde{p})=\Phi(q,p)$,
	\begin{equation}
		\{\tilde{q}_j, \tilde{q}_\ell\}=0=\{\tilde{p}_j, \tilde{p}_\ell \} \, , \qquad \{\tilde{q}_j,\tilde{p}_\ell\}=\delta_{j,\ell} \, .
	\end{equation}
	\item[(d)] $\Psi$ is canonical.
\end{itemize}
\end{lemma}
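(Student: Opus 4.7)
The plan is to single out the symplecticity identity $D\Psi^T J D\Psi = J$---equivalently $D\Psi\, J\, D\Psi^T = J$, since the symplectic group is closed under inverse and transpose---as a central hub, and to reduce each of (a), (c), (d) to it by the chain rule applied to the coordinate expression for the Poisson bracket.

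I would first establish (b) $\Leftrightarrow$ (d). Writing $z = \Psi(\tilde z)$, the chain rule gives $\partial_{\tilde z_a}(F \circ \Psi) = \sum_j (\partial_{z_j} F)\,(\partial_{\tilde z_a} \Psi_j)$, and substituting into the definition of the Poisson bracket on $\tilde \Gamma$ yields
\[
\{F \circ \Psi,\, G \circ \Psi\}(\tilde z) \,=\, \sum_{j,\ell} (\partial_{z_j} F)(\partial_{z_\ell} G)\,\bigl(D\Psi\, J\, D\Psi^T\bigr)_{j\ell},
\]
whereas $\{F, G\} \circ \Psi(\tilde z)$ is the analogous sum with the matrix $J$ in place of $D\Psi\, J\, D\Psi^T$. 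The two quadratic forms agree for every $F, G \in C^\infty(\Gamma)$ iff $D\Psi\, J\, D\Psi^T = J$, which is (b). Then (b) $\Leftrightarrow$ (c) is just the entrywise version of the same identity: specializing $F, G$ to the coordinate functions $z_j$, $z_\ell$ turns the entries of $D\Psi\, J\, D\Psi^T$ into the fundamental Poisson brackets of the new variables, so (c) is precisely the component form of (b).

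For (a) $\Leftrightarrow$ (b), I would substitute $z = \Psi(\tilde z)$ into Hamilton's equations $\dot z = J \nabla_z H$. From $\dot z = D\Psi(\tilde z)\,\dot{\tilde z}$ and, via the chain rule applied to $H = (H \circ \Psi) \circ \Psi^{-1}$, from $\nabla_z H = D\Psi^{-T} \nabla_{\tilde z}(H \circ \Psi)$, one obtains
\[
\dot{\tilde z} \,=\, D\Psi^{-1} J D\Psi^{-T}\, \nabla_{\tilde z}(H \circ \Psi).
\]
Requiring this to be Hamilton's equations with Hamiltonian $H \circ \Psi$ for every choice of $H \in C^\infty(\Gamma)$ forces $D\Psi^{-1} J D\Psi^{-T} = J$, which is the symplectic identity for $D\Psi^{-1}$ and hence for $D\Psi$.

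The computations are essentially chain-rule bookkeeping and the one conceptual point to watch is the universal quantifier in (a) and (d): passing from a pointwise identity of scalars or vectors that holds for every test Hamiltonian $H$, or every pair $F, G$, to a matrix identity. This is handled by observing that the gradients of linear, or coordinate, test functions span $\mathbb{R}^{2N}$ at each point, so the condition for arbitrary test functions reduces to the matrix equality $D\Psi\, J\, D\Psi^T = J$. I expect this quantifier step to be the only mildly subtle ingredient; everything else is a direct unpacking of definitions.
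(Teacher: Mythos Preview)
Your argument is correct: taking the symplecticity condition (b) as the hub and relating each of (a), (c), (d) to it via the chain rule is the standard route, and your handling of the universal quantifier (testing against coordinate/linear functions to pass from a scalar identity to a matrix one) is exactly the right way to close the loop. The paper does not actually prove this lemma in the text; it defers entirely to Chapter~2 of Giorgilli's monograph, noting which propositions there correspond to the equivalences (b)$\Leftrightarrow$(d), (c)$\Leftrightarrow$(d), and (a)$\Leftrightarrow$(d). Your direct computation is essentially what those propositions contain, so you have supplied what the paper leaves to the reference.
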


For the proof, we refer to Chapter 2 in \cite{Giorgilli2022}. There, the definition of canonical transformation is different from the one given in Definition \ref{def:CanonicalTransf}, but their equivalence is stated in Proposition 2.10. Then, 
 the proof of equivalence of item (b) and (d) is the content of Proposition 2.7 of \cite{Giorgilli2022}; the equivalence between (c) and (d) is Proposition 2.12 and the equivalence between (a) and (d) is the definition of canonical transformation therein.

\begin{lemma}\label{lem:FlowCan}
	For any $t \in \mathbb{R}$, and for any $G \in C^\infty(\Gamma)$, $\Phi_G^t:\Gamma \to \tilde{\Gamma}_t=\Phi_G^t(\Gamma)$ is a canonical transformation.
\end{lemma}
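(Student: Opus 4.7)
The plan is to verify condition (b) of Lemma \ref{lem:PropertiesCanonical}, i.e.~that the Jacobian of the Hamiltonian flow is symplectic, and then invoke the equivalence with canonicity. That $\Phi_G^t$ is a diffeomorphism from $\Gamma$ onto $\tilde{\Gamma}_t$ is standard: smooth dependence on initial data (applied to $G\in C^\infty(\Gamma)$) gives smoothness, and the group property in Definition \ref{def:OneParameterGroup} provides the smooth inverse $\Phi_G^{-t}$.

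The heart of the proof is the identity
\begin{equation*}
A(t)^T\,J\,A(t)\,=\,J,\qquad A(t):=D\Phi_G^t(z),
\end{equation*}
for every $t\in\mathbb{R}$ and $z\in\Gamma$. First I would derive the variational equation for $A(t)$ by differentiating in $z$ the flow equation $\dot{\Phi}_G^t(z)=J\nabla G(\Phi_G^t(z))$, which yields
\begin{equation*}
\dot{A}(t)\,=\,J\,S(t)\,A(t),\qquad S(t):=D^2G(\Phi_G^t(z)),
\end{equation*}
where $S(t)$ is symmetric as a Hessian. Setting $M(t):=A(t)^TJ\,A(t)$, I would then compute
\begin{equation*}
\dot{M}(t)\,=\,\dot{A}^TJA+A^TJ\dot{A}\,=\,A^T S^T J^T J A+A^T J^2 S A\,=\,A^T\bigl(S^T-S\bigr)A\,=\,0,
\end{equation*}
using the elementary identities $J^T=-J$ and $J^2=-\mathbbm{1}$ (whence $J^TJ=\mathbbm{1}$) together with $S^T=S$. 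Since $A(0)=\mathbbm{1}$ gives $M(0)=J$, Grönwall/uniqueness of the linear ODE forces $M(t)\equiv J$.

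Once the symplectic property $A(t)^TJA(t)=J$ is established for every $t$ and every base point, Lemma \ref{lem:PropertiesCanonical} immediately implies that $\Phi_G^t$ is canonical in the sense of Definition \ref{def:CanonicalTransf}. The main subtlety to pay attention to is purely algebraic: one needs the cancellation $A^TS^TJ^TJA+A^TJ^2SA=A^T(S^T-S)A$, which rests on the commutation $S^TJ^TJ=S$ and $J^2=-\mathbbm{1}$; any other ordering would obscure the role of the symmetry of $D^2G$. Beyond that, no genuinely hard step arises, so the proof is really just the observation that the symmetry of the Hessian is exactly what makes the flow symplectic.
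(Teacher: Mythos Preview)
Your proof is correct. The paper does not actually prove this lemma; it only remarks that ``The proof of this Lemma is an instructive exercise, and can be done proving the preservation of fundamental Poisson brackets (item (c) in Lemma~\ref{lem:PropertiesCanonical}),'' and otherwise refers to Giorgilli's book. So you have in fact supplied more than the paper does.

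That said, your route is slightly different from the one the paper hints at: the paper suggests verifying item (c) of Lemma~\ref{lem:PropertiesCanonical} (preservation of fundamental Poisson brackets), whereas you verify item (b) (symplecticity of the Jacobian). Your approach is arguably the cleaner one here, since it reduces to a single ODE computation for $M(t)=A(t)^TJA(t)$ and uses nothing beyond $J^T=-J$, $J^2=-\mathbbm{1}$, and the symmetry of the Hessian. The route via (c) amounts to differentiating $\{\tilde z_j,\tilde z_\ell\}$ in $t$ and using Jacobi's identity; it is conceptually equivalent but a bit more bookkeeping. Either way, the underlying mechanism is the same: the symmetry of $D^2G$ is precisely what makes the derivative of the symplectic condition vanish along the flow.
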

The proof of this Lemma is an instructive exercise, and can be done proving the preservation of fundamental Poisson brackets (item (c) in Lemma \ref{lem:PropertiesCanonical}). Otherwise, we refer to Proposition 2.15 in \cite{Giorgilli2022}.

We shall now introduce the canonical rescalings, that is a type of very useful transformation of $\mathbb{R} \times \Gamma$ which is canonical in a broader sense. 

\begin{definition}\label{def:CanonicalRescaling}
	Let $\alpha,\beta,\gamma,\delta \in \mathbb{R} \setminus \{0\}$. A transformation 
	\begin{equation}
		(q,p,H,t) \mapsto (Q,P,K,\tau)=(\alpha q, \beta p, \gamma H, \delta t)
	\end{equation}
	is called a \emph{canonical rescaling} if $\alpha \beta=\gamma \delta$.
\end{definition}

\begin{lemma}\label{lem:CanResc}
	Let $\Psi:\Gamma \to \tilde{\Gamma}$ be a canonical rescaling. Then, if the system is Hamiltonian in $\Gamma$ with Hamiltonian $H$, it is also Hamiltonian in $\tilde{\Gamma}$ with Hamiltonian $K=\gamma H$.
\end{lemma}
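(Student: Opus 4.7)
The plan is a direct chain-rule computation: we write Hamilton's equations in the new variables and verify that the scaling constraint $\alpha\beta=\gamma\delta$ is exactly what is needed to obtain Hamilton's equations with Hamiltonian $K=\gamma H$.

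First, starting from $\dot q = \partial H/\partial p$, $\dot p = -\partial H/\partial q$ (dots denoting $d/dt$), I would compute the derivatives of $Q=\alpha q$ and $P=\beta p$ with respect to the new time $\tau = \delta t$. Since $d/d\tau = \delta^{-1} d/dt$, we get
\begin{equation*}
\frac{dQ}{d\tau}=\frac{\alpha}{\delta}\,\dot q=\frac{\alpha}{\delta}\frac{\partial H}{\partial p},\qquad \frac{dP}{d\tau}=\frac{\beta}{\delta}\,\dot p=-\frac{\beta}{\delta}\frac{\partial H}{\partial q}.
\end{equation*}

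Next, I would translate the partial derivatives of $H$ from the old coordinates to the new. Viewing $H$ as a function of $(Q,P)$ via $q=Q/\alpha$, $p=P/\beta$, the chain rule gives $\partial H/\partial p = \beta\,\partial H/\partial P$ and $\partial H/\partial q = \alpha\,\partial H/\partial Q$. Substituting and using $K=\gamma H$, i.e.\ $\partial K/\partial P=\gamma\,\partial H/\partial P$ and $\partial K/\partial Q=\gamma\,\partial H/\partial Q$, we obtain
\begin{equation*}
\frac{dQ}{d\tau}=\frac{\alpha\beta}{\gamma\delta}\,\frac{\partial K}{\partial P},\qquad \frac{dP}{d\tau}=-\frac{\alpha\beta}{\gamma\delta}\,\frac{\partial K}{\partial Q}.
\end{equation*}

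Finally, imposing the hypothesis $\alpha\beta=\gamma\delta$ from Definition~\ref{def:CanonicalRescaling} reduces the prefactors to one, and we conclude $dQ/d\tau=\partial K/\partial P$, $dP/d\tau=-\partial K/\partial Q$, which are exactly Hamilton's equations on $\tilde\Gamma$ with Hamiltonian $K$. There is no real obstacle here; the entire content of the lemma is the bookkeeping of the four scaling factors, and the defining condition $\alpha\beta=\gamma\delta$ is precisely what is needed to cancel them. One might note as a remark that the condition $\alpha\beta=\gamma\delta$ is the analogue of the symplectic condition in Lemma~\ref{lem:PropertiesCanonical}(b) restricted to diagonal rescalings accompanied by a reparametrization of time.
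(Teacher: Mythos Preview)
Your proof is correct; the chain-rule bookkeeping is exactly what is needed, and the cancellation via $\alpha\beta=\gamma\delta$ is the entire content of the lemma. The paper itself does not include a proof of this lemma (it is stated and immediately followed by Example~\ref{ex:CanonicalTransf}), so your argument fills the gap in the natural and expected way.
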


\begin{example}\label{ex:CanonicalTransf} \textbf{}
	\begin{itemize}
		\item[(i)] 	Let $\alpha \in \mathbb{R} \setminus \{0\}$. The rescaling
	\begin{equation}
		(q,p) \mapsto (\tilde{q},\tilde{p}) = \Big(\frac{q}{\alpha}, \alpha p \Big) \, ,
	\end{equation}
	is a canonical transformation. If we start from the harmonic oscillator
	\begin{equation}
		H(q,p)\,=\, \frac{p^2}{2m}+ \frac{m \omega^2 q^2}{2} \, ,
	\end{equation}
	we can define $(q,p)=(\sqrt{m \omega}^{-1} \tilde{q}, \sqrt{m\omega} p)$, and the Hamiltonian becomes
	\begin{equation}
		\tilde{H}(\tilde{q},\tilde{p}) \,=\, \frac{\omega}{2}(\tilde{p}^2+\tilde{q}^2) \, .
	\end{equation}
	\item[(ii)] The change of coordinates 
	\begin{equation}
		(q,p) \,=\,(\sqrt{2 I} \cos \varphi, \sqrt{2 I} \sin \varphi) 
	\end{equation}
	is a canonical transformation from $(0,+\infty) \times \mathbb{T}^1$ to $\mathbb{R}^2$. The transformed Hamiltonian is
	\begin{equation}
		H(\varphi,I) \, = \, \omega I \, .
	\end{equation}
	\end{itemize}
\end{example}

\subsection{Liouville integrable systems and their perturbations}\label{sec:IntegrablePerturbations}
Since the (only) obstruction for a dynamical system to be ergodic is the presence of integrals of motion, it can be of interest to characterize those Hamiltonian systems which have enough integrals of motion that permits an exact solution of the equations of motion. In classical mechanics, those are the \emph{Liouville integrable systems}.

\begin{definition}
	Let $\Gamma$ be a $2n$-dimensional subset of $\mathbb{R}^{2n}$. A Hamiltonian system on $\Gamma$ is \emph{integrable in the sense of Liouville} if it admits $n$ independent integrals of motion in involution among them. That is, if there are $n$ functions $F_1,\cdots,F_n$ so that
	\begin{itemize}
		\item[(i)] $\{F_j,H\}=0$ for all $j=1,\cdots,n$ (integrals of motion);
		\item[(ii)] $\nabla_z F_j$ is not parallel to $\nabla_z F_\ell$ for all $j \neq \ell=1,\cdots, n$;
		\item[(iii)] $\{F_j,F_\ell\}=0$ for all $j,\ell=1,\cdots,n$.
	\end{itemize}
\end{definition}

The fact that, under these hypotheses, the equations of motion are exactly solvable follows from the Liouville-Arnol'd-Jost theorem (see Section 49 in \cite{Arnold1989}).

\begin{theorem}[Liouville-Arnol'd-Jost]
	Let $H$ be an integrable system on $\Gamma$ and suppose that each level set of $H$ is compact, connected and non-degenerate. Let $\tilde{a}_1,\cdots,\tilde{a}_n \in \mathbb{R}$ be such that for all $j$, $f^{-1}_j(\tilde{a}_j)$ is non-degenerate, connected and compact. Then,
	\begin{itemize}
		\item[(a)] $\Sigma_{\tilde{a}} \; := \bigcap_{j=1}^n \{f_j=\tilde{a}_j\}$ is a $n$ dimensional manifold.
		\item[(b)] $\Sigma_{\tilde{a}}$ is connected and compact.
		\item[(c)] There exists a neighborhood $U \subset \mathbb{R}^n$ of $\tilde{a}$, such that
		\begin{equation}
			\Sigma_U \,:=\, \bigcup_{a \in U} \Sigma_a
		\end{equation}
		is diffeomorphic to $U \times \mathbb{T}^n$.
		\item[(d)] In $U\times \mathbb{T}^n$ there exist canonical action-angle coordinates $(\varphi,I) \in \tilde{U} \times \mathbb{T}^n$ such that, if $\sigma_U=\Psi(\tilde{U} \times \mathbb{T}^n)$, then $(f_j \circ \Psi)(\varphi,I)=\tilde{f}_j(I)$.
		\item[(e)] In action-angle coordinates, the motion is either periodic or quasi-periodic:
		\begin{equation}
			I(t)\,=\, I_0 \, , \qquad \varphi(t)\,=\, \varphi_0 + \frac{\partial \tilde{H}}{\partial I}(I_0) t \, .
		\end{equation}
	\end{itemize}
\end{theorem}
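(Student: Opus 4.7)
The plan is to exploit the $n$ commuting Hamiltonian flows generated by $F_1,\dots,F_n$ to endow $\Sigma_{\tilde a}$ with a transitive $\mathbb{R}^n$-action; compactness plus functional independence then force the isotropy subgroup to be a full-rank lattice, which yields the torus structure. Item (a) follows from a direct application of the regular value theorem: the involution assumption together with non-degeneracy upgrades the pairwise non-parallelism of the $\nabla F_j$ to linear independence on $\Sigma_{\tilde a}$, so the map $(F_1,\dots,F_n):\Gamma\to\mathbb{R}^n$ is a submersion near $\Sigma_{\tilde a}$ and the joint preimage is a smooth $n$-submanifold. Compactness of $\Sigma_{\tilde a}$ is inherited as a closed subset of the compact $f_1^{-1}(\tilde a_1)$.

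For (b) and (c) I would introduce the commuting vector fields $X_j:=J\nabla F_j$, which are tangent to $\Sigma_{\tilde a}$ (since each $F_j$ is preserved by every $\Phi_{F_k}^t$) and pointwise span the tangent space by Step~(a). Because $\{F_i,F_j\}=0$ the flows commute, so the assignment
\begin{equation*}
\Psi_x:\mathbb{R}^n\longrightarrow\Sigma_{\tilde a},\qquad (t_1,\dots,t_n)\longmapsto \bigl(\Phi_{F_1}^{t_1}\circ\cdots\circ\Phi_{F_n}^{t_n}\bigr)(x)
\end{equation*}
is well-defined, smooth, and a local diffeomorphism at the origin. Connectedness of $\Sigma_{\tilde a}$ (which I would derive from the hypothesis by tracking how the connected level sets of a single $f_j$ intersect under the commuting flows) makes $\Psi_x$ surjective. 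The isotropy $\Lambda_x:=\Psi_x^{-1}(x)$ is a closed, hence discrete, subgroup of $\mathbb{R}^n$; compactness of $\Sigma_{\tilde a}$ forces $\Lambda_x$ to have rank $n$, so $\Sigma_{\tilde a}\cong \mathbb{R}^n/\Lambda_x\cong\mathbb{T}^n$. Repeating the construction with $x$ varying on a smooth local section transverse to the flows then trivializes $\Sigma_U\cong U\times\mathbb{T}^n$ via the implicit function theorem.

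For (d) one must construct the action-angle variables canonically. The natural actions are the period integrals
\begin{equation*}
I_j(a)\;:=\;\frac{1}{2\pi}\oint_{\gamma_j(a)} p\,dq,
\end{equation*}
where $\gamma_1(a),\dots,\gamma_n(a)$ generate $H_1(\Sigma_a;\mathbb{Z})$ and depend smoothly on $a$. A preliminary lemma shows that $\Sigma_a$ is Lagrangian: since $\{F_i,F_j\}=0$, the symplectic form $\omega=dp\wedge dq$ vanishes on each $T_x\Sigma_a$, so $p\,dq$ is closed on $\Sigma_a$ and the integrals depend only on the homology class. The $I_j$ are functionally independent of $a$ (I would verify this through a direct computation of $\det(\partial I/\partial a)$ using the expression of the $X_j$). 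Taking $S(q,I)$ as a primitive of $p\,dq$ on a simply connected piece of $\Sigma_a$ and setting $\varphi_j:=\partial S/\partial I_j$ produces angles $2\pi$-periodic along the $\gamma_j$; canonicity $\{I_j,\varphi_k\}=\delta_{jk}$ follows from the generating-function formalism.

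Item (e) is then essentially free: by construction each $F_j$ is a function of $(I_1,\dots,I_n)$ alone, hence so is the Hamiltonian $\tilde H$, and Hamilton's equations reduce to $\dot I=0$, $\dot\varphi=\partial_I\tilde H(I_0)$, yielding the stated quasi-periodic motion. The main obstacle I anticipate is Step~(d): verifying independence of path for the actions, the precise $2\pi$-periodicity of the angles after completing around a cycle, and the fact that the resulting change of variables is canonical, all of which hinge on the Lagrangian nature of $\Sigma_a$ and on carefully handling the monodromy of $S(q,I)$ under the action of $\pi_1(\Sigma_a)$.
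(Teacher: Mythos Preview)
The paper does not actually prove this theorem: it is stated without proof and the reader is referred to Section~49 of Arnold's \emph{Mathematical Methods of Classical Mechanics}. Your outline is precisely the classical argument found there (and in most textbooks): construct the transitive $\mathbb{R}^n$-action from the commuting flows $\Phi_{F_j}^{t_j}$, identify the isotropy as a rank-$n$ lattice via compactness, and then build actions as period integrals $\frac{1}{2\pi}\oint_{\gamma_j} p\,dq$ over a homology basis with the conjugate angles obtained from a generating function. So there is no discrepancy to discuss---your sketch matches the reference the paper defers to.

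One small point worth tightening if you flesh this out: the connectedness of $\Sigma_{\tilde a}$ does \emph{not} follow in any obvious way from the connectedness of the individual $f_j^{-1}(\tilde a_j)$, and your parenthetical ``tracking how the connected level sets intersect under the commuting flows'' is not quite an argument. In the standard treatment one either assumes connectedness of the joint level set directly (as Arnold does), or one works component by component and concludes that each connected component is a torus; the latter is what your $\mathbb{R}^n$-action argument actually proves. The theorem as stated here has slightly unusual hypotheses, so you should be explicit about which route you take.
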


Geometrically speaking, the phase space of an integrable system is foliated by invariant tori. The motion then takes place on a torus and depending on the frequency vector
\begin{equation}
	\omega(I_0) \, :=\, \frac{\partial \tilde{H}}{\partial I}(I_0) \, ,
\end{equation}
the motion can be either periodic or quasi-periodic. In particular, if $I_0$ is such that $\omega(I_0)$ has all components with irrational ratio, then the motion is ergodic on the torus.

Nevertheless, the fact that we have $n$ integrals of motion on a phase space of dimension $2n$ completely destroys the possibility of treating the motion knowing only few quantities: to analyze the motion, one needs to know the value of all $n$ integrals of motion.

In some sense integrable system are ``exceptional''.

Indeed, being integrable in the sense of Liouville, that is the situation in which the physical system under analysis has $n$ integrals of motion is quite unusual. Instead, many physical systems can be viewed as \emph{perturbations} of certain integrable ones. In these cases, the Hamiltonian can be written as
\begin{equation}\label{eq:GenericTavolfiore}
	H(\varphi,I) \, =\, h(I) + \varepsilon P(\varphi,I)
\end{equation}
where $\varepsilon>0$ is to be thought as a parameter that measures, in some sense, the size of the perturbation. 

For this type of systems, a necessary condition to approach thermal equilibrium, in the ergodic sense, is to require a \emph{substantial} evolution of the actions $I$'s. Indeed, if for all times $|I(t)-I(0)| \leq \varepsilon$, then the dynamics of the system would explore at most a $\varepsilon$-neighborhood of $I(0)$. This last observation is in contradiction with ergodicity, in particular with Theorem~\ref{thm:ErgodicEquivalence}-(b) with the choice $B=\{(\varphi,I) \in \Gamma \, | \, |I-I(0)| \gtrsim \varepsilon\}$. Writing Hamilton equations, from \eqref{eq:GenericTavolfiore}, we have
\begin{equation}
	\begin{split}
		\dot{\varphi}(t)\,&=\, \frac{\partial h}{\partial I}(I(t))+\varepsilon \frac{\partial P}{\partial \varphi}(\varphi(t),I(t)) \, ,\\
		\dot{I}(t) \,&=\, - \varepsilon \frac{\partial P}{\partial \varphi}(\varphi(t),I(t)) \, .
	\end{split}
\end{equation}
Integrating the first equation, we get for any $T>0$ fixed,
\begin{equation}
	|I(t)-I(0)| \leq \varepsilon t \sup_{t \in [0,T]} \Big\Vert \frac{\partial P}{\partial \varphi}(\varphi(t),I(t)) \Big\Vert \, ,
\end{equation}
and, assuming that $\Vert \frac{\partial P}{\partial \varphi}(\varphi(t),I(t)) \Vert \leq C$, we have that $|I(t)-I(0)| \leq C \varepsilon t$, that is, equivalently, we have to wait at least times of order $t \sim \frac{1}{\varepsilon}$ to observe a variation in the value of the actions of order $O(1)$.

In the following simple examples, we show that this bound is optimal.
\begin{example}
	Let $\Gamma=\mathbb{T}^2 \times \mathbb{R}^2 \ni (\varphi,I)$ and let $\omega \in \mathbb{R}^2$. We consider the Hamiltonian system with Hamiltonian $H(\varphi,I)=h(I)+\varepsilon P(\varphi,I)$ with
	\begin{equation}
		h(I)\,=\, \omega \cdot I \, , \qquad P(\varphi,I)\,=\, -\cos(\omega_1 \varphi_1-\omega_2 \varphi_2)\, ,
	\end{equation}
	Hamilton equations are
	\begin{equation}\label{eq:ExamplesEquationsIPhi}
		\begin{split}
			\dot{\varphi}_1(t) \,&=\, \omega_1 \, ,\\
			\dot{\varphi}_2(t) \,&=\, \omega_2 \, , \\
			\dot{I}_1(t) \, &=\, \varepsilon \, \omega_1 \, \sin\big(\omega_1 \varphi_1(t)-\omega_2 \varphi_2(t)\big) \, ,\\
			\dot{I}_2(t) \, &=\, - \varepsilon \, \omega_2 \, \sin\big(\omega_1 \varphi_1(t)-\omega_2 \varphi_2(t) \big) \, .
		\end{split}
	\end{equation}
	Solving the first two equations yield
	\begin{equation}\label{eq:ExamplesSolutionPhi}
		\varphi_1(t)\, =\,\varphi_1(0)+\omega_1 t \, , \qquad \varphi_2(t)\, =\, \varphi_2(0)+\omega_2 t \, .
	\end{equation}
	Inserting the explicit expressions \eqref{eq:ExamplesSolutionPhi} in \eqref{eq:ExamplesEquationsIPhi} we get
	\begin{equation}
		\begin{split}
			\dot{I}_1(t)\,&=\, \varepsilon \omega_1 \, \sin\big( \omega_1 \varphi_1(0)+ \omega_1^2 t-\omega_2 \varphi_2(0)-\omega_2^2 t \big) \\
			\dot{I}_2(t)\,&=\,- \varepsilon \, \omega_2 \, \sin\big( \omega_1 \varphi_1(0)+ \omega_1^2 t-\omega_2 \varphi_2(0)-\omega_2^2 t \big) \, ,
		\end{split}
	\end{equation}
	therefore, it is possible to perform an explicit integration. We have to distinguish two cases.
	
	In case $\omega_1\neq \pm \omega_2$, we have 
	\begin{equation}
		I_1(t) \,=\, I_1(0) -\frac{\varepsilon}{\omega_1^2-\omega_2^2} \cos\big(\omega_1 \varphi_1(0)-\omega_2 \varphi_2(0)+ (\omega_1^2-\omega_2^2)t \big)
	\end{equation}
	and, therefore, for all times, the actions satisfy
	\begin{equation}
		|I_1(t)-I_1(0)| \leq \frac{\varepsilon}{|\omega_1^2-\omega_2^2|} \, , \qquad \forall t \in \mathbb{R} \, .
	\end{equation}
	Therefore, the system for all times explores only a $\varepsilon$-neighborhood of its initial datum and there is no chance to be ergodic and even if $I_1$ is not exactly conserved. We will say that $I_1$ is \emph{quasi-}conserved along the dynamics.
	
	In case $\omega_1=\pm \omega_2$, we have
	\begin{equation}
		I_1(t)=I_1(0)- \varepsilon \omega_1 \sin\big(\omega_1(\varphi_1(0)-\varphi_2(0)) \big) t \, ,
	\end{equation}
	and therefore the action $I_1$ evolves with speed $\varepsilon \omega_1 \sin\big(\omega_1(\varphi_1(0)-\varphi_2(0)) \big)$ and, apart from a zero-measure set of initial data, for times $t \sim \frac{1}{\varepsilon}$ $I_1(t)$ evolves of a quantity that is of order $1$.
\end{example}

In the previous example, we showed that the evolution of the system is denoted by relations between the harmonics in the perturbation and the frequencies of the unperturbed part $\omega(I)$.

Hamiltonian perturbation theory is a framework to detect and analyze systematically these properties. The main idea is to change coordinates in a clever way: in the transformed Hamiltonian, all contributions that \emph{almost-preserve} the actions should be absent.

A convenient way to construct these transformations is to look for flows of functions $G$, and therefore we define $(\tilde{\varphi},\tilde{I})$ as $(\varphi,I)=\Phi_{G_1}^\varepsilon(\tilde{\varphi},\tilde{I})$. For every $G_1$, these type of transformations are $\varepsilon$-\emph{close to the identity} because, for $\varepsilon=0$, by the properties of the flow, one has $(\varphi,I)=(\tilde{\varphi},\tilde{I})$ and, for $\varepsilon$ small, one has
\begin{equation}\label{eq:NearToIdentity}
	|(\varphi,I)-(\tilde{\varphi}, \tilde{I}) | \, \lesssim \varepsilon \, .
\end{equation}
If, for the angles, this vicinity gives no information on the long-time dynamics, for actions this information can be very important. Indeed, we search for a $G_1$ so that $H \circ \Phi_{G_1}^\varepsilon$ is ``more integrable'' than $H$ itself, e.g.
\begin{equation}
	(H \circ \Phi_{G_1}^\varepsilon)(\tilde{I},\tilde{\varphi}) \,=\, h(\tilde{I})+\varepsilon h_1(\tilde{I}) + \varepsilon^2 \tilde{P}_2(\tilde{I},\tilde{\varphi}) \, .
\end{equation}
If we succeed, our a-priori estimate improves
\begin{equation}\label{eq:IIIII}
	|\tilde{I}(t)-\tilde{I}(0)| \, \lesssim \, \varepsilon^2 t \, ,
\end{equation}
that means that actions \emph{essentially} do not evolve for times $t \lesssim \varepsilon^{-2}$. Thus, thermalization time (if any) must be larger than $\varepsilon^{-2}$. Nevertheless, note that such a sharper estimate improves the information on the vicinity in the tilded variables and not in the original ones. Indeed, combining \eqref{eq:NearToIdentity} with \eqref{eq:IIIII}, we have
\begin{equation}
	|I(t)-I(0)| \, \leq \, \underbrace{|I(t) -\tilde{I}(t)|}_{\lesssim \varepsilon} + \underbrace{|\tilde{I}(t)-\tilde{I}(0)|}_{\lesssim \varepsilon^2 t} + \underbrace{|\tilde{I}(0)-I(0)|}_{\lesssim \varepsilon} \, \lesssim 2 \varepsilon + \varepsilon^2 t \, .
\end{equation}
In other words, actions can evolve of a quantity of order $1$ at times $t \sim \frac{1}{\varepsilon^2}$ while exploring a region of size $\varepsilon$ for times of order $\lesssim \frac{1}{\varepsilon}$. In other words, improving integrability of the Hamiltonian does not improve the control in the size of the actions but only on the times on which such control is possible.

We now discuss how to choose $G_1$. Using \eqref{eq:FlowLieDer}, we have
\begin{equation}
	\begin{split}
		(H \circ \Phi_{G_1}^{\varepsilon}) \,&=\, e^{\varepsilon \mathcal{L}_{G_1}} H \,=\,\Big(1+\varepsilon \mathcal{L}_{G_1} + O(\varepsilon^2) \Big)\Big(h+\varepsilon P+O(\varepsilon^2) \Big) \\
		&=\, h+\varepsilon \Big(\mathcal{L}_{G_1}h+P \Big) + O(\varepsilon^2) \, .
	\end{split}
\end{equation}
The strongest (and, often too optimistic) requirement is that  $\mathcal{L}_{G_1}h+P=0$ and therefore it is possible \emph{to push} the perturbation to higher order. A weaker, but still optimistic and sufficient result, is to look whether it is possible to choose $G_1$ depending of both actions and angles and $h_1$ depending on the actions only such that
\begin{equation}\label{eq:AbstractHomologicalEq}
	\mathcal{L}_{G_1} h + P \, = \, h_1 \, .
\end{equation}
This partial differential equations, in the unknowns $G_1$ and $h_1$ is called \emph{homological equation}. Writing explicitly the Poisson brackets (see Definition~\ref{def:PoissonBracket}) and recalling that $h$ depends on the actions only, \eqref{eq:AbstractHomologicalEq} is recast as
\begin{equation}
	\omega(I) \cdot \frac{\partial G_1}{\partial \varphi}(I,\varphi) +P(I,\varphi) \, = \, h_1(I)
\end{equation}
in the unknowns $G_1$ and $h_1$. 

To solve the homological equation it can be convenient to use Fourier series\footnote{This is not always the case, as we will see in Subsection \ref{subsec:InfiniteDimHamSystem}.}. Then, if
\begin{equation}
	G_1(\varphi,I) \,=\, \sum_{k \in \mathbb{Z}^n} \widehat{G_1(I)}_k e^{i k \cdot \varphi} \, , \qquad P(\varphi,I) \,=\, \sum_{k \in \mathbb{Z}^n} \widehat{P(I)}_k e^{i k \cdot \varphi} \, ,
\end{equation}
the homological equation reads
\begin{equation}
	\begin{split}
		- i\, \omega(I) \cdot k \, \widehat{G_1(I)}_k+\widehat{P(I)}_k\,&=\,0 \, , \qquad k \neq 0 \, , \\
		 \widehat{P(I)}_0 \, &= \, h_1(I) \, .
	\end{split}
\end{equation}
Note that we have a solution once $\omega(I) \cdot k \neq 0$ for all $\widehat{P(I)}_k \neq 0$, which is
\begin{equation}
	\widehat{G_1(I)}_k \, = \, \frac{\widehat{P_1(I)}_k}{i \omega(I) \cdot k} \, .
\end{equation}

\begin{definition}
	A vector $\omega \in \mathbb{R}^n$ is said to be \emph{resonant} with $k \in \mathbb{Z}^n \setminus \{0\}$ if $\omega \cdot k = 0$.
\end{definition}

Note that, if $\omega(I)$ is non-resonant for all $k \in \mathbb{Z}^n \setminus \{0\}$, there are still problems! Indeed, generically, $\omega(I) \cdot k$ accumulates at zero and, therefore, one may not be allowed to perform the sum over $k$ in the Fourier series and, therefore, $G_1(\varphi,I)$ is not guaranteed to exist.

\begin{xca}
	Take $\omega=(\sqrt{2},1)$. Show that, for any $\varepsilon>0$ there exists $k_1,k_2 \in \mathbb{Z}$ so that $|\omega \cdot k| < \varepsilon$.
\end{xca}

\begin{definition}\label{def:NonDeg}
	$h(I)$ is said to be \emph{non-degenerate} if $\det \frac{\partial \omega}{\partial I}(I) \neq 0$.
\end{definition}

\begin{definition}
	A perturbation $P(\varphi,I)$ is said to be \emph{generic in the sense of Poincar\'e} in $B \subset \mathbb{R}^n$ if $\forall I \in B$ and $\forall k \in \mathbb{Z}^n$, there exists $q$ parallel to $k$ such that $\widehat{P(I)}_q \neq 0$.
\end{definition}

The following theorem states in which sense integrable systems are exceptional.

\begin{theorem}[Poincar\'e non existence of first integrals] Let $H(I,\varphi;\varepsilon)$,
\begin{equation}
	H(I,\varphi; \varepsilon) \, = \, h(I)+\sum_{j=1}^{+\infty} \varepsilon^j P_j(\varphi,I)
\end{equation}
be analytic in all its arguments, with $h(I)$ nondegenerate and $P_1$ generic in the sense of Poincar\'e in $B\subset \mathbb{R}^n$. Then, no analytic first integrals $F(I,\varphi;\varepsilon)$ independent of $H$ exist.
\end{theorem}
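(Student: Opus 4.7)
The plan is to follow Poincar\'e's classical strategy: expand $F$ as a formal power series $F(I,\varphi;\varepsilon)=\sum_{j\geq 0}\varepsilon^j F_j(I,\varphi)$ and extract constraints on the coefficients, order by order, from the identity $\{H,F\}=0$. Matching powers of $\varepsilon$ gives $\{h,F_0\}=0$ at order zero and $\{h,F_1\}+\{P_1,F_0\}=0$ at order one. The order-zero equation reads $\omega(I)\cdot\partial_\varphi F_0=0$; expanding in Fourier series on the angles gives $(\omega(I)\cdot k)\widehat{F_0(I)}_k=0$ for every $k\in\mathbb{Z}^n$. Since nondegeneracy of $h$ ensures that $\omega(I)\cdot k\neq 0$ on an open dense subset of $B$ for each $k\neq 0$, analyticity of $F$ forces $\widehat{F_0(I)}_k\equiv 0$ for $k\neq 0$, so $F_0=F_0(I)$ depends only on the actions.

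Next I would insert $F_0=F_0(I)$ into the order-one equation; taking Fourier coefficients in $\varphi$ yields
\begin{equation*}
(\omega(I)\cdot k)\,\widehat{F_1(I)}_k\,=\,(k\cdot\nabla_I F_0(I))\,\widehat{P_1(I)}_k,\qquad k\in\mathbb{Z}^n.
\end{equation*}
For $\widehat{F_1(I)}_k$ to extend analytically across the resonant hypersurface $\Sigma_k:=\{I\in B:\omega(I)\cdot k=0\}$, which by nondegeneracy is a genuine analytic hypersurface when $k\neq 0$, the right-hand side must vanish on $\Sigma_k$. The genericity of $P_1$ supplies, for each $k\neq 0$, a $q$ parallel to $k$ with $\widehat{P_1(I)}_q\not\equiv 0$; since $\Sigma_q=\Sigma_k$ and $q$ is a nonzero integer multiple of $k$, analyticity of $\nabla_I F_0\cdot q$ and of $\widehat{P_1(I)}_q$ then forces $k\cdot\nabla_I F_0\equiv 0$ on $\Sigma_k$.

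The crux is to upgrade these hypersurface-by-hypersurface conditions to a pointwise identity on $B$. At any \emph{completely resonant} action $I_\star$, i.e.\ one where $\omega(I_\star)$ is a real multiple of some integer vector, there exist $n-1$ linearly independent $k_i\in\mathbb{Z}^n$ with $I_\star\in\Sigma_{k_i}$, and the constraints $k_i\cdot\nabla_I F_0(I_\star)=0$ force $\nabla_I F_0(I_\star)$ to be parallel to $\omega(I_\star)=\nabla_I h(I_\star)$. Nondegeneracy makes the completely resonant actions dense in $B$, so analyticity promotes this parallelism to every $I\in B$. Consequently $F_0$ is functionally dependent on $h$, i.e.\ $F_0=\Psi_0(h(I))$ for an analytic $\Psi_0$. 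Running the same analysis on the reduced analytic first integral $F-\Psi_0(H)=\varepsilon\tilde F_1+O(\varepsilon^2)$ shows inductively that every coefficient is again a function of $h$, identifying $F$ formally with an analytic function of $H$ and contradicting the hypothesis of functional independence.

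The main obstacle I anticipate is the propagation step: converting the family of codimension-one conditions $k\cdot\nabla_I F_0\big|_{\Sigma_k}=0$ into pointwise parallelism between $\nabla_I F_0$ and $\nabla_I h$ is the heart of Poincar\'e's theorem and rests delicately on both the density of completely resonant tori (a consequence of nondegeneracy) and on analyticity; removing either ingredient kills the argument. A secondary difficulty is the bookkeeping of the induction, which requires checking that the subtractions $F\mapsto F-\Psi_0(H)$ preserve analyticity in $\varepsilon$ and strictly increase the order of the leading non-trivial term, so that the same order-by-order scheme keeps applying and the cumulative dependence really covers all of $F$ rather than only its lowest coefficient.
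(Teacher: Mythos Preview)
The paper states this theorem without proof, so there is no in-text argument to compare against. Your proposal is the classical Poincar\'e scheme and is essentially correct: the order-zero reduction $F_0=F_0(I)$, the order-one constraint $k\cdot\nabla_I F_0=0$ on each resonant hypersurface $\Sigma_k$ via the genericity of $P_1$, and the density-of-completely-resonant-actions argument are exactly the standard steps (see, e.g., Chapter~8 of Giorgilli's monograph cited in the notes).

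Two small points to tighten. First, in step~4 you should use the paper's pointwise formulation of genericity directly: for each fixed $I_0\in\Sigma_k$ choose $q=q(I_0)$ parallel to $k$ with $\widehat{P_1(I_0)}_q\neq 0$; since $\widehat{F_1(I)}_q$ is analytic in $I$ (being a Fourier coefficient of the analytic $F_1$), the left-hand side vanishes at $I_0$ and you conclude $k\cdot\nabla_I F_0(I_0)=0$ immediately, without needing $\widehat{P_1(\cdot)}_q$ to be nonvanishing on a dense subset of $\Sigma_k$. Second, the passage from $\nabla_I F_0\parallel\nabla_I h$ to $F_0=\Psi_0(h)$ needs the level sets of $h$ in $B$ to be connected (or a local statement suffices for the contradiction with independence); this is routine but worth a sentence, and note that nondegeneracy in the sense of Definition~\ref{def:NonDeg} concerns the Hessian of $h$, not the gradient, so one usually restricts to a domain where $\nabla h\neq 0$.
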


At a first sight, one may be tempted to conclude that due to the absence of analytic integrals of motion, a generic perturbation of an integrable system is ergodic. Despite the fact that this is tempting, the situation is much more subtle as many quasi-integrable systems exhibits very wild (Cantor-like) integrals of motion, as we will discuss in the next section. This last fact is non-trivial at all!

\subsection{KAM and Nekhoroshev theorems} This subsections presents the two greatest theorems of Hamiltonian perturbation theory of the last century. Roughly speaking, they fully describe the dynamics of weakly perturbed integrable systems. Indeed, according to KAM theorem, many invariant tori survives the perturbation and, according to Nekhoroshev theorem, if the initial datum is outside one of these tori, the evolution of the actions is \emph{exponentially slow}.

\begin{theorem}[Kolmogorov-Arnol'd-Moser]\label{thm:KAM} Let $H(\varphi,I)=h(I)+\varepsilon P(\varphi,I)$. Let us suppose
\begin{itemize}
	\item[(i)] $H$ is analytic on a complex neighborhood of $D=B\times \mathbb{T}^n$;
	\item[(ii)] $h$ is non-degenerate (in the sense of Definition \ref{def:NonDeg}).
\end{itemize}

Then, there exists $\varepsilon_*>0$ such that for any $0<\varepsilon < \varepsilon_*$, there exist
\begin{itemize}
	\item[(a)] a canonical transformation close to the identity $(\varphi,I)=\Phi_G^\varepsilon(\tilde{I},\tilde{\varphi})$;
	\item[(b)] an integrable Hamiltonian $\tilde{h}_\varepsilon : B \to \mathbb{R}$;
	\item[(c)] a set $B_\varepsilon \subset B$ with $\mathrm{meas}(B\setminus B_\varepsilon) \sim \sqrt{\varepsilon}$, such that
	\begin{equation}
		(H \circ \Phi_G^\varepsilon)(\tilde{I},\tilde{\varphi}) \, = \, \tilde{h}_\varepsilon(\tilde{I}) \, , \qquad \forall (\tilde{I},\tilde{\varphi}) \in B_\varepsilon \times \mathbb{T}^n \, .
	\end{equation}
\end{itemize}
\end{theorem}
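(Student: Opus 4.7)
The plan is to implement the Kolmogorov superconvergent scheme (Newton iteration in function space), following the philosophy already sketched in Subsection \ref{sec:IntegrablePerturbations}. A single use of the homological equation reduces the perturbation from order $\varepsilon$ to order $\varepsilon^2$; iterating this step gives a perturbation of order $\varepsilon^{2^n}$ after $n$ steps, which must be shown to beat the loss of analyticity and the small-divisor degradation accumulated along the way. The classical Kolmogorov trick is to fix one target Diophantine frequency $\omega_* \in \mathbb{R}^n$, and use the non-degeneracy hypothesis (ii) to invert $\omega(I) = \partial h/\partial I$ near $\omega_*$, so that the iteration is organized around keeping the frequency constant rather than preserving a particular action.

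Concretely, I would fix $\gamma, \tau > 0$ (with $\tau > n-1$) and consider the set $\Omega_\gamma$ of $\omega$ satisfying the Diophantine condition $|\omega \cdot k| \geq \gamma |k|^{-\tau}$ for all $k \in \mathbb{Z}^n \setminus \{0\}$. For $\omega_* \in \Omega_\gamma$, the inductive step starts from a Hamiltonian $H_n = h_n(I) + \varepsilon_n P_n(\varphi,I)$ analytic on a complex strip $D_n = \{\mathrm{dist}(I,I_*) < r_n\} \times \{|\mathrm{Im}\,\varphi| < s_n\}$, with $\omega_n(I_*) = \omega_*$. I would then solve the homological equation
\begin{equation*}
\mathcal{L}_{G_n} h_n + P_n = \langle P_n \rangle_\varphi
\end{equation*}
in Fourier, using $\widehat{G_n}_k(I) = \widehat{P_n}_k(I)/(i\,\omega_n(I)\cdot k)$, and absorb the average $\langle P_n \rangle_\varphi$ into the new integrable part. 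Defining the next coordinates by the time-one flow of $\varepsilon_n G_n$, a Taylor expansion in the Lie series produces $H_{n+1} = h_{n+1}(\tilde I) + \varepsilon_{n+1} P_{n+1}(\tilde\varphi,\tilde I)$ with $\varepsilon_{n+1} \lesssim C_n \varepsilon_n^2$. The technical core of the argument is to quantify $C_n$ by Cauchy estimates on a slightly shrunken domain $D_{n+1}$, with $r_{n+1} = r_n - \rho_n$ and $s_{n+1} = s_n - \sigma_n$, where $\sum_n (\rho_n+\sigma_n) < \infty$; the Diophantine denominator and the differentiations cost only polynomial factors in $1/\sigma_n$, so the quadratic convergence $\varepsilon_n \leq \varepsilon_0^{2^n}$ wins provided $\varepsilon_0/\gamma^{\mathrm{const}}$ is sufficiently small.

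The main obstacle — and where the proof really lives — is the small-divisor problem: the solvability of the homological equation at every step requires that the current frequency $\omega_n$ still satisfies the Diophantine condition, but $\omega_n$ drifts at each iteration. Kolmogorov's approach resolves this by not letting $\omega_n$ drift at all: at each step one chooses the new base point $I_*^{(n+1)}$ so that $\omega_{n+1}(I_*^{(n+1)}) = \omega_*$, which is possible precisely because $\det \partial \omega/\partial I \neq 0$. Thus the Diophantine condition is imposed once and for all on $\omega_*$, and survives the entire iteration. Taking the limit of the composed transformations, which converges by the telescoping geometric-type bound on the $\varepsilon_n$, yields the canonical transformation $\Phi_G^\varepsilon$, and the limit Hamiltonian depends only on $\tilde I$ on the set where the construction is defined.

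Finally, for the measure estimate, I would let $B_\varepsilon$ consist of those $I \in B$ for which $\omega(I)$ is Diophantine with constant $\gamma$ and exponent $\tau > n-1$. A standard argument (bounding the measure of the strips $\{|\omega\cdot k|<\gamma|k|^{-\tau}\}$ and summing over $k$) gives
\begin{equation*}
\mathrm{meas}(B \setminus B_\varepsilon) \leq C\,\gamma \sum_{k \neq 0} |k|^{-\tau-1} \leq C' \gamma,
\end{equation*}
where non-degeneracy is used to transport the measure estimate from frequency space back to action space. The convergence of the Newton scheme requires roughly $\varepsilon \ll \gamma^2$, so the optimal choice is $\gamma \sim \sqrt{\varepsilon}$, yielding $\mathrm{meas}(B\setminus B_\varepsilon) \sim \sqrt{\varepsilon}$ and completing the proof.
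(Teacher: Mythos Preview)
The paper does not actually prove this theorem: it simply states it and then refers the reader to Kolmogorov's original sketch, the complete proofs by Arnol'd and Moser, and the Lie-series version by Benettin--Galgani--Giorgilli--Strelcyn (which uses precisely the toolbox of Subsection~\ref{sec:IntegrablePerturbations}). Your sketch is a faithful outline of the Kolmogorov superconvergent scheme, organized around fixing a Diophantine frequency and using non-degeneracy to re-center the action at each step, and it is consistent with those references; so there is nothing to compare against in the paper itself, and your proposal is an appropriate stand-in for the omitted proof.
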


The set $B_\varepsilon$ which has a large measure is such that $\mathrm{int}(B_\varepsilon)=\varnothing$, so it has a ``\emph{wild}'' structure. Moreover, the set $B_\varepsilon \times \mathbb{T}^n$ is invariant for the dynamics and it has a strictly positive measure. Also $(B\setminus B_\varepsilon) \times \mathbb{T}^n$ has strictly positive measure and is invariant for the dynamics, therefore when KAM theorem is applicable, the Hamiltonian system is \emph{not} ergodic on the surface at constant energy (this is a consequence of Theorem \ref{thm:ErgodicEquivalence}-(c)).

The sketch of the proof of the Theorem is contained in Kolmogorov's paper \cite{Kolmogorov1954OnCO} and then complete proofs are contained in Arnol'd's paper \cite{Arnold1963} and Moser's paper \cite{Moser:430015}. The proof of KAM theorem using the tools of Subsection \ref{sec:IntegrablePerturbations} can be found in \cite{Benettin1984}.

What happens outside of the set $B_\varepsilon$ is also interesting: the actions evolve at most exponentially slow. This is a consequence of Nekhoroshev theorem

\begin{theorem}[Nekhoroshev]
	Let $H(\varphi,I)=h(I)+\varepsilon P(\varphi,I)$. Let us suppose
	\begin{itemize}
		\item[(i)] $h$ and $P$ are analytic on a complex neighborhood of $D=B \times \mathbb{T}^n$, with $B$ an open and bounded subset of $\mathbb{R}^n$
		\item[(ii)] $h$ is quasi-convex on $B$,
		 namely, for all $I \in B$, the unique solution in $v \in \mathbb{R}^n$ of the system 
		\begin{equation}
			\begin{cases}
				\nabla_I h(I) \cdot v = 0  \\
				 \partial^2_I h(I) v \cdot v = 0
			\end{cases}
		\end{equation}
		is $v=0$.
		\item[(iii)] $\inf_I |\nabla_I h(I)| >0$ on $B$.
	\end{itemize}
	Let further $a=\frac{1}{2n}$, then  there exists $C>0$, independent of $\varepsilon$, such that
	\begin{equation}
		|I\circ \Phi_H^t - I| \leq C \varepsilon^{a} \, , \qquad \forall |t| \leq \frac{C}{\varepsilon} e^{\frac{1}{\varepsilon^a}} \, .
	\end{equation}
\end{theorem}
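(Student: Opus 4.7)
The plan is to extend the single-step normal form analysis of Subsection~\ref{sec:IntegrablePerturbations} into an iterative scheme, and then combine it with a geometric decomposition of the action space (the \emph{geography of resonances}) and a confinement argument powered by quasi-convexity of $h$. The rough picture is that, using the flow technique of Lemma~\ref{lem:FlowCan}, one constructs a canonical transformation conjugating $H$ to a \emph{resonant normal form} $\tilde H = h(\tilde I) + \varepsilon h_1(\tilde I) + \dots + \varepsilon^N h_N(\tilde I, \tilde\varphi) + R_N(\tilde I, \tilde\varphi)$, where each $h_j$ depends on the angles only through the harmonics resonant with the local frequency $\omega(\tilde I)$, and $R_N$ is made exponentially small by optimizing the number of normal form steps.

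The \emph{analytic step} goes as follows. Fix an ultraviolet Fourier cutoff $K$ and a threshold $\delta$; at the $r$-th iteration, solve the homological equation $\mathcal{L}_{G_r} h + P_r^{\mathrm{nr}} = 0$, where $P_r^{\mathrm{nr}}$ collects the non-resonant harmonics $k$ with $|k|\le K$ and $|k\cdot\omega(I)|\ge\delta$, while the resonant harmonics are left in the new Hamiltonian. Working on a sequence of nested complex neighbourhoods of the real domain and applying Cauchy estimates, one obtains $\|G_r\|\lesssim \varepsilon/\delta$ with a controlled loss of analyticity at each step, and a remainder that grows factorially in $N$. Choosing $N\sim 1/\varepsilon^a$ and $K\sim 1/\varepsilon^a$ yields $\|R_N\|\lesssim \exp(-1/\varepsilon^a)$ on a slightly smaller complex domain.

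The \emph{geometric step} partitions $B$ according to the active resonance module $\Lambda_K(I)=\{k\in\mathbb{Z}^n : |k|\le K,\ |k\cdot\omega(I)|<\delta\}$. On each block, the normal form depends on the angles only through the combinations $k\cdot\tilde\varphi$ with $k\in\Lambda_K$, so Hamilton's equations give $\dot{\tilde I} \in \operatorname{span}\Lambda_K + O(\|R_N\|)$: the \emph{fast} drift is parallel to $\operatorname{span}\Lambda_K$, the \emph{slow} drift is exponentially small. Quasi-convexity of $h$, together with condition~(iii), is the crucial ingredient that bounds the fast drift: energy conservation forces $h(\tilde I(t)) = h(\tilde I(0)) + O(\|R_N\|)$, and quasi-convexity implies that the intersection of the level set $\{h=h(\tilde I(0))\}$ with the plane $\tilde I(0)+\operatorname{span}\Lambda_K$ has diameter $\lesssim \sqrt{|h(\tilde I(t))-h(\tilde I(0))|/\kappa}$, with $\kappa$ controlling the positive definite part of $\partial_I^2 h$ transverse to $\nabla_I h$. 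Hence the trajectory stays within $O(\varepsilon^a)$ of $\tilde I(0)$, and undoing the close-to-identity transformation transfers the bound back to the original actions.

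The main obstacle is the interlocking of the analytic and geometric estimates: the block sizes, the cutoff $K$, the threshold $\delta$, and the iteration count $N$ must be tuned as coupled functions of $\varepsilon$, and one must verify by a bootstrap that throughout the full Nekhoroshev time $C\varepsilon^{-1}\exp(\varepsilon^{-a})$ the trajectory does not escape the block on which its normal form is valid — a fact that is itself ensured by the quasi-convex confinement, closing the circle. The exponent $a=1/(2n)$ emerges from optimizing the geometric losses (which scale like a power of $\delta$ with exponent controlled by the maximal dimension $n$ of the resonance module) against the analytic gains from the iterative normal form; this balance, and the need to keep the diophantine condition $|k\cdot\omega|\ge\delta$ compatible with the Fourier truncation $|k|\le K$, is the delicate technical core of the proof.
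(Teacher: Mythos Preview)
The paper does not actually prove this theorem: after the statement it only comments on the hypotheses and then \emph{refers} the reader to the literature, distinguishing two strategies --- Nekhoroshev's original one based on the geography of resonances (with the optimal-exponent version in \cite{Guzzo2016} and clear expositions in \cite{Pschel1993,Ben_Gal_Gio,Giorgilli2022}) and Lochak's alternative via simultaneous Diophantine approximation \cite{Lochak1992,Lochak1992-B}. So there is no in-text proof to compare against.

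Your sketch is a faithful outline of the first of these two strategies: the analytic part (iterated Lie transforms with Fourier cutoff $K$, small-divisor threshold $\delta$, nested complex domains, factorial remainder optimised at $N\sim\varepsilon^{-a}$) and the geometric part (partition of $B$ by resonance modules $\Lambda_K(I)$, fast drift confined to $\operatorname{span}\Lambda_K$, slow drift exponentially small) are exactly the Nekhoroshev--P\"oschel architecture, and your use of quasi-convexity to bound the fast drift via energy conservation on the resonant plane is the standard confinement mechanism in the quasi-convex case. The bootstrap you flag --- checking that the orbit stays in the block where its normal form was built --- is indeed the delicate closing argument. For completeness you might note that Lochak's approach bypasses the geography of resonances entirely, covering action space by neighbourhoods of periodic orbits (rational frequencies) and exploiting Dirichlet's theorem on simultaneous approximation; in the quasi-convex case it is technically lighter and yields the same exponent $a=1/(2n)$, which is why the paper lists it as a genuine alternative.
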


The original version of Nekhoroshev theorem relies on a more general hypothesis. Instead of quasi-convexity, item (ii) can be replaced by the more general condition of \emph{steepness}. Also, the exponentially long time is due to the regularity assumption on the Hamiltonian. Less regular Hamiltonians would yield conservation times that are power-law, see \cite{Marco2003,Bounemoura2011,Barbieri2022,Bambusi2021}. Concerning the proof of the result, there are usually two different strategies: one is the original one by Nekhoroshev (see \cite{Nekhoroshev1977,Nek79}) and, in the version with the optimal exponents, \cite{Guzzo2016}. For clear proofs, not in the steep case, we also mention  \cite{Pschel1993,Ben_Gal_Gio,Giorgilli2022}. On the other side, a different proof was provided by Lochack in \cite{Lochak1992,Lochak1992-B}.

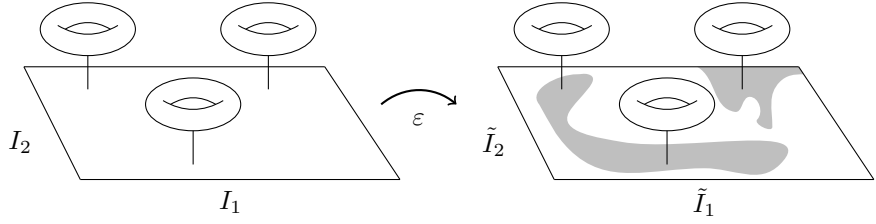
\begin{figure}[h!]

\begin{center}
	\begin{tikzpicture}
			
		\draw[-] (0,0) -- (4.25,0);
		\draw[-] (0,0) -- (-0.75,1.5);
		\draw[-] (-0.75, 1.5) -- (3.25, 1.5);
		\draw[-] (3.25,1.5) -- (4.25,0);

		\draw (0.1,2) ellipse (18pt and 10pt);
		\draw[-] (-0.3,2.05) to[out=330,in=210] (0.5,2.05);
		\draw[-] (-0.2,2) to[out=30, in=150] (0.4,2);	
		\draw[-] (0.1,1.2) -- (0.1,1.65);
		
		\draw (1.5,1) ellipse (18pt and 10pt);
		\draw[-] (1.1,1.05) to[out=330,in=210] (1.9,1.05);
		\draw[-] (1.2,1) to[out=30, in=150] (1.8,1);	
		\draw[-] (1.5,0.2) -- (1.5,0.65);		


		\draw (2.5,2) ellipse (18pt and 10pt);
		\draw[-] (2.1,2.05) to[out=330,in=210] (2.9,2.05);
		\draw[-] (2.2,2.0) to[out=30, in=150] (2.8,2.0);	
		\draw[-] (2.5,1.2) -- (2.5,1.65);
		
		\node at (2,-0.3) {$I_1$};
		
		\node at (-0.8,0.5) {$I_2$};

\draw[->,thick] (4,1) to[out=40,in=140] (5,1);		
		
		\node at (4.5,0.8) {$\varepsilon$};

		\begin{scope}[xshift=6.3cm]
	

	\coordinate (A) at (-0.2,1);
	\coordinate (B) at (0.3,1.4);
	\coordinate (C) at (0.45,1.4);
	\coordinate (D) at (0.2, 0.9);
	\coordinate (E) at (0.9,0.5);
	\coordinate (F) at (3,0.2);
	\coordinate (G) at (2,0.15);
	\coordinate (H)	at (0.2,0.4);
	
	\coordinate (A2) at (3.25,1.5);
	\coordinate (B2) at (1.9,1.5);
	\coordinate (C2) at (2.4, 0.9);
	\coordinate (D2) at (2.6, 1.1);
	\coordinate (E2) at (2.7, 0.8);
	\coordinate (F2) at (2.9, 0.7);
	\coordinate (G2) at (3.0, 1.3);
	\coordinate (H2) at (3.316,1.4); 
	
	\fill[lightgray] (A2) to[out=180,in=0] (B2)
	to[out=-30, in=170] (C2)
	to[out=-10, in=180] (D2)
	to[out=0, in=45] (E2)
	to[out=-135, in=-90] (F2)
	to[out=90, in=-135] (G2)
	to[out=45, in=180] (H2)
	to (A2);

	\fill[lightgray] (A) to[out=130,in=180] (B)
	to[out=0,in=165] (C)
	to[out=-15,in=60] (D)
	to[out=-120,in=170] (E)
	to[out=-10, in=45] (F)
	to[out=-135, in=0] (G)
	to[out=180, in=-45] (H)
	to[out=135,in=-60] (A);

		\draw[-] (0,0) -- (4.25,0);
		\draw[-] (0,0) -- (-0.75,1.5);
		\draw[-] (-0.75, 1.5) -- (3.25, 1.5);
		\draw[-] (3.25,1.5) -- (4.25,0);

		\draw (0.1,2) ellipse (18pt and 10pt);
		\draw[-] (-0.3,2.05) to[out=330,in=210] (0.5,2.05);
		\draw[-] (-0.2,2) to[out=30, in=150] (0.4,2);	
		\draw[-] (0.1,1.2) -- (0.1,1.65);
		
		\draw (1.5,1) ellipse (18pt and 10pt);
		\draw[-] (1.1,1.05) to[out=330,in=210] (1.9,1.05);
		\draw[-] (1.2,1) to[out=30, in=150] (1.8,1);	
		\draw[-] (1.5,0.2) -- (1.5,0.65);

		\draw (2.5,2) ellipse (18pt and 10pt);
		\draw[-] (2.1,2.05) to[out=330,in=210] (2.9,2.05);
		\draw[-] (2.2,2.0) to[out=30, in=150] (2.8,2.0);	
		\draw[-] (2.5,1.2) -- (2.5,1.65);

		\node at (2,-0.3) {$\tilde{I}_1$};
		
		\node at (-0.8,0.5) {$\tilde{I}_2$};

	\end{scope}	
	\end{tikzpicture}
\end{center}

	\caption{Cartoon picture of the phase space of an integrable system and the foliation on invariant tori (left). On the right, the persistence of tori for small perturbations as a consequence of KAM theorem. The set $B_\varepsilon$ is unfaithfully represented by gray shaded regions in the space of actions. The white region in the $\tilde{I}_1$,$\tilde{I}_2$ plane is a cartoon picture of the Arnol'd web where actions evolve exponentially slow as a consequence of Nekhoroshev theorem.}
\end{figure}

From our perspective, KAM theorem says that, if the size of the perturbation is small enough and the other hypotheses of the theorem are satisfied, then no thermalization can take place because the Hamiltonian system is not ergodic and, therefore, not mixing. It is debated whether the smallness threshold vanishes or not in the thermodynamic limit.

\subsection{Infinite dimensional systems}\label{subsec:InfiniteDimHamSystem}
In this section, we briefly examine how the constructions above extend to the infinite-dimensional case. (The material here is taken from \cite{Gallone2022} and from the classical textbooks \cite{Dubrovin1984,Marsden1999,Gelfand2000-fx}.) Here, we will restrict ourselves to the scenario where the spatial variable lies on the one-dimensional torus $\mathbb{T}^1$ and time on $\mathbb{R}$. Furthermore, we will consider only analytic functions. Needless to say, the general theory \emph{does not} require all these properties.

In this framework, the phase space $\Gamma$ is a space of functions and an element can be given by one or more functions, e.g.~$(Q,P) \in \Gamma$. Hamiltonians are functionals of the phase space of the form
\begin{equation}
	\mathscr{H}(Q,P)\,=\, \int_{\mathbb{T}^1} H(x,Q,P,Q_x,P_x,\dots) \, dx
\end{equation}
where $H$ is a function, called \emph{density of the functional $\mathscr{H}$}, that depends on $Q, P$ and their derivatives (up to a finite order).

In this setting, the gradient is replaced by the \emph{functional derivatives}:
\begin{equation}
	\frac{\delta \mathscr{H}}{\delta Q}(Q,P) \, := \, \sum_{j \geq 1} (-1)^j \frac{d^j}{d x^j} \frac{\partial \mathscr{H}}{\partial Q_{jx}}(x,Q,P) \, ,
\end{equation}
where we denoted $Q_{jx}=\partial_x^j Q$. Then, analogously to the finite dimensional case, Hamilton equations require the definition of the \emph{Poisson tensor} $J$. We will consider only the following Poisson tensors:
\begin{equation}\label{eq:PoissonTensors}
	J_E\,=\, \begin{pmatrix}
		\mathbb{O} & \mathbbm{1} \\
		-\mathbbm{1} & \mathbb{O}
	\end{pmatrix} \, , \qquad J_{\mathrm{G}} \, = \, \begin{pmatrix}
		\partial_x & \mathbb{O} \\
		\mathbb{O} & - \partial_x
	\end{pmatrix} \, , \qquad J_{\mathrm{K}} \, = \, \partial_x \, .
\end{equation}
Note that, in general, one does not require Poisson tensors to be non-degenerate. Indeed, in general, there are functions whose $L^2$-gradient is in the kernel of $J$. Those special functions are called \emph{Casimir invariants}.

Once a Poisson tensor is given, e.g.~$J_E$, equations of motion are defined as
\begin{equation}
	\begin{split}
		\begin{pmatrix}
			Q_t(x,t) \\
			P_t(x,t)
		\end{pmatrix} \, = \, J_{E} \begin{pmatrix}
			\frac{\delta \mathscr{H}}{\delta Q}(x,Q,P) \\
			\frac{\delta \mathscr{H}}{\delta P}(x,Q,P)
		\end{pmatrix} \, .
	\end{split}
\end{equation}
To have a more compact notation, we can collect $Z=(Q,P)$ and define the $L^2$-gradient as
\begin{equation}
	\nabla_Z \mathscr{H}(Z) \, = \, \begin{pmatrix}
		\frac{\delta \mathscr{H}}{\delta Q}(x,Q,P) \\
		\frac{\delta \mathscr{H}}{\delta P}(x,Q,P)
	\end{pmatrix} \, .
\end{equation}
We can define the Poisson bracket associated to the Poisson tensor $J$ as
\begin{equation}
	\{\mathscr{F}(Z),\mathscr{G}(Z)\}_J \, = \, \int_{\mathbb{T}^1} \big[(\nabla_Z \mathscr{F})(x,Z)\big]^T \, J  \, \nabla_Z \mathscr{G}(x,Z) \, dx \, .
\end{equation}
Then, all the results of the previous sections hold, apart from the Liouville-Arnol'd-Jost theorem. Indeed, integrability in infinite dimension is a delicate issue and it is not known whether each integrable Hamiltonian admits a transformation in action-angle variables.

Canonical transformations are those transformations of the phase space which preserve the Poisson bracket structure and, analogously to Lemma \ref{lem:FlowCan}, the flow of any Hamiltonian is a canonical transformation.

The formal construction of perturbation theory is the same as in the finite dimensional case, without the requirement of having an initial Hamiltonian in action-angle coordinates. Starting from
\begin{equation}
	\mathscr{H}(Q,P) \, = \, \mathscr{H}_0(Q,P) + \varepsilon \mathscr{P}(Q,P)
\end{equation} 
we need to drop the assumption that $\mathscr{H}_0$ is written in action-angle coordinates. We will require, instead, that the equations of motion of $\mathscr{H}_0$ are exactly solvable. Thus, repeating the steps of \ref{sec:IntegrablePerturbations}, the core of our perturbative analysis is the \emph{homological equation}
\begin{equation}\label{eq:HomologicalInfiniteDim}
	\{\mathscr{H}_0 , \mathscr{G}_1\} + \mathscr{P}_1 \, = \, \mathscr{H}_1
\end{equation}
in the unknowns $\mathscr{G}_1$ and $\mathscr{H}_1$. The additional requirement on $\mathscr{H}_1$ we would like to fulfill is the fact that $\mathscr{H}_1$ is in normal form with respect to $\mathscr{H}_0$, that is $\{\mathscr{H}_0,\mathscr{H}_1\}=0$. In the most general case, equation \eqref{eq:HomologicalInfiniteDim} is not easy to solve, but in case $\mathscr{H}_0$ has a \emph{periodic flow}, homological equation can be solved explicitly.

\begin{definition}\label{eq:TimeAverageH0}
	If $\mathscr{H}_0$ has a periodic flow of period $T>0$, the time average of a functional $\mathscr{F}$ along the flow of $\mathscr{H}_0$ is denoted by
	\begin{equation}\label{eq:TimeAverageDef}
		[\mathscr{F}]_{\mathscr{H}_0}(Z) \, := \, \frac{1}{T} \int_0^T (\mathscr{F} \circ \Phi_{\mathscr{H}_0}^s)(Z) \, ds \, .
	\end{equation}
\end{definition}

\begin{lemma}\label{lem:ConstantMotionAverage}
	Let $\mathscr{F}$ be a functional on $\Gamma$ and $\mathscr{H}_0$ be a Hamiltonian with periodic flow. Then,
	\begin{equation}
		\mathcal{L}_{\mathscr{H}_0} \big([\mathscr{F}]_{\mathscr{H}_0} \big)=0 \, .
	\end{equation}
\end{lemma}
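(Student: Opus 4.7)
The plan is to unfold the Lie derivative into a derivative of the averaged functional along the flow of $\mathscr{H}_0$, and then exploit $T$-periodicity of the flow to see that the averaged functional is flow-invariant.

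More concretely, by Definition~\ref{def:PoissonBracket} and the chain-rule identity $\dot{F}(t) = \{F,\mathscr{H}_0\}\circ\Phi_{\mathscr{H}_0}^t$ used earlier, I would first rewrite
\begin{equation*}
\mathcal{L}_{\mathscr{H}_0}\big([\mathscr{F}]_{\mathscr{H}_0}\big)(Z) \,=\, \{[\mathscr{F}]_{\mathscr{H}_0}, \mathscr{H}_0\}(Z) \,=\, \frac{d}{dt}\Big|_{t=0} \big([\mathscr{F}]_{\mathscr{H}_0} \circ \Phi_{\mathscr{H}_0}^t\big)(Z).
\end{equation*}
Then, using the group property $\Phi_{\mathscr{H}_0}^s\circ\Phi_{\mathscr{H}_0}^t = \Phi_{\mathscr{H}_0}^{s+t}$ from Definition~\ref{def:OneParameterGroup}, I would compute
\begin{equation*}
\big([\mathscr{F}]_{\mathscr{H}_0}\circ \Phi_{\mathscr{H}_0}^t\big)(Z) \,=\, \frac{1}{T}\int_0^T \big(\mathscr{F}\circ \Phi_{\mathscr{H}_0}^{s+t}\big)(Z)\,ds \,=\, \frac{1}{T}\int_t^{T+t} \big(\mathscr{F}\circ \Phi_{\mathscr{H}_0}^u\big)(Z)\,du,
\end{equation*}
after the substitution $u=s+t$.

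At this point I would invoke the hypothesis that the flow is $T$-periodic, so that the integrand $u\mapsto (\mathscr{F}\circ\Phi_{\mathscr{H}_0}^u)(Z)$ is itself $T$-periodic, and the integral over any interval of length $T$ coincides with $\int_0^T$. This immediately gives
\begin{equation*}
\big([\mathscr{F}]_{\mathscr{H}_0}\circ \Phi_{\mathscr{H}_0}^t\big)(Z) \,=\, [\mathscr{F}]_{\mathscr{H}_0}(Z) \qquad \forall\, t\in\mathbb{R},
\end{equation*}
so the right-hand side is independent of $t$, and differentiating at $t=0$ yields $\mathcal{L}_{\mathscr{H}_0}([\mathscr{F}]_{\mathscr{H}_0}) = 0$. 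Alternatively, one could differentiate under the integral sign and use the fundamental theorem of calculus together with $\Phi_{\mathscr{H}_0}^T=\Phi_{\mathscr{H}_0}^0=\mathbbm{1}$ to obtain the same conclusion via the telescoping $\mathscr{F}(\Phi_{\mathscr{H}_0}^T(Z))-\mathscr{F}(Z)=0$.

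There is no serious obstacle here: the only technical point is justifying the exchange of the $t$-derivative with the $s$-integral (or, in the first variant, the change of variables), but this is standard under the smoothness assumptions on $\mathscr{F}$ and on the flow $\Phi_{\mathscr{H}_0}^t$ that are implicit in the setup.
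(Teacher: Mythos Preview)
Your proof is correct and follows essentially the same approach as the paper: both arguments show that $[\mathscr{F}]_{\mathscr{H}_0}\circ\Phi_{\mathscr{H}_0}^t$ is independent of $t$ by using the group property, the substitution $s+t\mapsto u$, and $T$-periodicity of the flow, and then differentiate to conclude. The only cosmetic difference is that you begin by explicitly identifying $\mathcal{L}_{\mathscr{H}_0}$ with $\frac{d}{dt}\big|_{t=0}$, whereas the paper ends with that step.
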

\begin{proof}
	Composing on the left- and right-hand side of \eqref{eq:TimeAverageDef} with the flow of $\Phi_{\mathscr{H}_0}^t$ one gets
	\begin{equation}
		\text{(LHS)} \, = \, [\mathscr{F}]_{\mathscr{H}_0} (\Phi_{\mathscr{H}_0}^t(Z)) \, , \qquad \text{(RHS)} \, = \, \frac{1}{T} \int_0^T (\mathscr{F} \circ \Phi_{\mathscr{H}_0}^{s+t})(Z) \, ds \, .
	\end{equation}
	 Then, using periodicity of the flow and performing the change of coordinates $s+t=t'$, we get 
	 \begin{equation}
	 	[\mathscr{F}]_{\mathscr{H}_0}(\Phi^t_{\mathscr{H}_0}(Z)) \, = \, [\mathscr{F}]_{\mathscr{H}_0}(Z) \, ,
	 \end{equation}
	 which implies $\frac{d [\mathscr{F}]_{\mathscr{H}_0}}{dt}(Z)=0$ and, therefore \eqref{eq:TimeAverageDef}.
\end{proof}

\begin{proposition}\label{prop:Averaging}
	Let $\mathscr{H}_0$ be a Hamiltonian with periodic flow. Then, a solution to the homological equation \eqref{eq:HomologicalInfiniteDim} is given by
	\begin{eqnarray}
		\mathscr{H}_1 \!\!\!\!&=&\!\!\!\! [\mathscr{P}]_{\mathscr{H}_0} \\
		\mathscr{G}_1 \!\!\!\!&=& \!\!\!\! \frac{1}{T} \int_0^T (s-T) \big(\mathscr{P}-[\mathscr{P}]_{\mathscr{H}_0} \big) \circ \Phi^s_{\mathscr{H}_0} \, ds \, . \label{eq:G1Calcolo}
	\end{eqnarray}
\end{proposition}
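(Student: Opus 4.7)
The plan is to verify the homological equation $\{\mathscr{H}_0,\mathscr{G}_1\}+\mathscr{P}=[\mathscr{P}]_{\mathscr{H}_0}$ by direct computation. The key observation I would exploit is that because $\Phi^s_{\mathscr{H}_0}$ is canonical (by the infinite-dimensional analogue of Lemma \ref{lem:FlowCan}) and $\mathscr{H}_0$ is invariant along its own flow, one has the identity
\begin{equation*}
\{\mathscr{H}_0,\mathscr{F}\circ\Phi^s_{\mathscr{H}_0}\}
=\{\mathscr{H}_0\circ\Phi^s_{\mathscr{H}_0},\mathscr{F}\circ\Phi^s_{\mathscr{H}_0}\}
=\{\mathscr{H}_0,\mathscr{F}\}\circ\Phi^s_{\mathscr{H}_0}
=-\frac{d}{ds}(\mathscr{F}\circ\Phi^s_{\mathscr{H}_0}),
\end{equation*}
where the last equality follows from $\frac{d}{ds}(\mathscr{F}\circ\Phi^s_{\mathscr{H}_0})=\{\mathscr{F},\mathscr{H}_0\}\circ\Phi^s_{\mathscr{H}_0}$, the definition of the Hamiltonian flow.

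With this in hand, set $\tilde{\mathscr{P}}:=\mathscr{P}-[\mathscr{P}]_{\mathscr{H}_0}$, pull the Poisson bracket under the integral in \eqref{eq:G1Calcolo}, and apply the above identity to get
\begin{equation*}
\{\mathscr{H}_0,\mathscr{G}_1\}
=-\frac{1}{T}\int_0^T (s-T)\,\frac{d}{ds}\bigl(\tilde{\mathscr{P}}\circ\Phi^s_{\mathscr{H}_0}\bigr)\,ds.
\end{equation*}
Integration by parts is then immediate: the boundary term at $s=T$ vanishes thanks to the factor $(s-T)$, the one at $s=0$ produces $-\tilde{\mathscr{P}}$ (using $\Phi^0_{\mathscr{H}_0}=\mathbbm{1}$), and the remaining integral is exactly the time average $[\tilde{\mathscr{P}}]_{\mathscr{H}_0}$ defined in \eqref{eq:TimeAverageDef}.

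To conclude, I would invoke Lemma \ref{lem:ConstantMotionAverage}: since $[\mathscr{P}]_{\mathscr{H}_0}$ is already constant along the flow of $\mathscr{H}_0$, its time average equals itself, whence $[\tilde{\mathscr{P}}]_{\mathscr{H}_0}=[\mathscr{P}]_{\mathscr{H}_0}-[\mathscr{P}]_{\mathscr{H}_0}=0$. Putting everything together,
\begin{equation*}
\{\mathscr{H}_0,\mathscr{G}_1\}+\mathscr{P}=-\tilde{\mathscr{P}}+\mathscr{P}=[\mathscr{P}]_{\mathscr{H}_0}=\mathscr{H}_1,
\end{equation*}
while $\{\mathscr{H}_0,\mathscr{H}_1\}=0$ follows a second time from Lemma \ref{lem:ConstantMotionAverage}, so $\mathscr{H}_1$ is genuinely in normal form.

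The main obstacle is essentially bookkeeping: keeping sign conventions straight when swapping between $\mathcal{L}_{\mathscr{H}_0}$ and $\{\mathscr{H}_0,\cdot\}$, and rigorously justifying the interchange of functional derivative (hence Poisson bracket) with the $s$-integral. The latter requires enough regularity of the density of $\mathscr{P}$ and of the flow $\Phi^s_{\mathscr{H}_0}$ to differentiate under the integral; in the analytic setting adopted in Subsection \ref{subsec:InfiniteDimHamSystem} this is standard, but it is the only nontrivial analytic point of the argument — the rest is an identity of the same algebraic flavor as the finite-dimensional averaging formula.
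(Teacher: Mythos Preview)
Your proof is correct and uses the same core identity as the paper, namely $\{\mathscr{H}_0,\mathscr{F}\circ\Phi^s_{\mathscr{H}_0}\}=-\frac{d}{ds}(\mathscr{F}\circ\Phi^s_{\mathscr{H}_0})$, together with integration by parts against the weight $(s-T)$. The difference is one of direction: you proceed by \emph{direct verification}, plugging the given $\mathscr{G}_1$ into the left-hand side and computing, whereas the paper proceeds by \emph{derivation}, starting from the homological equation itself, composing with the flow, multiplying by $(s-T)$, integrating, and solving for $\mathscr{G}_1$ under the normalization $[\mathscr{G}_1]_{\mathscr{H}_0}=0$. Your route is slightly cleaner for proving the proposition as stated (the formula is given, so verification suffices); the paper's route has the advantage of explaining where the formula comes from and making the normalization condition visible. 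Computationally the two are mirror images of each other.
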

\begin{proof} 
	Starting from the homological equation \eqref{eq:HomologicalInfiniteDim}, applying $\Phi_{\mathscr{H}_0}^t$ on both sides and integrating over $[0,T]$, one has
	\begin{equation}
		\int_0^T \{\mathscr{H}_0, \mathscr{G}_1\}(\Phi^t_{\mathscr{H}_0})(Z)) \, dt + T [\mathscr{P}_1]_{\mathscr{H}_0} \, = \, T \mathscr{H}_1(Z) \, ,
	\end{equation}
	where we used the fact that $\mathscr{H}_1(\Phi_{\mathscr{H}_0}^t(Z)) = \mathscr{H}_1(Z)$ (requirement of invariance of $\mathscr{H}_1$ along the flow of $\mathscr{H}_0$, Lemma \ref{lem:ConstantMotionAverage}) and the definition of time-average \eqref{eq:TimeAverageDef}. Then, by noting that $\frac{d}{ds} \mathscr{G}_1(\Phi^s_{\mathscr{H}_0}(Z))=-\{\mathscr{H}_0,\mathscr{G}_1\}$, we have
	\begin{equation}
		\int_0^T \{\mathscr{H}_0,\mathscr{G}_1\}(\Phi^t_{\mathscr{H}_0}(Z)) \, dt = \int_0^T \frac{d}{dt} \mathscr{G}_1(\Phi^t_{\mathscr{H}_0}(Z)) \, dt = \mathscr{G}_1(Z)-\mathscr{G}_1(Z)=0
	\end{equation}
	where in the last step we used the periodicity of the flow of $\mathscr{H}_0$. Then, the fact that $[\mathscr{P}_1]_{\mathscr{H}_0}$ commutes with $\mathscr{H}_0$ follows from Lemma \ref{lem:ConstantMotionAverage}.
	
	To obtain \eqref{eq:G1Calcolo} we multiply left- and right- hand side of \eqref{eq:HomologicalInfiniteDim} by $(s-T)$ and compute both sides at $\Phi_{\mathscr{H}_0}^{s}(Z)$. This yields
	\begin{equation}\label{eq:VienDiNotte}
		(s-T) \, \{\mathscr{H}_0,\mathscr{G}_1\}(\Phi_{\mathscr{H}_0}^s(Z))\, + (s-T) \mathscr{P}_1(\Phi_{\mathscr{H}_0}^s(Z)) = (s-T) \mathscr{H}_1(Z) \, , 
	\end{equation}
	and, on the left-hand side, we have
	\begin{equation}\label{eq:LaBefana}
	 (s-T) \, \{\mathscr{H}_0,\mathscr{G}_1\}(\Phi_{\mathscr{H}_0}^t(Z)) \, = \, -(s-T) \, \frac{d}{ds} \mathscr{G}_1(\Phi^s_{\mathscr{H}_0}(Z)) \, .
	\end{equation}
	Inserting \eqref{eq:LaBefana} in \eqref{eq:VienDiNotte} and integrating from $0$ to $T$, we have
	\begin{equation}
		\begin{split}
			-\int_0^T (s-T) \frac{d}{ds} \mathscr{G}_1(\Phi_{\mathscr{H}_0}^s(Z)) \, ds \, &= \, \int_0^T \mathscr{G}_1(\Phi_{\mathscr{H}_0}^s(Z)) \, ds \, \\
			&= \, - T \mathscr{G}_1(Z) + \int_0^T \mathscr{G}_1(\Phi_{\mathscr{H}_0}^s(Z)) \, ds
		\end{split}
	\end{equation}
	and, requiring that $[\mathscr{G}_1]_{\mathscr{H}_0}(Z)=0$, we obtain \eqref{eq:G1Calcolo}, which completes the proof.
\end{proof}

We will use these results later in Subsection \ref{subsect:KdVResonantNF} to derive the Korteweg-de Vries equation from a continuous interpolation of the FPUT lattice.

\section{The Fermi-Pasta-Ulam-Tsingou chain}
The Fermi-Pasta-Ulam-Tsingou model or, briefly FPUT, is a Hamiltonian system on $\mathbb{R}^{2N}$ with
\begin{equation}
	H_{\mathrm{F}}(q,p) \, = \, \sum_{j \in \mathbb{Z}_N} \Big[ \frac{p_j^2}{2} + \Phi_{\mathrm{F}}(q_{j+1}-q_j) \Big]
\end{equation}
where $\mathbb{Z}_N=\mathbb{Z}/N$  (i.e.~ $\mathbb{Z}$ with the equivalence relation $j\sim j+N$) and
\begin{equation}\label{eq:FPUT-Potential}
	\Phi_{\mathrm{F}}(z) \, = \, \frac{z^2}{2}+ \alpha \frac{z^3}{3} + \beta \frac{z^4}{4} \, .
\end{equation}
In this section, we will first derive the model from a realistic Hamiltonian system. Then, we will discuss the content of the FPUT report, the formulation of the FPUT paradox and the subsequent numerical experiments and attempts of explanation. We conclude by discussing the relation of the FPUT paradox and KAM theorem.

\subsection{Derivation of the model} In this section we give a derivation of the model used by Fermi and co-workers in their numerical simulation. This is pedagogical for two reasons: one the one side, it explicitly shows what is the physical meaning of the model; on the other side it permits to obtain explicitly the value of the parameters of the FPUT model starting from realistic potentials. This derivation clarifies the relation with other realistic models considered in the literature, see e.g.~\cite{Galgani1972,Benettin2023}.

A good description of the model can be found in the original FPUT report \cite{FPU55}:
\begin{quote}
	We imagine a one-dimensional continuum with the ends kept fixed and with forces acting on the elements of the string. In addition to the usual linear term expressing the dependence of the force on the displacement of the element, this force contains higher order terms. For the purpose of the numerical work this continuum, is replaced by a finite number of points (at most 64 in our actual computation) so that the partial differential equation defining the motion of this string is replaced by a finite number of total differential equations. We have, therefore, a dynamical system of 64 particles with forces acting between neighbors with fixed endpoints.
\end{quote}

The model of ``discretized string'' of length $L$ we have in mind consists of $N$ particles interacting with their nearest neighbors. Differently from the original FPUT model, we put periodic boundary conditions, but this will not change dramatically the derivation. The case of fixed ends can always be reconstructed by following the dynamics of antisymmetric initial data. The state of the particle labeled by $j$ is characterized by a position $x_j \in \mathbb{R}$ and its momentum $v_j \in \mathbb{R}$ (here, to avoid confusion, we denote the physical momentum by $v$). If we suppose that each particle has mass $m>0$, the Hamiltonian of the model is
\begin{equation}\label{eq:HamPhysVar}
	H_{\text{string}}(x,v) \,=\, \sum_{j \in \mathbb{Z}_N} \Big[\frac{v_j^2}{2m}+\Phi(x_{j+1}-x_j) \Big] \, ,
\end{equation}
where $\mathbb{Z}_N=\mathbb{Z}/N$ and $\Phi$ is the interaction potential that depends on the inter-particle interaction. Possible choices could be the harmonic one, Morse potential, Lennard-Jones etc.

To be a good model of a string, the Hamiltonian \eqref{eq:HamPhysVar} must have an equilibrium when all particles are such that $v_j=0$ and $x_j=j a$. Indeed, this corresponds to the state where all particles are equispaced at distance $a$, that has the meaning of \emph{lattice spacing}, with
\begin{equation}
	a\,=\, \frac{L}{N} \, .
\end{equation}
We can now introduce the variables $q_j$ that are the displacements from the equilibrium as
\begin{equation}
	x_j \, =\, a q_j + j \frac{L}{N} \, ,
\end{equation}
which states also that, in physical dimensions, if $x_j$ has the dimension of a length, $q_j$ is dimensionless. To get a dimensionless momentum $p_j$, we define
\begin{equation}
	v_j \, = \, \frac{ma}{\tau} p_j \, ,
\end{equation}
where $\tau \in \mathbb{R}$ is a \emph{natural time-scale}, which has the dimension of a time and whose value is going to be fixed later.

Last, to get a dimensionless time variable $T$, it is natural to set
\begin{equation}
	t\,=\, T \, \tau \, .
\end{equation}

Using that the transformation of coordinates
\begin{equation}
	(q,p,H,T) \mapsto (x,v,H_{\mathrm{string}},t)=(\alpha q, \beta p, \gamma H, \delta T)
\end{equation}
is a canonical rescaling (Definition \ref{def:CanonicalRescaling}) if and only if $\alpha \beta = \gamma \delta$, we obtain, as a consequence of Lemma \ref{lem:CanResc}
\begin{equation}
	\gamma=\frac{ma^2}{\tau^2} \, ,\qquad H_{\mathrm{string}}(x(q),v(p)) \, = \, \frac{m a^2}{\tau^2} H(q,p) \, .
\end{equation}
The explicit computation yields
\begin{equation}\label{eq:ContoHamiltonianaFPUT}
	H(q,p) \, = \, \sum_{j \in \mathbb{Z}_N} \Big[ \frac{p_j^2}{2}+\frac{\tau^2}{ma^2}\Phi\big(a+a(q_{j+1}-q_j)\big)\Big] \, .
\end{equation}
Expanding for $q_j$ small, we get
\begin{equation}
	\begin{split}
	\Phi\big(a+a(q_{j+1}-q_j)\big) \, &= \, \Phi(a)+a\Phi'(a)(q_{j+1}-q_j)+\frac{a^2}{2} V''(a) (q_{j+1}-q_j)^2 \\
	&+\frac{a^3}{3!}V'''(a)(q_{j+1}-q_j)^3+\frac{a^4}{4!} V''''(a)(q_{j+1}-q_j)+\cdots
	\end{split}
\end{equation}
The term $\Phi(a)$ is a constant and therefore does not contribute to the dynamics, while due to periodic boundary conditions, we have $\sum_{j \in \mathbb{Z}_N}( q_{j+1}-q_j) = 0$, whence the second term is not present.\footnote{In case of fixed boundary conditions, $\sum_{j=2}^N(q_{j+1}-q_j)=q_N-q_1=0$.} Choosing now 
\begin{equation}
	\tau^2 \, = \, \frac{m}{V''(a)} \, ,
\end{equation}
Hamiltonian \eqref{eq:ContoHamiltonianaFPUT} reads
\begin{equation}
	\begin{split}
	H(q,p) \, &= \, \sum_{j \in \mathbb{Z}_N} \Big[ \frac{p_j^2}{2} + \frac{1}{2}(q_{j+1}-q_j)^2 + \frac{1}{3!} \frac{a V'''(a) }{V''(a)}(q_{j+1}-q_j)^3 \\
	& \qquad +\frac{1}{4!} \frac{a^2 V''''(a)}{V''(a)}(q_{j+1}-q_j)^4+\cdots\Big]
	\end{split}
\end{equation}
which is the FPUT Hamiltonian if we set
\begin{equation}
	\alpha\,=\, \frac{a V'''(a)}{2 V''(a)} \, , \qquad \beta\,=\,\frac{a^2 V''''(a)}{6 V''(a)} \, .
\end{equation}

\subsection{Statistical mechanics of the harmonic chain}\label{subsec:StatMechHarmonic}
To control the approach to thermal equilibrium of the FPUT chain, one needs to find a meaningful indicator. A good starting point is to choose a set of observables and to compute their expected values at thermal equilibrium. Fermi and his collaborators decided to use the energies of the normal modes of oscillation. Indeed, if $\alpha=\beta=0$, the harmonic chain is integrable, and it is possible to prove that having defined
\begin{equation}\label{eq:DirectFourierLulu}
	\begin{split}
		\hat{q}_k \, &= \, \frac{1}{\sqrt{N}} \sum_{j \in \mathbb{Z}_N} q_j e^{2 \pi i k \frac{j}{N}} \, ,\\
		\hat{p}_k \, &= \, \frac{1}{\sqrt{N}} \sum_{k \in \mathbb{Z}_N} p_j e^{2 \pi i k \frac{j}{N}} \, ,
	\end{split}
\end{equation}
and
\begin{equation}\label{eq:OmegaK}
	\omega_k \, = \, \Big| 2 \sin \big( {\textstyle \frac{\pi k}{2 N}} \big) \Big| \, = \, |2 \sin \big( {\textstyle \frac{\pi}{2} h k } \big)| ,
\end{equation}
the quantities  
\begin{equation}
    E_k\,=\, \frac{1}{2} \Big( |\hat{p}_k|^2 + \omega_k^2 |\hat{q}_k|^2 \Big) \, 
\end{equation}
are conserved for all times.  

One of the consequences of statistical mechanics, particularly interesting because it allows for highly successful (and occasionally unsuccessful\footnote{If applied universally, the principle of equipartition of energy leads to results that contradict experiments, such as the Rayleigh-Jeans law or incorrect values for the specific heat of solids at low temperatures. The failure of these predictions, however, is due to the quantum nature of matter. Indeed, the Rayleigh-Jeans result led Max Planck to introduce the hypothesis of light quanta. The interested reader is referred to the chapter ``Equipartition and Critique'' in \cite{Gallavotti1999}.}) quantitative predictions, is the \emph{principle of equipartition of energy}. Without delving into a general and detailed treatment, we will illustrate it here in the case of the harmonic chain as follows.

\begin{lemma}[Equipartition of energy]\label{lem:EquipartitionOfEnergy}
	For the harmonic chain, one has	
	\begin{equation}
		 \lim_{\substack{E \to +\infty \\ N \to +\infty \\ E/N=\epsilon}}\langle E_k \rangle_{\mu_{\mathrm{mc}}(E,N)}\, = \, \epsilon \, .
	\end{equation}
\end{lemma}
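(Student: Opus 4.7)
The plan is to pass to the Fourier (normal mode) variables, in which the harmonic Hamiltonian decouples, and then to compute the microcanonical average by a direct integration, relying on the fact that after the change of coordinates the phase space is foliated by products of ellipses of prescribed energies.

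First, I would observe that the canonical transformation \eqref{eq:DirectFourierLulu} brings the harmonic Hamiltonian (i.e.~$H_{\mathrm F}$ with $\alpha=\beta=0$) into the form
\begin{equation*}
  H\;=\;\sum_{k=1}^{n} E_k,\qquad E_k=\tfrac{1}{2}\bigl(|\hat p_k|^2+\omega_k^2|\hat q_k|^2\bigr),
\end{equation*}
with $n=N-1$ (the zero mode, $\omega_0=0$, describes the free drift of the center of mass and decouples from the dynamics on the energy surface; I would separate it off explicitly at the start so that the remaining $n$ modes are genuine harmonic oscillators with $\omega_k>0$). The phase space for the oscillatory part is then canonically isomorphic to $\mathbb{R}^{2n}$ equipped with the flat Lebesgue measure.

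Next, I would carry out the microcanonical computation for $n$ independent harmonic oscillators of frequencies $\omega_1,\dots,\omega_n$. Rescaling $Q_k=\omega_k\hat q_k$, $P_k=\hat p_k$ turns $\{H\le E\}$ into the Euclidean ball of radius $\sqrt{2E}$ in $\mathbb{R}^{2n}$, and its volume is
\begin{equation*}
  \mathrm{vol}(\{H\le E\})\;=\;\frac{(2\pi)^n E^{n}}{n!\,\omega_1\cdots\omega_n}.
\end{equation*}
Differentiating in $E$ yields the microcanonical normalization $Z_{\mathrm{mc}}(E,n)$. To compute $\langle E_k\rangle_{\mu_{\mathrm{mc}}(E,N)}$ I would slice the energy surface according to the value $\xi$ of $E_k$: the remaining $n-1$ oscillators then live on the level set $\{H_{\neq k}=E-\xi\}$, whose $(2n-3)$-volume is explicit by the same formula. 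A standard Beta-function identity,
\begin{equation*}
  \int_0^E \xi\,(E-\xi)^{n-2}\,d\xi \;=\; \frac{E^n}{n(n-1)},
\end{equation*}
then gives, after all the $\omega_k$'s and factorials cancel,
\begin{equation*}
  \langle E_k\rangle_{\mu_{\mathrm{mc}}(E,N)}\;=\;\frac{E}{n}\;=\;\frac{E}{N-1}.
\end{equation*}

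Finally, under the joint limit $E,N\to+\infty$ with $E/N=\epsilon$, this expression tends to $\epsilon$, proving the claim. The calculation is robust and essentially does not depend on the values of the $\omega_k$; the main (minor) obstacle I anticipate is bookkeeping for the zero mode and, in the case of even $N$, the self-conjugate mode $k=N/2$, for which the reality constraint $\hat q_{N-k}=\overline{\hat q_k}$ forces a slightly different counting. These subtleties change $n$ by a bounded amount and hence do not affect the thermodynamic limit, but they should be mentioned so that the identification of $E_k$ with a genuine pair of real canonical variables is unambiguous. As a sanity check, I would also note that the same answer follows from the canonical ensemble via equipartition (each mode contributes $\beta^{-1}$, so $E=n\beta^{-1}$ and $\langle E_k\rangle_{\mathrm{can}}=E/n$), the two computations coinciding here exactly rather than merely in the thermodynamic limit.
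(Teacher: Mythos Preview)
Your proof is correct but takes a genuinely different route from the paper. The paper invokes ensemble equivalence and computes $\langle E_k\rangle$ in the \emph{canonical} ensemble at $\beta=1/\epsilon$: after the Fourier change of variables the partition function factorizes over modes, the single-mode integral is a Gaussian equal to $2\pi/(\beta\omega_k)$, and the log-derivative trick gives $\langle E_k\rangle_{\mathrm{can}}=1/\beta=\epsilon$ directly. You instead carry out the \emph{microcanonical} computation head-on, reducing to a simplex/Beta integral and obtaining the exact finite-$N$ identity $\langle E_k\rangle_{\mu_{\mathrm{mc}}}=E/n$ before passing to the limit. Your approach is more self-contained (no appeal to ensemble equivalence, which in the notes is stated but not proved) and yields a sharper pre-limit statement; the paper's approach is computationally lighter and showcases the equivalence-of-ensembles theme developed earlier in the section. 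Amusingly, the canonical calculation you sketch as a sanity check at the end is precisely the paper's proof. Your remarks on the zero mode and the $k=N/2$ reality constraint are apt; the paper's own proof is equally informal on this point.
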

\begin{proof}
Inverting \eqref{eq:DirectFourierLulu}, we get
	\begin{equation}
		\begin{split}
			q_j \,&=\, \frac{1}{\sqrt{N}} \sum_{k=1}^N \hat{q}_k e^{- 2\pi i k \frac{j}{N}} \, , \\
			p_j\,&=\, \frac{1}{\sqrt{N}} \sum_{k=1}^N \hat{p}_k e^{2 \pi i k \frac{j}{N}} \, ,
		\end{split}
	\end{equation}
	and therefore, since this is a linear map, we have $\prod_{j=1}^N dq_j dp_j= C_N \prod_{j=1}^N d \hat{q}_k \hat{p}_k$, for some explicit constant $C_N$. We can rely on ensemble equivalence and compute $\langle E_k \rangle_{\mu_{\mathrm{can}}(\beta,N)}$, with inverse temperature $\beta=\frac{1}{\epsilon}$. The canonical partition function is
	\begin{equation}
		\begin{split}
			Z_{\mathrm{can}}(\beta,N) \, &= \, \int_{\mathbb{R}^{2N}} dq dp  \, e^{-\beta H(N)} \, = \, C_N \int_{\mathbb{R}^{2N}} d \hat{q} d \hat{p} \, e^{-\beta \sum_{k=1}^N E_k(\hat{q}_k,\hat{p}_k)} \\
			&=C_N \prod_{k=1}^N \int d\hat{q}_k d\hat{p}_k e^{-\beta E_k(\hat{q}_k,\hat{p}_k)} \, . 
		\end{split} 
	\end{equation}
	Therefore,
	\begin{equation}
		\begin{split}
			\langle E_k \rangle_{\mu_{\mathrm{can}}(\beta,N)} \,&=\, \frac{C_N \int d \hat{q} d \hat{p} \, E_k(\hat{q}_k,\hat{p}_k) \, e^{-\beta \sum_{k=1}^N E_k(\hat{q}_k,\hat{p}_k)}}{C_N \int d\hat{q} d \hat{p} \, e^{- \beta \sum_{k=1}^N E_k(\hat{q},\hat{p})}} \\
			&=\, \frac{\int d\hat{q} d \hat{p} \, E_k(\hat{q}_k,\hat{p}_k) \, e^{-\beta E_k(\hat{q}_k,\hat{p}_k)}}{\int d\hat{q} d \hat{p} \, e^{-\beta E_k(\hat{q}_k,\hat{p}_k)}} \\
			&=\,-\frac{\partial}{\partial \beta} \ln \Big(\int d \hat{q}_k d \hat{p}_k \, e^{-\beta E_k(\hat{q}_k,\hat{p}_k)} \Big) \, .
		\end{split}
	\end{equation}
	The last integral can be solved explicitly
	\begin{equation}
		\int d \hat{q}_k d \hat{p}_k \, e^{-\beta E_k(\hat{q}_k,\hat{p}_k)} \, = \, \frac{2 \pi}{\beta \omega_k} \, ,
	\end{equation}
	and using that $\beta=\frac{1}{\epsilon}$ completes the proof.
\end{proof}

Later on, it was originally suggested by Livi, Pettini, Ruffo, Sparpaglione and Vulpiani (see Eq.~(12) in \cite{Livi1985}) that a good indicator to detect equipartition of energy is the so-called \emph{spectral entropy} defined as
\begin{equation}
	\eta(t) \,=\, - \sum_{k=1}^{N/2} \nu_k(t)  \ln \nu_k(t)
\end{equation}
with
\begin{equation}\label{eq:SpectralEntropy}
 \qquad \nu_k(t)\,=\, \frac{\overline{E_k}(t)}{E}
\end{equation}
where the bar denotes the time-average over the interval $[0,t]$. We remark here that a part of the literature uses a different spectral entropy based on $\nu_k(t)= \frac{E_k(t)}{E}$.

The number of effectively excited modes is then defined as
\begin{equation}\label{eq:Nex}
	N_{\mathrm{excited}}(t) \,=\, e^{\eta(t)} \, .
\end{equation}
Since at thermal equilibrium all normal modes have the same energy, one sees from \eqref{eq:SpectralEntropy} that when all $E_k$'s are constant and equal among them, then at equipartition $N_{\mathrm{excited}} = N$. The definition of effectively excited modes is reasonable because if an energy $E$ is equally distributed among $n$ modes (that is, each of the modes has either energy $E/n$ or $0$), one has $N_{\mathrm{excited}}=n$.

\subsection{FPUT's original numerical experiment and the paradox}

To describe the FPUT experiment, the motivations and the results, there are no better words than Ulam's introduction to the Report \cite{FPU55}:

\begin{quote}
	After the war, during one of his frequent summer visit to Los Alamos, Fermi became interested in the development and potentials of the electronic computing machines. He held many discussions with me on the kind of future problems which could be studied through the use of such machines. We decided to try a selection of problems for heuristic work where in absence of closed analytic solutions experimental work on a computing machine would perhaps contribute to the understanding of properties of solutions. This could be particularly fruitful for problems involving the asymptotic-long time or ``in the large'' behavior of non-linear physical systems. In addition, such experiments on computing machines would have at least their postulates clearly stated. This is not always the case in actual physical object or model where all the assumptions are not perhaps explicitly recognized.
	
	[\dots] The plan was then to start with the possibly simplest such physical models and to study the results of the calculation of its long-time behavior. Then one would gradually increase the generality and the complexity of the problem calculated on the machine. The Los Alamos report \textsc{la}-1940 presents the results of the very first such attempt. [\dots] These were to be studied preliminary to setting up ultimate models for motions of system where ``mixing'' and ``turbulence'' would be observed. The motivation was to observe the \emph{rates} of mixing and ``thermalization'' with the hope that the calculational results would provide hints for a future theory. One could venture a guess that one of the motive in the selection of problems could be traced to Fermi's earliest interest in the ergodic theory. [\dots]
	
	The results of the calculations (performed on the old \textsc{Maniac} machine) were interesting and quite surprising to Fermi. He expressed the opinion that they really constituted a little discovery in providing intimations that the prevalent beliefs in the universality of ``mixing and thermalization'' in non-linear systems may not be always justified.
\end{quote}

Fermi and collaborators considered the model
\begin{equation}
	H_{\mathrm{F}}(q,p) \, = \, \sum_{j=2}^{N=64} \Big[\frac{p_j^2}{2}+\Phi_{\mathrm{F}}(q_j-q_{j-1}) \Big] \, ,
\end{equation}
with fixed boundary conditions, i.e.~$q_1=q_N=0$, $p_1=p_N=0$. Concerning the nonlinearities, in the original report, several forms are considered. Here we limit ourselves to the analysis of the $\Phi_{\mathrm{F}}$, with $\Phi_{\mathrm{F}}$ given in \eqref{eq:FPUT-Potential} and to the case of periodic boundary conditions. The conclusion and the phenomenology is the same, but the analysis of the periodic case is somehow simpler.

It is nowadays common to refer to the FPUT model as one of the latter and, in particular, to classify them as
\begin{itemize}
	\item $\alpha$--model, when $\alpha \neq 0$ and $\beta=0$;
	\item $\beta$--model, when $\alpha=0$ and $\beta \neq 0$;
	\item $\alpha+\beta$--model, when $\alpha \neq 0 $ and $\beta \neq 0$.
\end{itemize}
The first result we shall comment concerns the time evolution of the energy of Fourier modes $E_k(t)$ along time. This is reported in Figure \ref{fig:FPUOriginal}. 

\begin{figure}[h!]
	\begin{center}
		\includegraphics[width=0.8\textwidth]{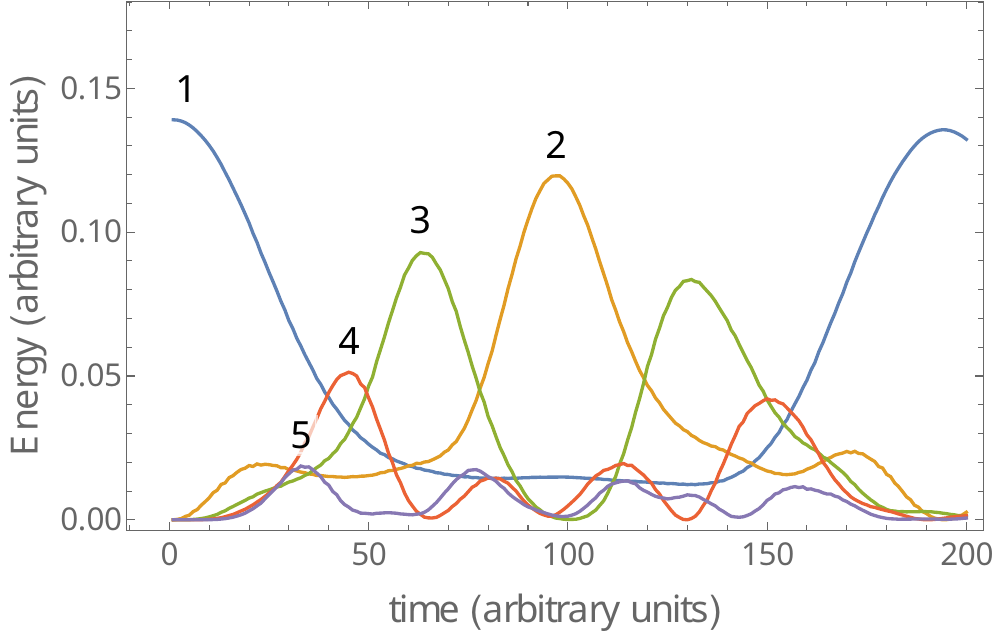}
	\end{center}
	\caption{Energy of the Fourier modes during the time evolution of the FPUT system with $N=64$, periodic boundary conditions and $\alpha=0.25$, $\beta=0.001$. Initial datum is a sine-wave with amplitude $A=0.95$ ($\epsilon=0.0022$). Only the dynamics first 5 Fourier modes are plotted.}\label{fig:FPUOriginal}
\end{figure}

In Figure \ref{fig:FPUOriginal} one sees the time evolution of the Fourier energy modes $E_k$ in time. If $\alpha=\beta=0$, that is, in absence of nonlinearity, $E_k(t)$ would be constant in time. Thanks to the presence of the nonlinearity, the energy, which is originally all in the mode with the lowest Fourier index, is shared among the other normal modes. Indeed, looking at the short-time behavior of $E_k(t)$, one sees that mode 1 ceases energy to the others and, initially mode 2 grows faster than the others. Then, mode 4 overcomes all the others reaching a peak and so on. At the right side of the figure, the systems is almost back to its original state where almost all the energy is stored in mode 1.

\begin{figure}
	\begin{center}
		 \includegraphics[width=0.8 \textwidth]{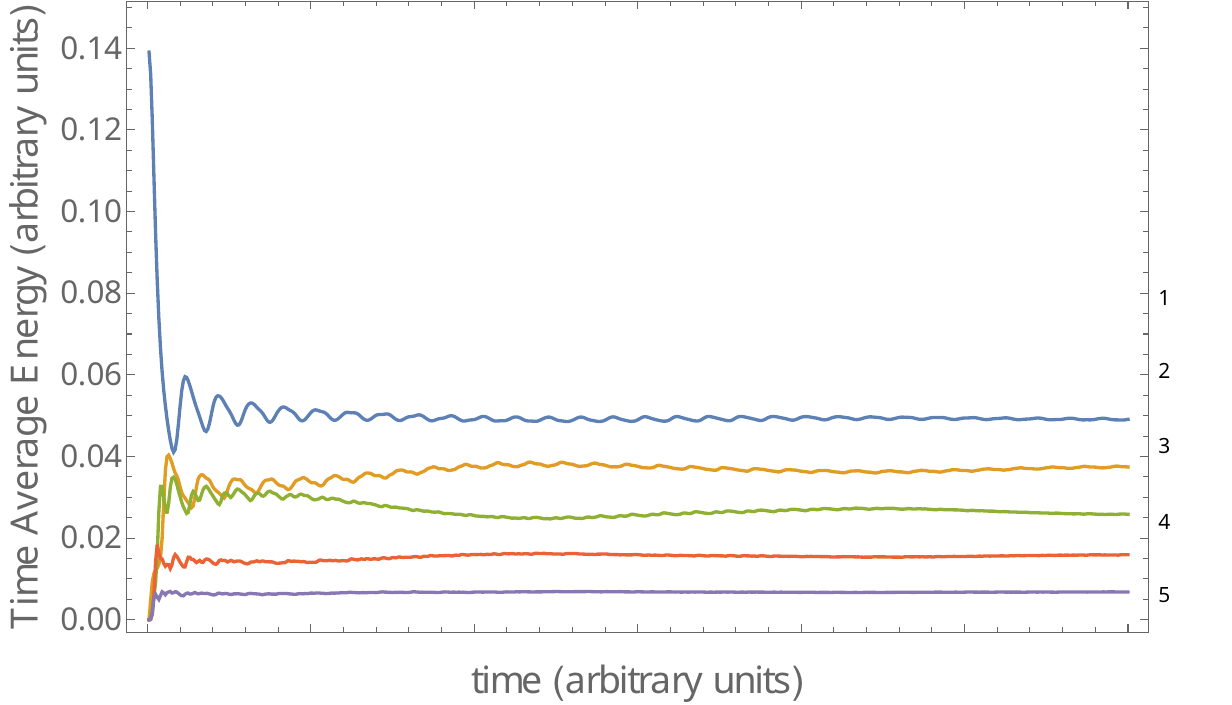}
	\end{center}
	
	\caption{Time averages of the energy of Fourier modes. Observe the convergence to a value which is not equipartition of energy. (Same data of Figure \ref{fig:FPUOriginal}).}\label{fig:EnergyAveragesFPUO}
\end{figure}

Second, in Figure \ref{fig:EnergyAveragesFPUO} we plotted the time-averages
\begin{equation}
	\overline{E}_k(t) \, :=\, \frac{1}{t} \int_0^t E_k(s) \, ds \, .
\end{equation}
These are good indicators of the approach to thermal equilibrium as discussed in subsection~\ref{subsec:StatMechHarmonic}. We observe that time-averages converge to certain values that are different from mode to mode. This is a signature of the fact that the system is \emph{not} at thermal equilibrium, otherwise we would observe equipartition of energy (Lemma \ref{lem:EquipartitionOfEnergy}). 

\begin{quote}
	the results of our computations show features which were, from the beginning, surprising to us. Instead of a gradual, continuous flow of energy from the first mode to the higher modes, all of the problems show an entirely different behavior. 
	[\dots] All our problems have at least this feature in common. Instead of a gradual increase of all the higher modes, the energy is exchanged, essentially, among only a certain few. It is, therefore, very hard to observe the rate of ``thermalization'' or mixing in our problem, and this was the initial purpose of the calculation.
	
	If one should look at the problem from the point of view of statistical mechanics, the situation could be described as follows: the phase space of a point representing our entire system has a great number of dimensions. Only a very small part of its volume is represented by the regions where only one or few out of all possible Fourier modes have divided among themselves almost all the available energy. If our system with nonlinear forces acting between the neighboring points should serve as a good example of a transformation of the phase space which is ergodic or metrically transitive, then the trajectory of almost every point should be everywhere dense in the whole phase space. 

[\dots] Certainly, there seems to be very little, if any, tendency towards equipartition of energy among all degrees of freedom at a given time. In other words, the system certainly do not show mixing.

[\dots]	It is not easy to summarize the results of the various special cases. One feature which they have in common is familiar from certain problems in mechanics of systems with few degrees of freedom. In the compound pendulum problem one has a transformation of energy from one degree of freedom to another and back again, and not a continually increasing sharing of energy between the two. What is perhaps surprising in our problem is that this kind of behavior still appears in systems with, say, 16 or more degrees of freedom.
\end{quote}

\subsection{Further numerical experiments}\label{subsec:FurtherNumerics}

The dynamics of the model, stimulated new numerical experiments. Just to mention a few of significant ones, in the '60s\footnote{Despite the paper being published in 1972, the result was already spread among mathematicians in the '60s as shown by Ulam's preface that is dated 1962 or by Zabusky and Kruskal's paper in 1965 \cite{Zabusky1965} (see also Section \ref{subsec:TravellingFPU}).} Tuck and Tsingou-Menzel \cite{Tuck1972}, observed the phenomena of superrecurrences. 
\begin{quote}
	At first the problem behaved as expected and energy appeared and grew in harmonics of the fundamental, but soon the process became strangely selective and not diffusion-like. At 25 oscillations, the string configuration began to retrace its steps, passing through previous complications in reverse order. By $\sim 50$ oscillations, the complications were unscrambled and the string was back-all but a discrepancy of a few percent-to its half sinusoid starting configuration. [\dots] This behavior was not at all according to statistical-mechanics expectations. Nor it was it Poincar\'e-like; for example, the time for the recurrence of $N$ particles to $1-\varepsilon$ of their starting configuration [\dots] for $N=16$, $\varepsilon=1/100$  amounts to $10^{35}$ oscillations. [\dots]
	
\begin{figure}[h!]
	\begin{center}
		\includegraphics[width=0.8 \textwidth]{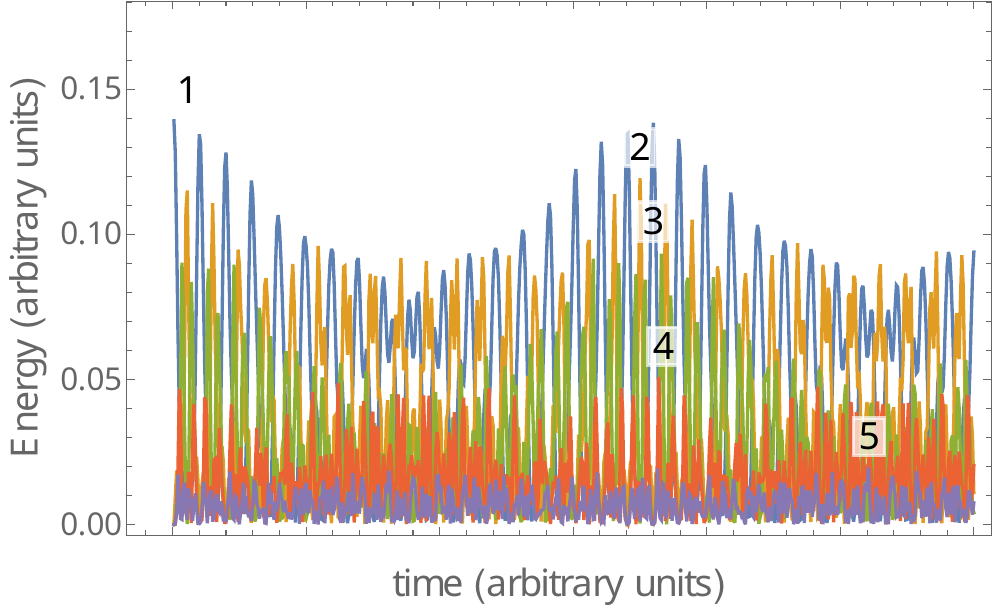}
	\end{center}
	\caption{Same of Figure \ref{fig:FPUOriginal}, but longer time evolution. One sees the phenomenon of \emph{superrecurrences} first observed by J. Tuck and M.~Tsingou-Menzel in \cite{Tuck1972}. (In 1958 Mary Tsingou married Joseph Menzel and took Mary Tsingou-Menzel as married name.)}\label{fig:Superrec}
\end{figure}	
	
	If the conjecture about equipartition was correct, the closing error should mount at each recurrence until it became random. At the next recurrence, the closing error was doubled, and it continued to increase, leveling off at about 390 fundamental oscillations to an amplitude 20\% below the starting amplitude. At the next recurrence, our complacency was jolted, the closing error was less. This trend continued so that by $\sim 780$ cycles, the string was closer than ever before to its starting configuration.
Clearly the recurrence had a superperiod.
\end{quote}

When observing the evolution of $\overline{E_k}(t)$, one sees that they reach a plateau  (Figure \ref{fig:EnergyAveragesFPUO}). This is often named \emph{the FPUT state} and the key question was whether this is a macroscopic state of equilibrium or not.

A breakthrough idea arrives in 1982, with a new point of view on the problem in \cite{Fucito1982}. The novelty of the paper is the idea of \emph{metastability} coming from the theory of spin glasses, that was being developed in Rome during the same years by the group of Parisi. According to this point of view, the \emph{FPUT state} is just an apparently stationary state: at first, the FPUT state is formed and then, on a much longer time-scale, equipartition is reached.

\begin{figure}[h!]
	\begin{center}
		\includegraphics[width=0.6\textwidth]{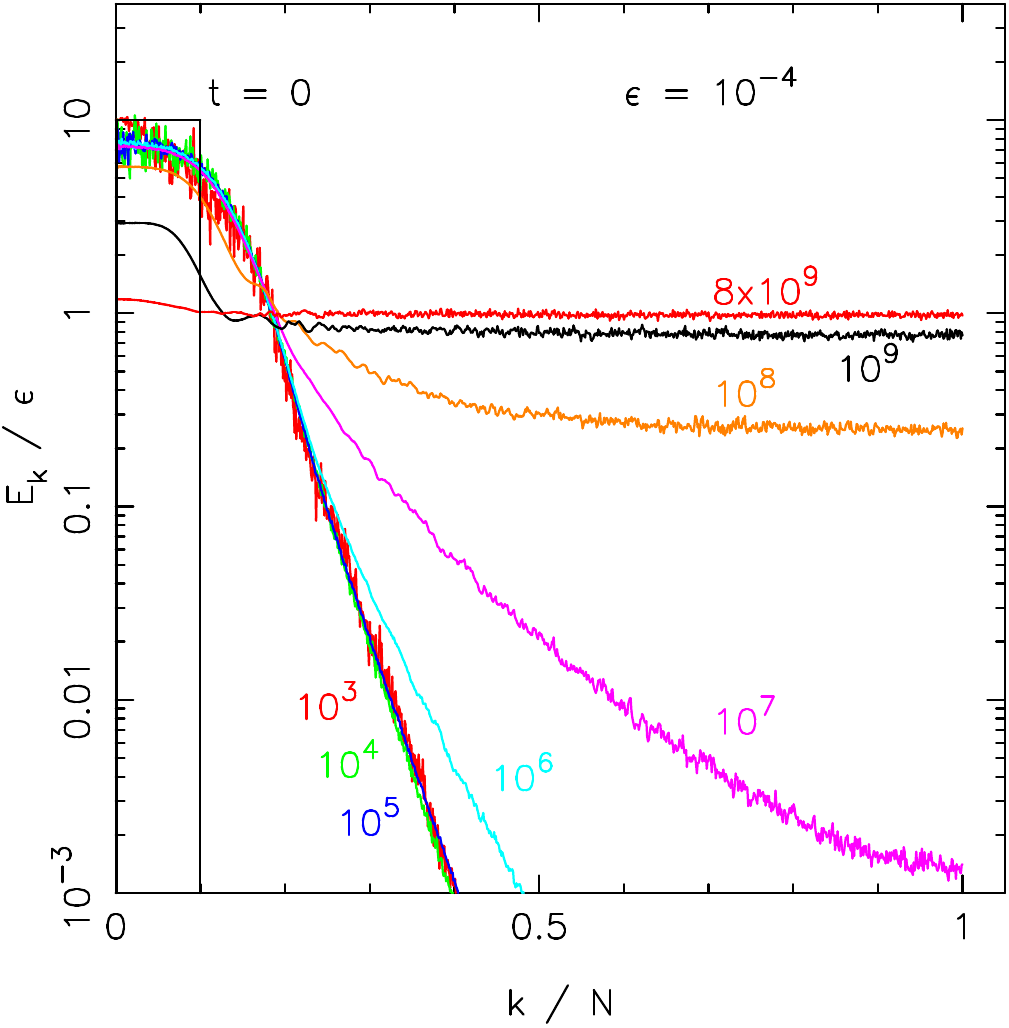}
	\end{center}
\caption{Time averages of energies for FPUT model with $N=1023$ and $\epsilon=10^{-4}$. First phase: times $10^3$-$10^5$ metastable scenario. Second phase: times $10^6$-$10^9$ approach to equipartition of energy. Initially only 10\% of normal modes are excited. Each point is the average over 24 random extraction of the initial phases (from \cite{Benettin2022}).} \label{fig:EvolutionSpectrum}
\end{figure}

The main phenomenon they understood is depicted in Figure \ref{fig:EvolutionSpectrum}. At first, only 10\% of normal modes are excited with the same energy. Within a time of order $\sim 10^3$, the energy is shared among few modes with an exponential cutoff on higher modes. This is the \emph{packet} or \emph{FPUT state}. This packet remains stable up to times of order $10^5$. Only at later times, the tail of high modes started to increase up to the reach of equipartition. 

Therefore, summarizing, the two-scales scenario consists in the co-existence of two mechanisms: first, the quick formation of a packet of low-frequency modes with an exponential tail having an apparently stabilized slope (formation of the metastable state),
and, secondly, the ﬁnal approach to global equipartition through the rising of
the equipartition level of the complementary packet.

Then, a subsequent key observation is the order of magnitude of the width of the packet. Indeed, in \cite{Berchialla2004}, the authors observe that with great precision, the width of the packet is of order $\epsilon^{\frac{1}{4}}$. This latter fact was proved for a certain set of initial data (and in a suitable scaling regime) in Theorem \ref{thm:Bambusi-Ponno}, in particular eq.~\eqref{eq:ExponentialDecay}.

\begin{figure}[h!]
	\begin{center}
		\includegraphics[width=0.8\textwidth]{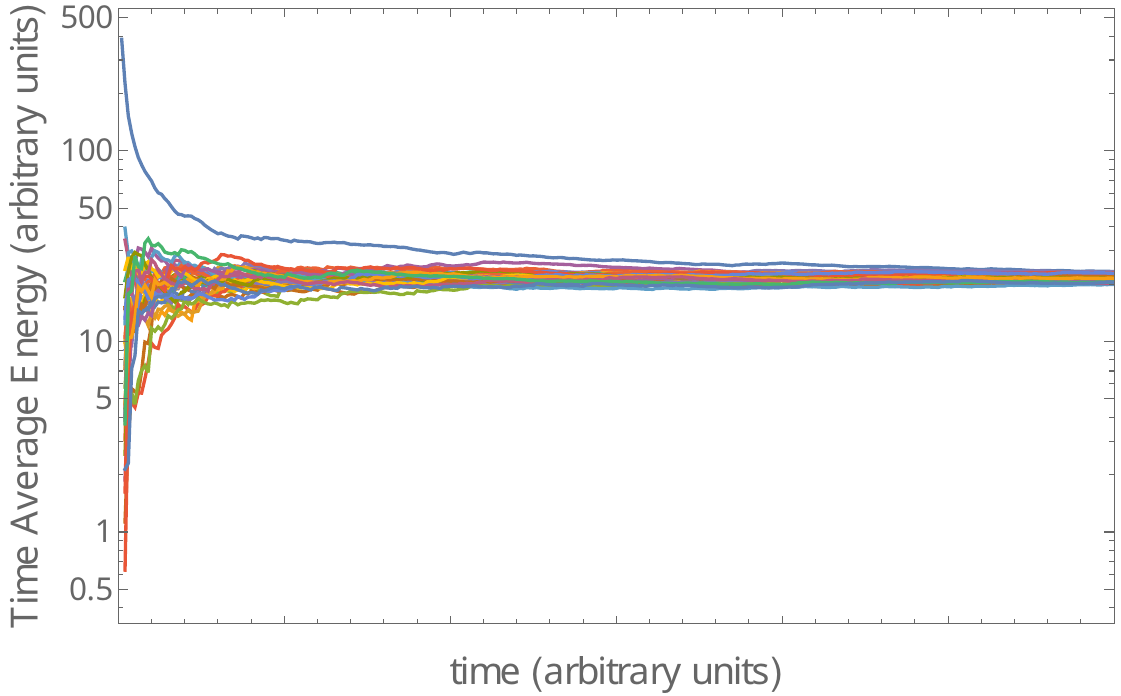}
	\end{center}
	\caption{Time averages of Fourier energies for the FPUT model with initial datum of high energy in logarithmic scale. Observe the quick tendency to equipartition. $N=64$, $\alpha=1$, $\beta=0.1$. Initial datum is a sine-wave with amplitude $A=50.0$ ($\varepsilon=22$).}
\end{figure}

%
%
%

\begin{figure}[h!]
	\begin{center}
		\includegraphics[width=0.8\textwidth]{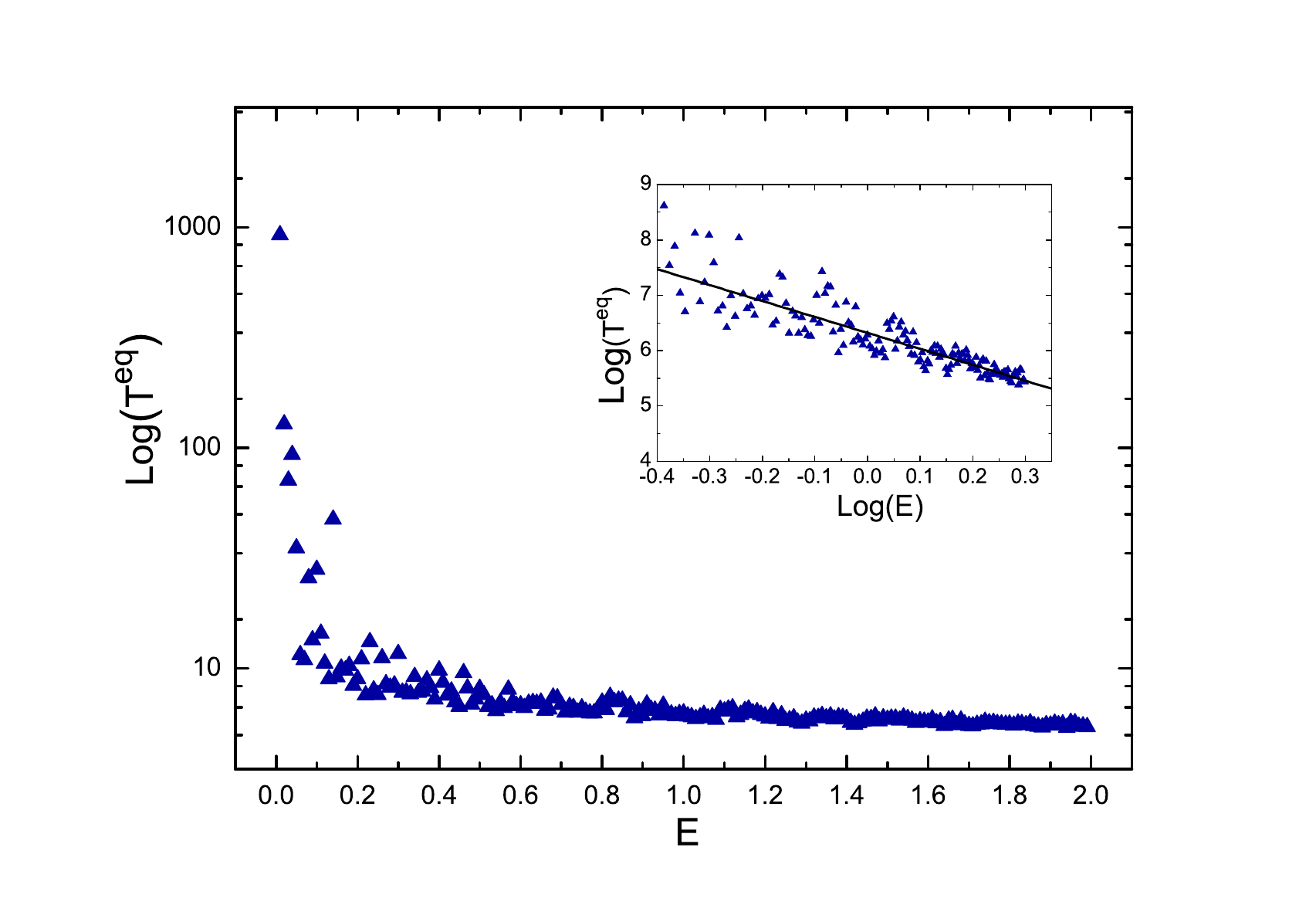}
	\end{center}
	\caption{Estimate of equipartition time for FPUT lattice with $N=64$ particles. For low energies, the equipartition time diverges as one expects from Nekhoroshev-type estimates and, in particular, if KAM theorem applies it should not exists. Then there is a higher region where thermalization time has a power-law behavior with respect to the energy. Figure taken from \cite{Ponno2011-CHAOS}.}
\end{figure}

The analysis, mainly through numerical tools, of the scaling properties of the FPUT state has been the subject of several works, among the others \cite{Benettin2008-1D,Benettin2011}. A convenient quantity to analyze is the \emph{width} $w$ of the state, that is defined in terms of the spectral entropy and of $N_{\mathrm{excited}}$ \eqref{eq:Nex}:
\begin{equation}
	w(t) \, := \, \frac{N_{\mathrm{excited}}(t)}{N} \, .
\end{equation}

Considering initial data like the ones used by Fermi and collaborators, the width can be considered as a function of time $t$, the total number of particles $N$, the specific energy of the system $\epsilon$ and the total number of initially excited modes $w_0$. At the formation of the metastable state, $w$ is weakly depending on $t$ and therefore one has
\begin{equation}
	w(t,N,\epsilon,w_0) \to w_{\mathrm{prethermal}}(N,\epsilon,w_0) \, .
\end{equation}
In fact, $w$ has been observed to satisfy some scaling laws that depend on the choices of the initial phases of the modes in $w_0$. 
\begin{itemize}
	\item[(a)] When only one normal mode is excited or for random initial phases, one has
	\begin{equation}
		w(t,N,\epsilon,w_0) \, = \, \epsilon^{\frac{1}{4}} \mathfrak{w}(\epsilon^{\frac{3}{8}} w_0^{\frac{3}{2}} t)
	\end{equation}
	and $\mathfrak{w}$ is a suitable function of a single variable, with a sigmoid-like profile. 
	\item[(b)] When coherent pattern of phases are excited, e.g.~all phases set to $\frac{\pi}{2}$ or equally spaced, the scaling is
	\begin{equation}
		w(t,N,\epsilon,w_0) \, = \, (E w_0)^{\frac{1}{4}} \tilde{\mathfrak{w}}((E w_0)^{\frac{3}{8}} w_0^{\frac{3}{2}} t) \, ,
	\end{equation} 
	where $\tilde{\mathfrak{w}}$ is, again, a sigmoid-like function different from ${\mathfrak{w}}$.
\end{itemize}
In particular, note that, formally the scalings are the same apart from replacing $\epsilon$ with $E w_0$.

\subsection{Relation with KAM and Nekhoroshev}
To the author's knowledge, the first attempt to explain the FPUT paradox using the persistence of quasi-periodic motions guaranteed by KAM theorem is due to Izrailev and Chirikov \cite{Izrailev1966StatisticalPO}. 

The authors have in mind a two-step situation: either the energy is low enough, and then KAM theorem applies or the energy is high enough and a sort of stochasticity allows the system to reach equipartition. 

Nevertheless, as it was pointed out in \cite{Ford1992} and \cite{Weissert1997}, considering the FPUT Hamiltonian as a perturbation of the harmonic chain in harmonic action-angles variable violates hypothesis (ii) in Theorem \ref{thm:KAM}. 

The first attempt to overcome this difficulty is due to Nishida in 1971 \cite{Nishida71} and Sanders in 1977 \cite{sanders1978theory} who tried to perform a preliminary step of normal form in the FPUT Hamiltonian and then to use the Birkhoff normal form as unperturbed Hamiltonian to apply KAM theorem. Let us discuss, for simplicity, the idea for the $\beta$-model. In harmonic action-angle coordinates, the Hamiltonian of the $\beta$-FPUT chain is
\begin{equation}
	\begin{split}
		H_{\mathrm{F}}(\varphi,I) \, = \, &\sum_{j=1}^N \omega_j I_j \\
		&+ \, \sum_{j_1,j_2,j_3,j_4} \sqrt{I_{j_1} I_{j_2} I_{j_3} I_{j_4}} f_{j_1,j_2,j_3,j_4}(\varphi_{j_1},\varphi_{j_2}, \varphi_{j_3}, \varphi_{j_4}) \, .
	\end{split}
\end{equation}
where $f$ is an explicit function. Then, in the framework of perturbation theory one can hope to find a function $G_1$ such that $\tilde{H}_{\mathrm{F}}(\tilde{\varphi}, \tilde{I}) = H_{\mathrm{F}} \circ \Phi_{G_1}^1(\tilde{\varphi},\tilde{I})$ is of the form
\begin{equation}
	\tilde{H}_F(\tilde{\varphi},\tilde{I}) \, = \sum_{j=1}^N \omega_j \tilde{I}_j + \sum_{j_1,j_2} \Omega_{j_1,j_2} \tilde{I}_{j_1} \tilde{I}_{j_2} + O(\tilde{I}^3) \, .
\end{equation}
for some explicit constants $\Omega_{j_1,j_2}$ and for which, considering 
\begin{equation}
	h_0(\tilde{I}) \,:=\, \sum_{j=1}^N \omega_j \tilde{I}_j + \sum_{j_1,j_2} \Omega_{j_1,j_2} \tilde{I}_{j_1} \tilde{I}_{j_2} \, ,
\end{equation}
one has that $h_0$ satisfy condition (ii) in Theorem \ref{thm:KAM}. If this was the case, then KAM theorem can be applied and justifies the quasi-periodic observations of FPUT. 

Nishida, in his 1971 paper, considers the $\beta$ model with fixed boundary conditions. Assuming that the linear frequencies of the lattice are non-resonant, the first step of normal form can be done and he computes the normal form showing that it satisfies (ii) in Theorem \ref{thm:KAM}. He does not prove the fact, which is in general false, that frequencies are non-resonant. He refers to an unpublished result by Izumi which seems to contain the proof that resonances do not occur if either the number of particles in the lattice is prime or if it is a power of $2$. The result by Izumi is not available and, nevertheless, has been proved by Hemmer in 1959 \cite{hemmer1959dynamic}. 

Applicability of KAM, at that point, was left to the control of resonances. This latter point has been solved by Rink in \cite{Rink2001,Rink2005} for the $\beta$ model and by Henrici and Kappeler for the $\alpha+\beta$ model \cite{Henrici2007,Henrici2009}. 
To summarize the results, we denote by $U_R(I)$ the ball of radius $R>0$ centered at $I$, that is
\begin{equation}
	U_R(I)\,=\, \Big\{I' \in \mathbb{R}^N \, \big| \, \sup_{j=1,\cdots,N} |I_j-I'_j| \leq R\Big\} \, .
\end{equation}

\begin{theorem}[Nishida '71, Rink '06, Henrici-Kappeler '07, '09]
	For any $N>0$, $\alpha \in \mathbb{R}$ and $\beta \neq 0$, there exists $\varepsilon_N>0$ such that, for any $0<\varepsilon<\varepsilon_N$ there exists a set $B_\varepsilon \subset U_{\varepsilon}$ with positive measure such that $U_\varepsilon(0) \setminus B_\varepsilon$ has positive measure and the following holds
	\begin{itemize}
		\item[(a)] $B_\varepsilon\times \mathbb{T}^N$ is invariant under the flow of FPUT;
		\item[(b)] If $I_0 \in B_\varepsilon$ the motion is quasi-periodic and fills densely a manifold which is diffeomorphic to a $N$ dimensional torus.
	\end{itemize}
\end{theorem}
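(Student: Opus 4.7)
The plan is to put the FPUT Hamiltonian into a partial Birkhoff normal form of order 4, verify that the quartic part satisfies the Kolmogorov twist condition, and then apply Theorem~\ref{thm:KAM} directly.

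First, I would pass to harmonic action-angle variables $(I,\varphi)$ as in Subsection~\ref{subsec:StatMechHarmonic}, writing
\[
H_{\mathrm{F}} \, = \, \sum_{j=1}^N \omega_j I_j \, + \, \varepsilon \mathscr{P}_3(I,\varphi) \, + \, \varepsilon^2 \mathscr{P}_4(I,\varphi) \, + \, O(\varepsilon^3)
\]
on a ball $U_\varepsilon(0)$, where the small parameter $\varepsilon$ measures the amplitude of the actions and $\mathscr{P}_3$, $\mathscr{P}_4$ are homogeneous polynomials of degree $3/2$ and $2$ in $\sqrt{I}$ respectively (the cubic part is present only for $\alpha\neq 0$). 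The idea is then to run one, respectively two, steps of Birkhoff normal form through the scheme of Subsection~\ref{sec:IntegrablePerturbations}: look for $G_1$ such that $\Phi_{G_1}^\varepsilon$ kills $\mathscr{P}_3$, and then $G_2$ that normalises $\mathscr{P}_4$ into a function $h_2(\tilde I)$ depending only on the actions. The homological equation at each step is solvable provided the relevant harmonic combinations $\omega\cdot k$ do not vanish on the support of the corresponding Fourier coefficients of $\mathscr{P}_j$.

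The first key step is therefore the control of the resonant set of the harmonic spectrum $\omega_k = 2|\sin(\pi k/N)|$. Hemmer's result ensures non-resonance at order three (no $k\in\mathbb Z^N$ with $|k|=3$ and $\omega\cdot k=0$ except the trivial ones forced by the discrete symmetries of the chain), and more generally Rink's analysis (in the $\beta$-case) and Henrici-Kappeler's analysis (in the $\alpha+\beta$ case) show that the only resonances at orders three and four are those compatible with momentum conservation, and that on these resonances the corresponding Fourier coefficient of $\mathscr{P}_3$ or $\mathscr{P}_4$ vanishes by symmetry. Thus $G_1$ and $G_2$ exist as analytic functions on $U_\varepsilon(0)\times\mathbb T^N$ (suitably complexified), and after the canonical transformation the Hamiltonian reads
\[
\tilde H_{\mathrm{F}}(\tilde\varphi,\tilde I) \, = \, \sum_{j=1}^N \omega_j \tilde I_j \, + \, \sum_{j,\ell=1}^N \Omega_{j,\ell}\, \tilde I_j \tilde I_\ell \, + \, O(\tilde I^3) \, ,
\]
with $\Omega_{j,\ell}$ explicit polynomial expressions in $\alpha$, $\beta$ and the frequencies.

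The second key step, and the main technical obstacle, is to verify Kolmogorov non-degeneracy of
\[
h_0(\tilde I) \, := \, \sum_{j=1}^N \omega_j \tilde I_j \, + \, \sum_{j,\ell=1}^N \Omega_{j,\ell} \tilde I_j \tilde I_\ell \, ,
\]
namely that $\det \partial^2_I h_0 = \det (2\Omega) \neq 0$ on $U_\varepsilon(0)$. This is the point at which one really uses $\beta\neq 0$: the contribution of $\beta$ to $\Omega$ dominates generically, while the $\alpha^2$ contribution coming from the elimination of the cubic term is subdominant in a controllable way. Rink computes $\Omega$ for the pure $\beta$-model and shows it is invertible; Henrici-Kappeler extend this by a perturbative argument, treating the $\alpha^2$-correction as small compared to $\beta$. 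This is precisely the hard computational heart of the argument.

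Once Kolmogorov non-degeneracy of $h_0$ is established, I would write
\[
\tilde H_{\mathrm{F}} \, = \, h_0(\tilde I) \, + \, \delta R(\tilde I,\tilde\varphi) \, ,
\]
where $\delta\sim\varepsilon^{3/2}$ measures the size of the remainder on $U_\varepsilon(0)$, and apply Theorem~\ref{thm:KAM} with $h_0$ as the integrable part. This yields, for $\varepsilon$ small enough, a set $B_\varepsilon\subset U_\varepsilon(0)$ of positive measure (of relative measure $1-O(\sqrt\delta)$) such that $B_\varepsilon\times\mathbb T^N$ is foliated by invariant $N$-tori on which the flow is quasi-periodic and dense, which is exactly statements (a) and (b). The positivity of the measure of $U_\varepsilon(0)\setminus B_\varepsilon$ is automatic from the Cantor-like structure of $B_\varepsilon$ discussed after Theorem~\ref{thm:KAM}.
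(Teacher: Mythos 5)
Your sketch reproduces exactly the strategy the paper outlines in the paragraphs preceding the theorem: partial Birkhoff normal form in harmonic action-angle variables, resolution of the resonance problem via Hemmer (for prime/power-of-two $N$) and Rink, Henrici-Kappeler (for general $N$ and the $\alpha+\beta$ case), verification of Kolmogorov non-degeneracy of the quartic normal form (where $\beta\neq 0$ enters), and an application of Theorem~\ref{thm:KAM}. The paper itself does not give a proof (it cites Rink and Henrici-Kappeler), so there is nothing substantive to contrast; the only slips in your write-up are cosmetic: the dispersion relation is $\omega_k=2|\sin(\pi k/2N)|$ rather than $2|\sin(\pi k/N)|$, and $\mathscr{P}_3$, $\mathscr{P}_4$ are homogeneous of degree $3$ and $4$ in $\sqrt{I}$ (equivalently $3/2$ and $2$ in $I$), not ``$3/2$ and $2$ in $\sqrt{I}$'' as written.
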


This theorem establishes non-ergodicity of the FPUT chain for initial data with energy low enough. Moreover, it states that when the energy is low enough, for a large set of initial data, the dynamics consist in the FPUT recurrence for all times.

 Nevertheless, its meaningfulness in the thermodynamic limit remains an open problem. In other words, how does the threshold $\varepsilon_N$ behave with $N$ for large $N$? A very quick decay of $\varepsilon_N$ with $N$ as $N\to \infty$ would imply that KAM theorem is meaningless in the statistical mechanics setting, thus leaving the possibility of thermalization for the FPUT chain in the limit. On the contrary, if $\varepsilon_N>0$ in the limit, then this would imply that - even in the thermodynamic limit - KAM theorem is applicable and for low energy initial data, the system would not be ergodic.

\section{FPUT, Toda and KdV hierarchy}\label{sec:FPUT-Integrable}\label{sec:FPUTIntegrable}
In this section we develop the point of view on the FPUT problem as a perturbation of certain integrable models. Indeed, integrable approximations, with the methods of Hamiltonian perturbation theory, give a quantitative and sometimes mathematically rigorous description of the phenomenon in the thermodynamic limit. We will first discuss the relation with the integrable Toda chain, referring to the recent review \cite{Benettin2022} for many details. 

A larger part is devoted to the relation with the Korteweg-de Vries equation, for which a complete account of recent results have not been collected yet. 

\subsection{Connection with Toda} In the late '60s and early '70s it was discovered the existence of a one-dimensional integrable non-linear lattice by Toda \cite{Toda1967,Toda1967-B,Toda1970} and by H\'enon \cite{Hnon1974}. The system, known as \emph{Toda lattice}, has Hamiltonian
\begin{equation}
	H_{\mathrm{T}}(q,p) \, = \, \sum_{j \in \mathbb{Z}_N} \Big[\frac{p_j^2}{2}+\Phi_{\mathrm{T}}(q_{j+1}-q_j) \Big]
\end{equation}
where
\begin{equation}\label{eq:TodaPotential}
	\Phi_{\mathrm{T}}(z) \, = \, A (e^{-B z}+Bz-1 )
\end{equation}
for $A$ and $B$ real free parameters. It was first noted by Manakov \cite{manakov1974complete} that it is more convenient to look at the FPUT system as a perturbation of the Toda system then of a Harmonic chain. Indeed, choosing suitably $A=\frac{1}{4 \alpha^2}$ and $B=2 \alpha$ in \eqref{eq:TodaPotential}, one has
\begin{equation}
	\Phi_{\mathrm{F}}(z)-\Phi_{\mathrm{T}}(z) \, = \, \Big(\beta-\frac{2}{3}\alpha^2 \Big) \frac{z^4}{4} + O(z^5).
\end{equation}
One thus defines the parameter $\beta_T := \frac{2}{3}\alpha^2$, where the subscript ``T'' stands for ``Toda''.

In a remarkable numerical work \cite{Ferguson1982} the authors provide a method to compute Toda actions and numerically compute the evolution of Toda actions along the FPUT flow. They observe that 
\begin{quote}
	Toda action variables [\dots] are nearly constant; this should be compared with the large variation of the energies $E_k$, numbers which are proportional to the action variables of the Harmonic chain. [\dots]
	
\begin{figure}[h!]
	\begin{center}
		\includegraphics[width=0.48\textwidth]{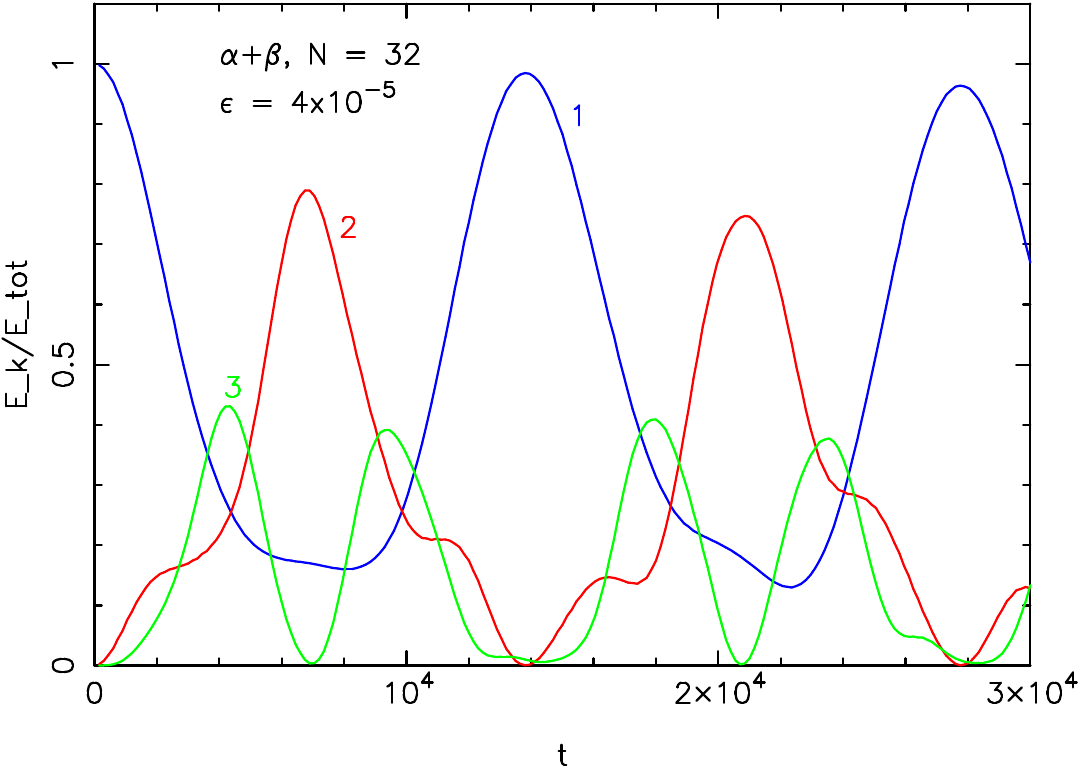} \includegraphics[width=0.48\textwidth]{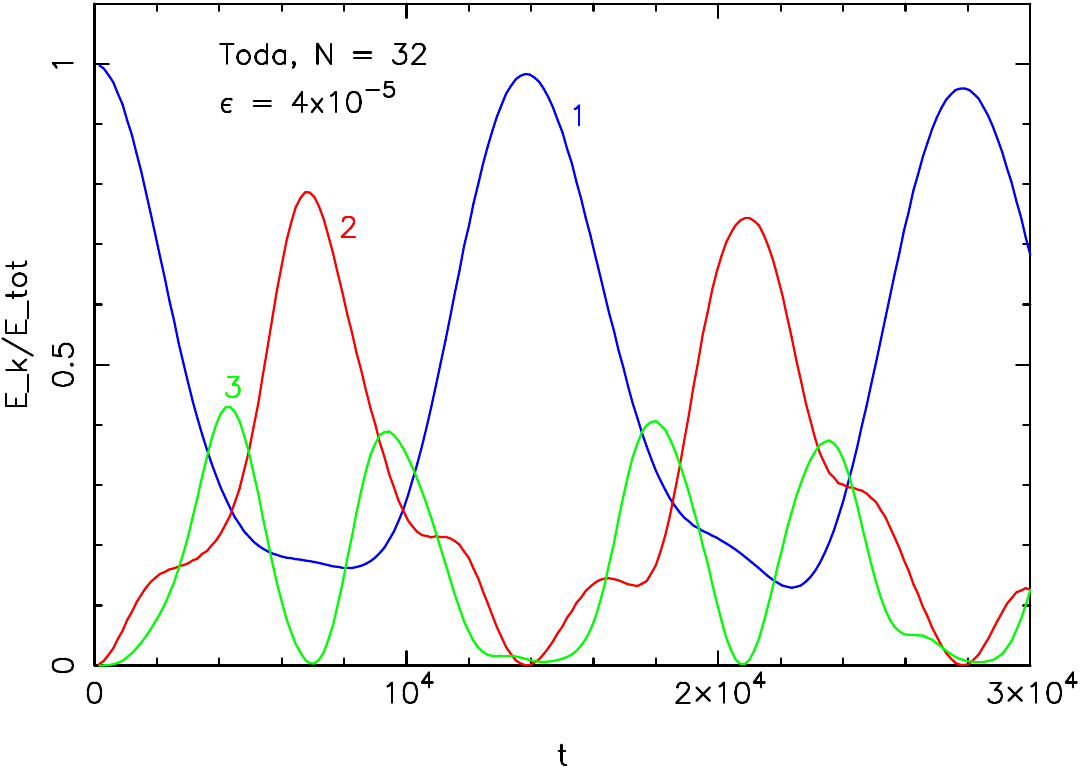}
	\end{center}
	\caption{Comparison between FPUT (left) and Toda (right) dynamics with the same initial datum. From \cite{Benettin2022}.}
\end{figure}	
	
	The fact that the Toda action variables remain nearly constant during the integration of the quadratic chain has important ramifications. We must first realize that the Toda action variables will in general change with time, since we are not integrating the equations of motion for the Toda chain. Recall that each phase point $(q, p)$ describing the state of the quadratic chain lies on a (Toda) torus which is characterized by the corresponding graph of the discriminant. Therefore we can visualize the trajectory of the phase point of the quadratic lattice as a curve winding around a torus which is changing with time.
\end{quote}

This point of view is the basis of a series of works by Benettin and Ponno that are summarized in the beautiful review \cite{Benettin2022} to which I devote the reader for a detailed discussion. We limit ourselves in mentioning and quickly commenting the following two analytic results.

\begin{theorem}[Bambusi-Maspero \cite{Bambusi-Maspero-2016}]\label{thm:BamMas16}
	Fix $s\geq 1$, $\sigma \geq 0$; then there exist positive constants $R_0'$, $C_2$, $T$, such that the following holds true. Consider a real initial datum with
	\begin{equation}
		\frac{E_1(0)}{N} \, = \, \frac{E_{N-1}(0)}{N} \, = \, R^2 e^{-2 \sigma} \mu^4 \, , \qquad E_k(0)=0 \, \qquad k \neq 1,N-1 \, .
	\end{equation}
	with $R < R_0'$. Then, along the corresponding solution of the FPUT system, one has
	\begin{equation}
		\frac{E_k(t)}{N} \leq \frac{16 R^2 \mu^4 e^{-2\sigma k}}{k^{2s}} \, , \qquad \forall 1 \leq k \leq \lfloor {\textstyle\frac{N}{2}} \rfloor 
	\end{equation}
	and for times 
	\begin{equation}\label{eq:66Toda}
		 |t| \leq \frac{T}{R^2 \mu^4} \frac{1}{|\beta-\beta_T|+C_2 R \mu^2} \, .
	\end{equation}
\end{theorem}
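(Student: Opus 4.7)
The plan is to exploit the fact that the FPUT Hamiltonian is an $O((\beta-\beta_T) + \text{higher order})$ perturbation of the Toda lattice, which is integrable, and to transport the exponential localization through the comparison. First I would write
\[
	H_{\mathrm{F}} \, = \, H_{\mathrm{T}} + (\beta - \beta_T)\,V_4 + R_5,
\]
where $V_4 = \sum_j (q_{j+1}-q_j)^4/4$ and $R_5$ collects the terms of order $\geq 5$ in the displacement; on the energy surface of the theorem, typical displacements are of size $R\mu^2$, so $R_5$ contributes at order $R\mu^2$ smaller than $V_4$. This explains the structure $|\beta-\beta_T| + C_2 R\mu^2$ appearing in the time scale \eqref{eq:66Toda}.

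Second, I would use the Toda action variables $J_1,\ldots,J_{N-1}$, which are smooth functions on phase space, are in involution with $H_{\mathrm{T}}$, and which, in the small-amplitude regime considered, admit asymptotic expansions of the form $J_k = E_k + O(R^2\mu^4)$ with the correction analytic and under control in a weighted norm encoding the exponential profile. The crucial input here is the detailed analysis of Toda actions near the equilibrium via the Flaschka Lax pair and the associated Birkhoff coordinates. For the specified initial datum, only $J_1$ and $J_{N-1}$ are excited at order $R^2\mu^4 e^{-2\sigma}$, and the remaining $J_k$ are of order $R^4\mu^8$ or smaller, still respecting the exponential decay in $k$.

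Third, along the FPUT flow the Toda actions evolve by
\[
	\dot J_k \, = \, \{J_k, H_{\mathrm{F}}\} \, = \, (\beta-\beta_T)\{J_k, V_4\} + \{J_k, R_5\},
\]
since $\{J_k, H_{\mathrm{T}}\}=0$. The main technical work is to estimate these Poisson brackets in a norm that preserves the exponential weight $e^{-2\sigma k}/k^{2s}$. This can be done by introducing a one-parameter family of weighted analytic norms on the complexified phase space, establishing tame estimates for $\{\,\cdot\,,V_4\}$ and $\{\,\cdot\,,R_5\}$, and showing that each Poisson bracket costs only a controlled loss of analyticity strip. The bound takes the form $|\dot J_k| \lesssim R^2\mu^4 \,(|\beta-\beta_T| + C R\mu^2)\, e^{-2\sigma k}/k^{2s}$. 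A Gronwall argument then keeps the $J_k$ within a factor of their initial value (say within a factor $16$) for $|t| \leq T/(R^2\mu^4(|\beta-\beta_T|+C_2 R\mu^2))$.

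Finally, to obtain the conclusion for the harmonic energies $E_k$, I would invert the relation $J_k = E_k + O(R^2\mu^4)$ to bound $E_k(t)$ by $J_k(t)$ plus the same higher-order corrections, which are a fortiori exponentially decaying. The main obstacle, and the bulk of the real work, is the second step: constructing the Toda actions as analytic functions on a complex neighborhood and controlling their Poisson brackets with the non-Toda parts $V_4$ and $R_5$ in a norm sufficiently sharp to preserve both the $e^{-2\sigma k}$ factor and the polynomial decay $k^{-2s}$, and sufficiently robust to accommodate the thermodynamic-limit-friendly dependence on $N$. The rest of the argument is then a fairly standard perturbation-of-integrable-systems estimate driven by Gronwall.
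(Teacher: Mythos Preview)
The paper does not give a proof of this theorem: it is quoted as a result of Bambusi and Maspero \cite{Bambusi-Maspero-2016} and used as a black box, with no argument or sketch supplied in these lecture notes. So there is no ``paper's own proof'' to compare your proposal against.

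That said, your outline is broadly in the right spirit for the cited reference: the decomposition $H_{\mathrm{F}} = H_{\mathrm{T}} + (\beta-\beta_T)V_4 + R_5$ is exactly the starting point, and the time scale \eqref{eq:66Toda} does arise from the competition between the explicit $(\beta-\beta_T)$ detuning and the higher-order remainder scaling like $R\mu^2$. Where your sketch understates the difficulty is in ``step two'': the construction of Birkhoff coordinates for the periodic Toda lattice with estimates \emph{uniform in $N$} (equivalently, uniform in $\mu = 1/N$) is the core of \cite{Bambusi-Maspero-2016} and is far from a routine verification. One does not simply use the Toda actions $J_k$ and estimate Poisson brackets; rather, one constructs a global analytic symplectomorphism to Birkhoff coordinates on a ball whose size does not shrink as $N\to\infty$, and then the FPUT Hamiltonian in those coordinates has an integrable part plus a perturbation of the size you identify. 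The Gronwall step you describe is then performed in the Birkhoff variables, not on the $J_k$ directly. Your final inversion ``$J_k = E_k + O(R^2\mu^4)$'' is also more delicate than stated: the correspondence between Toda Birkhoff actions and linear-mode energies $E_k$, with the correct exponential weight $e^{-2\sigma k}/k^{2s}$ preserved, is part of the uniform construction and not a separate elementary step.
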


One can also prove a theorem for initial data extracted with Gibbs measure at inverse temperature $b$. Here, we do not use $\beta$ to avoid confusion with the $\beta$ parameter of the FPUT chain. Thus, the Gibbs measure is
\begin{equation}
	d \mu_b \, := \, \frac{1}{Z_{\mathrm{can}}(b)} e^{-bH(q,p)} \delta \big({\textstyle \sum_{j=0}^{N-1}(q_{j+1}-q_j)}\big) \, \delta\big( {\textstyle \sum_{j=0}^{N-1} p_j}\big) dq dp
\end{equation}
where $Z_{\mathrm{can}}(b)$ is the usual canonical partition function (see Definition \ref{def:CanonicalPF})
\begin{equation}
	Z_{\mathrm{can}}(b) \, = \, \int_{\mathbb{R}^N \times \mathbb{R}^N} e^{-bH(q,p)} \delta \big({\textstyle \sum_{j=0}^{N-1}(q_{j+1}-q_j)}\big) \, \delta\big( {\textstyle \sum_{j=0}^{N-1} p_j}\big) dq dp \, .
\end{equation}

We will consider H\'enon's integrals of motion that are obtained as traces of powers of a certain (Lax-)matrix \cite{Hnon1974}. We will denote those quantities as $J^{(m)}$. For concreteness, the first few H\'enon integrals are
\begin{eqnarray}
	J^{(2)}(q,p) \!\!\!\!&=&\!\!\!\!\sum_{j=1}^{N} \left(\frac{p_j^2}{2}+e^{-(q_{j+1}-q_j)} \right) \\
	J^{(3)}(q,p)\!\!\!\! &=&\!\!\!\! -\sum_{j=1}^{N} \left(\frac{1}{3}p_j^3+(p_j+p_{j+1})e^{-(q_{j+1}-q_j)} \right)
\end{eqnarray}
and from their expressions, we see that they are extensive quantities (that is, they are expected to grow linearly with $N$). Despite this fact, one can estimate the difference $|J^{(m)} \circ \Phi_{H_F}^t - J^{(m)}|$ with the variance using Chebyshev inequality and prove (Proposition 5.1 in \cite{Grava2020}) that the latter quantity is of order $O(\sqrt{N} \beta^{-1})$.

\begin{theorem}[Grava-Maspero-Mazzucca-Ponno \cite{Grava2020}]\label{thm:GMMP}
	Fix $m \in \mathbb{N}$. There exists constants $N_0,b_0,C_0,C_1,C_2>0$ (all depending on $m$), such that for any $N>N_0$, $b>b_0$ and any $\delta_1,\delta_2>0$ one has
	\begin{equation}
		\mathrm{Prob}_{\mu_b} \left(\frac{1}{\sqrt{N}}|J^{(m)} \circ \Phi_{H_{\mathrm{F}}}^t-J^{(m)}| > C_2 \frac{\delta_1}{\beta} \right) \leq \delta_2 C_0
	\end{equation}
	for every time $t$ fulfilling
	\begin{equation}\label{eq:610Toda}
		|t| \, \leq \, \frac{\delta_1 \sqrt{\delta_2}}{((\beta-\beta_T)^2+C_1 b^{-1})^{\frac{1}{2}}}b \, .
	\end{equation}
\end{theorem}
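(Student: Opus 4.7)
My plan is to exploit Manakov's observation that $J^{(m)}$ are exact integrals of motion for the Toda chain, view $H_{\mathrm{F}}$ as a perturbation of $H_{\mathrm{T}}$, and use invariance of the Gibbs measure under the FPUT flow to turn a time derivative into a variance, which is then converted into the probability bound via Chebyshev. With the Manakov choice $A=1/(4\alpha^2)$, $B=2\alpha$, the stated expansion of $\Phi_{\mathrm{F}}-\Phi_{\mathrm{T}}$ gives
\[
H_{\mathrm{F}}-H_{\mathrm{T}} \, = \, (\beta-\beta_T)\,R_4 \, + \, R_{\geq 5} \, ,
\]
where $R_4=\frac{1}{4}\sum_{j\in\mathbb{Z}_N}(q_{j+1}-q_j)^4$ and $R_{\geq 5}$ collects the quintic-and-higher local terms of the Toda expansion.

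\textbf{Key steps.} First, Liouville integrability of Toda gives $\{J^{(m)},H_{\mathrm{T}}\}=0$, so along the FPUT flow
\[
\frac{d}{dt}\,J^{(m)}\circ\Phi_{H_{\mathrm{F}}}^t \, = \, \{J^{(m)},H_{\mathrm{F}}-H_{\mathrm{T}}\}\circ\Phi_{H_{\mathrm{F}}}^t \, .
\]
Integrating from $0$ to $t$, applying Cauchy--Schwarz to the time integral, and using the fact that $\mu_b$ is by construction invariant under $\Phi_{H_{\mathrm{F}}}^s$, one gets
\[
\mathbb{E}_{\mu_b}\!\left[(J^{(m)}\circ\Phi_{H_{\mathrm{F}}}^t-J^{(m)})^2\right] \, \leq \, t^2\,\mathbb{E}_{\mu_b}\!\left[\{J^{(m)},H_{\mathrm{F}}-H_{\mathrm{T}}\}^2\right] \, .
\]
Splitting the bracket as $(\beta-\beta_T)\{J^{(m)},R_4\}+\{J^{(m)},R_{\geq 5}\}$ and expanding $(a+b)^2\le 2a^2+2b^2$, the task reduces to controlling two variances. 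Now, $J^{(m)}$ is a trace of an $m$-th power of the Toda Lax matrix, so it is a sum over $j\in\mathbb{Z}_N$ of polynomial densities in $p_j$ and $e^{-(q_{j+1}-q_j)}$ of bounded local support (independent of $N$); the same is true of $R_4$ and $R_{\geq 5}$. Hence their Poisson brackets are extensive sums of local, translation-covariant observables whose densities have, respectively, degree $\sim 4$ and $\sim 5$ in the fields $p_j,\,r_j:=q_{j+1}-q_j$. Under $\mu_b$ at large $b$ these fields fluctuate on scale $b^{-1/2}$ with exponentially decaying truncated correlations (the FPUT Gibbs measure on $(r,p)$ is that of a one-dimensional nearest-neighbor chain with uniformly convex single-site potential). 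A diagonal-plus-exponentially-small-off-diagonal summation then yields bounds of the form
\[
\frac{1}{N}\,\mathbb{E}_{\mu_b}\!\left[\{J^{(m)},R_4\}^2\right] \, \leq \, \frac{C}{b^{4}}\, , \qquad \frac{1}{N}\,\mathbb{E}_{\mu_b}\!\left[\{J^{(m)},R_{\geq 5}\}^2\right] \, \leq \, \frac{C}{b^{5}}\, ,
\]
which, combined, give
\[
\frac{1}{N}\mathbb{E}_{\mu_b}\!\left[(J^{(m)}\circ\Phi_{H_{\mathrm{F}}}^t-J^{(m)})^2\right] \, \leq \, C\,t^2\,\frac{(\beta-\beta_T)^2 + C_1\,b^{-1}}{b^{4}} \, .
\]
Chebyshev's inequality applied at threshold $C_2\delta_1/b$ then yields a probability of at most $C\,t^2\,((\beta-\beta_T)^2+C_1 b^{-1})/(\delta_1^2 b^2)$, which is $\le C_0\delta_2$ precisely for $t$ satisfying \eqref{eq:610Toda}, completing the proof.

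\textbf{Main obstacle.} The soft part is Steps 1 and 2 (reduction and invariance), and the Chebyshev step. The real work is in the variance estimate: one has to quantify, uniformly in $N$, the extensivity of $\{J^{(m)},H_{\mathrm{F}}-H_{\mathrm{T}}\}$ together with the correct powers of $b^{-1}$. Concretely, I need to (i) expand $J^{(m)}$ as a sum of local Toda densities of bounded support (a combinatorial bookkeeping on powers of the Lax matrix and the structure of the associated Poisson algebra), (ii) establish the Gaussian-type scaling $p_j,r_j=O(b^{-1/2})$ and exponential decay of truncated correlations for $\mu_b$ at large $b$, either via a direct Laplace/cluster expansion around the harmonic Gibbs measure or via the transfer operator for this 1D lattice. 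This is where the precise constants $N_0$, $b_0$, $C_0$, $C_1$, $C_2$ are produced, and it is by far the most delicate part of the argument.
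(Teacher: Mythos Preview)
Your proposal is correct and follows the same approach that the paper points to: the paper does not give its own proof but explicitly attributes the result to \cite{Grava2020}, indicating (just before the statement) that one estimates $|J^{(m)}\circ\Phi_{H_{\mathrm F}}^t-J^{(m)}|$ via the variance and Chebyshev's inequality, with the variance bound being Proposition~5.1 of that reference. Your outline---$\{J^{(m)},H_{\mathrm T}\}=0$, integrate the bracket $\{J^{(m)},H_{\mathrm F}-H_{\mathrm T}\}$ along the flow, use invariance of $\mu_b$, bound the second moment as $O(N\,b^{-4}((\beta-\beta_T)^2+C_1 b^{-1}))$ by locality and the $b^{-1/2}$ scaling of $(r_j,p_j)$, then apply Chebyshev---is exactly that argument; note also that $\mathbb{E}_{\mu_b}[\{J^{(m)},H_{\mathrm F}-H_{\mathrm T}\}]=0$ (since $\mathbb{E}_{\mu_b}[\{F,H_{\mathrm F}\}]=0$ by invariance and $\{J^{(m)},H_{\mathrm T}\}=0$), which is what makes the $O(N)$ rather than $O(N^2)$ variance scaling work.
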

The first reason for which the result is interesting is that it holds in the thermodynamic limit and shows that Toda integrals of motion are \emph{actually} almost-conserved quantities. Second, and -- from a certain point of view -- even more important, this result shows that it is quantitatively convenient to look at FPUT as a perturbation of the Toda model even in the thermodynamic limit. Indeed, when the tangency with the Toda chain is increased, also the times for which the Toda integrals of motion are adiabatic invariants increase (this last fact is evident when, in \eqref{eq:610Toda} or \eqref{eq:66Toda} one puts $\beta=\beta_T$).

Then, the only drawback is that Theorem \ref{thm:BamMas16} does not hold in the thermodynamic limit and Theorem \ref{thm:GMMP} holds for \emph{a finite number of Toda integrals of motion}. While for Theorem \ref{thm:BamMas16} there are serious reasons to believe it is optimal, The finite amount of Toda integrals of motion in Theorem \ref{thm:GMMP} is expected to be a technical limitation.

\subsection{KdV as almost-one-directional traveling waves}\label{subsec:TravellingFPU} In this section, we revisit the derivation of Zabusky and Kruskal of the KdV from the FPUT lattice \cite{Zabusky1965}. It is worth noting that, the analysis of exact solutions for nonlinear string is actually motivated by the FPUT problem (see \cite{Zabusky1962}). Then, we will extend their derivation to higher order KdV equations, a recent result obtained in \cite{Gallone2021}.

We start by introducing two \emph{physically dimensionless} analytic functions $P,Q: \mathbb{T} \times \mathbb{R} \to \mathbb{R}$ such that
\begin{equation}\label{eq:InterpolatingFields}
	\begin{split}
		p_j(t) \,&=\, \sqrt{\epsilon} P(x,\tau) \,\big|_{x=hj, \tau=ht}  \, ,\\
		q_j(t) \,&=\, \frac{\sqrt{\epsilon}}{h} Q(x,\tau) \big|_{x=hj,\tau=ht} \, .
	\end{split}
\end{equation}
Note that, from the equations of motion of the FPUT system, $\sum_{j=1}^N p_j$ is a constant of motion of the system. It is convenient to work with the condition
	$\sum_{j=1}^N p_j=0$, which translates into the condition of zero-average for $P$:
\begin{equation}\label{eq:ZeroAverageP}
	\int_{\mathbb{T}} P(x) \, dx \,=\, 0 \, .
\end{equation}

Inserting \eqref{eq:InterpolatingFields} in the FPUT equations of motion
\begin{equation}\label{eq:HamEqFPUT}
	\begin{split}
		\dot{p}_j(t) \, &= \, -\frac{\partial H}{\partial q_j}(q(t),p(t)) \, , \\
		\dot{q}_j(t) \, &= \, p_j(t) \, ,
	\end{split}
\end{equation}
we have
\begin{eqnarray}
	\dot{p}_j(t)\!\!\!\!&=&\!\!\!\!  h\sqrt{\epsilon} P_\tau(x,\tau) \big|_{x=hj, \tau=ht} \, , \label{eq:PdotC}\\
	\dot{q}_j(t)\!\!\!\!&=&\!\!\!\! \sqrt{\epsilon} Q_\tau(x,\tau) \big|_{x=hj, \tau=ht} \, , \label{eq:QdotC}\\
	-\frac{\partial H}{\partial q_j}(q(t),p(t)) \!\!\!\!&=&\!\!\!\! \Phi'\big(q_{j+1}(t)-q_j(t)\big) - \Phi'\big(q_j(t)-q_{j-1}(t)\big) \, . \label{eq:PhiPrimoPrimo}
\end{eqnarray}
The right-hand side of \eqref{eq:PhiPrimoPrimo} has a linear term, that is
\begin{equation}
	\begin{split}
		q_{j+1}(t)+&q_{j-1}(t)-2 q_j(t)  \\ &= \, \frac{\sqrt{\epsilon}}{h} \big[ Q(x+h,\tau)+Q(x-h,\tau)-2 Q(x,\tau) \big] \big|_{x=hj, \tau=ht} \\
		&=\frac{2 \sqrt{\epsilon}}{h} \sum_{r=1}^{+\infty} \frac{h^{2r}}{(2r)!} \partial_x^{2r} Q(x,\tau) \big|_{x=hj, \tau=ht}
	\end{split}
\end{equation}
where in the last step we used Taylor series. Keeping only terms formally of order at most $h^3$, we get
\begin{equation}\label{eq:LinearTermApproximated}
	q_{j+1}(t)+q_{j-1}(t)-2 q_j(t)  \, \cong \, \sqrt{\epsilon} h Q_{xx} + \frac{\sqrt{\epsilon} h^3}{12} Q_{xxxx} + \cdots \, .
\end{equation}
Concerning the quadratic term, we can write
\begin{equation}
	\begin{split}
		\alpha \big[&(q_{j+1}(t)-q_j(t))^2-(q_{j}(t)-q_{j-1}(t))^2 \big] \\
		&=\frac{\alpha \epsilon}{h^2} \left[ \left( \sum_{r=1}^{+\infty} \frac{h^r}{r!} \partial_x^r Q(x,\tau) \right)^2-\left(\sum_{r=1}^{+\infty} (-1)^r \frac{h^r}{r!} \partial_x^r Q(x,\tau) \right)^2 \right] \Big|_{x=hj,\tau=ht} \, .
	\end{split}
\end{equation}
Keeping only terms of order at most $\epsilon h$, we get
\begin{equation}\label{eq:CubicTermApproximated}
	\begin{split}
		\alpha \big[(q_{j+1}(t)-q_j(t))^2-&(q_{j}(t)-q_{j-1}(t))^2 \big]\, \\
		& \cong \, 2 \alpha \epsilon h Q_x(x,\tau) Q_{xx}(x,\tau) \big|_{x=hj,\tau=ht} \, .
	\end{split}
\end{equation}
Inserting first \eqref{eq:CubicTermApproximated}, \eqref{eq:LinearTermApproximated} in \eqref{eq:PhiPrimoPrimo}, then \eqref{eq:PhiPrimoPrimo}, \eqref{eq:PdotC}, \eqref{eq:QdotC} and \eqref{eq:InterpolatingFields} in \eqref{eq:HamEqFPUT} we get
\begin{equation}\label{eq:ApproximateSystemWaves}
	\begin{split}
		Q_\tau \,&=\, P \, ,\\
		P_\tau \, &= \, Q_{xx}+\frac{h^2}{12} Q_{xxxx} + 2 \alpha \sqrt{\epsilon} Q_x Q_{xx} \, .
	\end{split}
\end{equation}
Inserting the first equation in the second one, we see that $Q$ satisfies a \emph{Boussinesq equation}
\begin{equation}
	Q_{\tau \tau} \,=\, Q_{xx} + \frac{h^2}{12} Q_{xxxx} + 2 \alpha \sqrt{\epsilon} Q_x Q_{xx} \, ,
\end{equation}
which is a nonlinear perturbation of the wave equation. (Let us remark, at this point, that Zakharov gave an heuristic argument to interpret the long thermalization of FPUT in terms of the Boussinesq equation \cite{Zakharov1974}.) 

It is standard to split the flow of the wave equation into left- and right-traveling waves. This procedure, dates back at least to Riemann (see Chapter VIII, paragraph 2 in \cite{RiemannBook}) and, in fact, these are sometimes referred to as \emph{Riemann invariants}. We thus define
\begin{equation}\label{eq:LeftAndRightTV}
	\begin{split}
		L \,&:= \, \frac{Q_x +P}{\sqrt{2}} \, , \qquad \text{(Left-travelling wave)} \\
		R \, &:= \, \frac{Q_x-P}{\sqrt{2}} \, , \qquad \text{(Right-travelling wave)} \, .
	\end{split}
\end{equation}
At a linear level, the names for $R$ and $L$ are justified by the fact that this transformation maps the wave equation into a pair of transport equations
\begin{equation}
	\left\{ \begin{array}{l}
		Q_\tau=P \\
		P_\tau = Q_{xx}
	\end{array} \right. \, \qquad \longleftrightarrow
	\qquad \left\{ \begin{array}{l} 
	L_\tau = L_x \\
	R_\tau = - R_x 
	\end{array} \right. \, ,
\end{equation}
and their solution, that can be computed with the method of characteristics, consists in left- (or, respectively, right-) translations of the initial datum:
\begin{equation}
	L(x,\tau) \, = \, L_0(x+\tau) \, , \qquad R(x,\tau) \, = \, R_0(x-\tau) \, ,
\end{equation}
where $L_0$ and $R_0$ are the initial data.  Note that, due to \eqref{eq:ZeroAverageP} and the fact that $Q_x$ has zero-average on the torus, also $L$ and $R$ have zero-average.

Inserting \eqref{eq:LeftAndRightTV} into the system \eqref{eq:ApproximateSystemWaves}, one obtains
\begin{equation}\label{eq:EquationsLandR}
	\begin{split}
		L_\tau \,&=\, L_x + \frac{h^2}{24} (R_{xxx}+L_{xxx})+\frac{\alpha \sqrt{\epsilon}}{\sqrt{2}}(R+L)(R_x+L_x) \,  \\
		R_\tau\,&=\, -R_x - \frac{h^2}{24}(R_{xxx}+L_{xxx})-\frac{\alpha \sqrt{\epsilon}}{\sqrt{2}} (R+L)(R_x+L_x) \, .
	\end{split}
\end{equation}
Note that equations for $R$ are obtained from the equations for $L$ by exchanging $L \leftrightarrow R$ and by changing sign. This is the manifestation of the \emph{time-reversal symmetry} of the model. 

In the $L$ and $R$ variables, equations for $L$ and $R$ are coupled. If we assume that one of the two fields is small with respect to the other, e.g.~$R \ll 1$, we obtain 
\begin{equation}\label{eq:KdVForL}
	L_\tau \,=\, L_x + \frac{h^2}{24} L_{xxx} + \frac{\alpha \sqrt{\epsilon}}{\sqrt{2}} L L_x
\end{equation}
which is the Korteweg-de Vries equation.

Despite of being purely formal and there should be a lot to discuss about the reminder and the reduction from the lattice to the continuum, this derivation has a lot of interesting features. 

First, in the original FPUT report \cite{FPU55} , the authors state that the discretization is a model of a \emph{one dimensional continuum}. Then there is a series of attempts to explain the FPUT paradox by analyzing the nonlinear equation arising from a string whose discretization would yield a FPUT-like dynamical system, among them we mention the works by Zabusky and Kruskal \cite{Zabusky1962,Kruskal1964} and the one by Lax \cite{Lax1964} that culminated in the numerical simulation of the dynamics of the Korteweg-de Vries equation in \cite{Zabusky1965}. In this latter paper, the authors observe numerically the presence of ``a nonlinear physical process in which interacting localized pulses do not scatter irreversibly.''

Then, the authors say
\begin{quote}
	we should emphasize that at $T_R$ [that is, the \emph{recurrence time}] all the solitons arrive almost in the same phase and almost reconstruct the initial state through the nonlinear interaction. This process
proceeds onwards, and at $2T_R$ one again has
a ``near recurrence'' which is not as good as
the first recurrence. Tuck,
at the Los Alamos Scientific Laboratories, observed this phenomenon as well as eventual ``superrecurrences'' in calculations for a similar problem. We can understand these phenomena in terms of soliton interactions. For $t >T_R$ the successive focusings get poorer due to solitons arriving more and more out of phase with each other and then eventually gets better again when their phase relationship changes. Furthermore, because the solitons are remarkably stable entities, preserving their identity through numerous interactions, one would expect this system to exhibit thermalization (complete energy sharing among the corresponding linear normal
modes) only after extremely long times, if ever.
\end{quote}

Thus, summarizing, the presence of solitons, i.e.~extremely stable solitary waves, is given as a possible justification of the FPUT recurrence and the ``superrecurrence'' observed by Tuck and Tsingou-Menzel in \cite{Tuck1972} (see Figure \ref{fig:Superrec}). 

Continuing in the investigation of nonlinear waves, in 1967 Gardner, Greene, Kruskal and Miura found a method to solve the Korteweg-de Vries equation \cite{Gardner1967}, opening the avenue to the analysis of \emph{infinite-dimensional} integrable systems. 

At present, it is well-known that KdV is a Hamiltonian PDE, whose Hamiltonian functional is
\begin{equation}
	\mathscr{H}_{\mathrm{KdV}_3}(U) \, = \, \int \left[-\frac{a}{2} (U_{x})^2+\frac{b}{6}U^3 \right] \, dx
\end{equation}
with $a,b \in \mathbb{R}$ and with Poisson tensor $J_K=\partial_x$. Being integrable means that KdV has \emph{infinitely many} constants of motion that are in involution among them. Also, using Lax-Magri recurrence (that has its deepest roots in the bi-Hamiltonian structure of the Korteweg-de Vries equation) it is possible to construct recursively all of them. The first non-trivial one, after $\mathscr{H}_{\mathrm{KdV}_3}$, is
\begin{equation}
	\mathscr{H}_{\mathrm{KdV}_5}(U) \, = \, \int  \Big[\frac{5}{36} b^2 U^4 + \frac{5}{6} b a U^2 U_{xx} + a^2 (U_{xx})^2 \Big]\, dx
\end{equation}
whose vector field is the KdV\textsubscript{5} equation, given by $U_t = J_K \nabla_{L^2} \mathscr{H}_{\mathrm{KdV}_5}$
\begin{equation}
	U_t \, = \, \frac{5}{3} b^2 U^2 U_x + \frac{20}{3} a b U_x U_{xx} + \frac{10}{3} a b U U_{xxx} + 2 a^2 U_{xxxxx} \, .
\end{equation}
The collection of all the equations KdV\textsubscript{n}, for $n$ odd is often referred to as the \emph{Korteweg-de Vries hierarchy}. The first terms of the hierarchy are
\begin{equation}\label{eq:KdVHierarchy}
	\begin{split}
		U_t \, &=\, U_x \\
		U_t \, &=\, b U U_x +a U_{xxx} \\
		U_t \, &=\,  \frac{5}{3} b^2 U^2 U_x + \frac{20}{3} a b U_x U_{xx} + \frac{10}{3} a b U U_{xxx} + 2 a^2 U_{xxxxx}
	\end{split}
\end{equation}

Second, we discuss the parameters appearing in the equation \eqref{eq:KdVForL}. Indeed, we see that the nonlinear term is proportional to $\alpha \sqrt{\epsilon}$. This is a signature of the fact that $\alpha$, once different from zero, is actually an irrelevant parameter in the analysis of the FPUT problem and its sole role is to set the scale of energies. Indeed, keeping $\epsilon$ as free parameter, one can introduce a new $\epsilon'=\alpha^2 \epsilon$ and then the FPUT dynamics would depend only on $\epsilon'$ and new parameters $\beta'=\frac{\beta}{\alpha^2}$, $\gamma'=\frac{\gamma}{\alpha^3}$, $\cdots$. 

We can think of this ``KdV approximation'' of the FPUT system to hold for any choice of the parameters $\epsilon$ or $h$. Nevertheless we can see that different relative scalings of the limit can be dramatic. Indeed, if $\sqrt{\epsilon} \lesssim h^2$, that is 
\begin{equation}
	\epsilon \, \lesssim \frac{1}{N^4} \, ,
\end{equation}
then in the KdV equation the nonlinear and the dispersive term weight ``the same'' or, at most, the dispersive term wins and therefore there is no problem of global well-posedness for the approximating equation. If, on the contrary, we let $\sqrt{\epsilon} \gg h^2$, we can formally neglect the dispersive term and we would obtain a inviscid Burgers equation which develops shock at finite time. We will come back later to this point, with the analysis of the works \cite{Poggi1995,Gallone2022-PRL,Gallone2024} and with the analysis of the blow up of the underlying integrable approximation in \cite{Bambusi-Libro}.

Third, if the left-traveling wave is an exact solution for the wave equation, due to the form of 
\eqref{eq:EquationsLandR}, it is not an exact solution for the \emph{coupled} system. From a geometric point of view, this amounts to say that the set $\{R=0\}$ is an invariant manifold for the wave equation, but it is not for the coupled system. One can therefore ask whether it is possible or not to construct the set of \emph{almost}-left traveling waves. And in case of positive answer, are the equations of motion on the manifold approximately integrable?

This is the question that we posed in \cite{Gallone2021} and we are reviewing here the construction. 
To proceed in the construction, it is convenient to introduce the discrete derivative of step $h$ as
\begin{equation}\label{eq:Dh}
	D_h U(x) \, := \, \frac{U(x+h/2)-U(x-h/2)}{h}
\end{equation}
whose asymptotic expansion for $h$ small is 
\begin{equation}\label{eq:DhAsymptExp}
	D_h \, = \, \partial_x + \frac{h^2}{24} \partial_{x}^3 +\frac{h^4}{1920} \partial_x^5+O(h^6) \, .
\end{equation}
Then, the equations of motion of the fields $Q$ and $P$ are, from \eqref{eq:PdotC}, \eqref{eq:QdotC}, \eqref{eq:PhiPrimoPrimo}, and \eqref{eq:HamEqFPUT}
\begin{equation}
	\begin{split}
		Q_\tau(x) \, &= \,P(x) \, , \\
		P_\tau(x) \, &= \, \frac{1}{h \sqrt{\epsilon}} \Big[ \Phi_{\mathrm{F}}'\big({\textstyle\frac{\sqrt{\epsilon}}{h} (Q(x+h)-Q(x))} \big) \\
		&\qquad \qquad -\Phi_{\mathrm{F}}'\big({\textstyle\frac{\sqrt{\epsilon}}{h} (Q(x)-Q(x-h))} \big) \Big] \, .
	\end{split}
\end{equation}
At this point we can use the definition \eqref{eq:Dh}, to get
\begin{equation}
	\begin{split}
		\frac{1}{h \sqrt{\epsilon}} \Big[& \Phi_{\mathrm{F}}'\big({\textstyle\frac{\sqrt{\epsilon}}{h} (Q(x+h)-Q(x))} \big)  -\Phi_{\mathrm{F}}'\big({\textstyle\frac{\sqrt{\epsilon}}{h} (Q(x)-Q(x-h))} \big) \Big] \\
		&=\frac{1}{\sqrt{\epsilon}} D_h \Phi'_{\mathrm{F}}\big({\textstyle \frac{\sqrt{\epsilon}}{h}(Q(x+h/2)-Q(x-h/2))}\big) \\
		&=\frac{1}{\sqrt{\epsilon}} D_h \Phi'_F\big(\sqrt{\epsilon} D_h Q(x)\big) \, .
	\end{split}
\end{equation}
Therefore, the equations of motion are compactly written as
\begin{equation}
	\begin{split}
		Q_\tau \, &=\, P \\
		P_\tau \,&=\, \frac{1}{\sqrt{\epsilon}} D_h \Phi_{\mathrm{F}}'(\sqrt{\epsilon} D_h Q) \, .
	\end{split}
\end{equation}
This rewriting is useful because we now introduce  different \emph{Riemann invariants} as
\begin{equation}
	\begin{split}
		\lambda \, &:= \, \frac{D_h Q+P}{\sqrt{2}} \, , \qquad \, \text{(Left-travelling wave)} \, , \\
		\rho \, &:=\, \frac{D_h Q - P}{\sqrt{2}} \, , \qquad \, \text{(Right-travelling wave)} \, .
	\end{split}
\end{equation}
Note that, due to the asymptotic expansion \eqref{eq:DhAsymptExp}, the new Riemann invariants are $h^2$ close to the ones defined in \eqref{eq:LeftAndRightTV}. 

The equations of motion in these new variables are
\begin{equation}\label{eq:LRwithF}
	\begin{split}
		\lambda_\tau \, &=\, \mathcal{F}(\lambda,\rho) \, , \\
		\rho_\tau \, &= \,-\mathcal{F}(\rho,\lambda) \, ,
	\end{split}
\end{equation}
where $\mathcal{F}(U,V)$ is given by
\begin{equation}\label{eq:FunctionalF}
	\begin{split}
		\mathcal{F}(U,V) \, &= \,U_x+ \Big[\frac{h^2}{24} U_{xxx}+ \frac{\alpha \sqrt{\epsilon}}{\sqrt{2}} (U+V)(U+V)_x \Big] \\
		&\quad + \Big[ \frac{h^4}{1920} U_{xxxxx}+\frac{\alpha \sqrt{\epsilon} h^2}{8 \sqrt{2}} (U+V)_x(U+V)_{xx} \\
		&\quad + \frac{\alpha \sqrt{\epsilon} h^2}{24 \sqrt{2}} (U+V) (U+V)_{xxx} + \frac{3 \beta \epsilon}{4} (U+V)^2 (U+V)_x \Big] + \cdots \, .
	\end{split}
\end{equation}
We now want to construct recursively an invariant manifold for the system \eqref{eq:LRwithF} close to the invariant manifold of the one-directional traveling waves for the wave equations. Then, we assume there exists an equation slaving $\rho$ to $\lambda$ of the form
\begin{equation}
	\rho \, = \, \mathcal{G}(\lambda) \, , \qquad \mathcal{G}=O(\sqrt{\epsilon}) \, .
\end{equation}
The vector field $\mathcal{G}$ can be obtained recursively from the second equation of \eqref{eq:LRwithF}, indeed the system \eqref{eq:LRwithF} becomes
\begin{equation} \label{eq:SystemWIthInvariant}
	\begin{split}
		\lambda_\tau \, &= \, \mathcal{F}(\lambda,\mathcal{G}(\lambda)) \, ,\\
		\mathcal{G}_\tau \,&= \, -\mathcal{F}(\mathcal{G}(\lambda),\lambda) \, .
	\end{split}
\end{equation}
Using the chain rule, we obtain for the Left-hand side of the second equation
\begin{equation}\label{eq:LHSConsistency}
	\sqrt{\epsilon} \mathcal{G}_2(\lambda)_\tau=\sqrt{\epsilon} \nabla_{L^2} \mathcal{G}(\lambda) \lambda_\tau = \sqrt{\epsilon} \nabla_{L^2} \mathcal{G}(\lambda) \lambda_x + \cdots \, ,
\end{equation}
and for the Right-hand side we get
\begin{equation}\label{eq:RHSConsistency}
	- \mathcal{F}(\sqrt{\epsilon} \mathcal{G}_2(\lambda), \lambda) \, = \, - \sqrt{\epsilon} \nabla_{L^2} \mathcal{G}_2(\lambda) \lambda_x - \frac{\alpha \sqrt{\epsilon}}{\sqrt{2}} \lambda \lambda_x + \cdots \, .
\end{equation}
Putting together \eqref{eq:LHSConsistency} and \eqref{eq:RHSConsistency} we get, up to higher order terms,
\begin{equation}
	\nabla_{L^2}\mathcal{G}_2(\lambda) \lambda_x \, = \, - \frac{\alpha}{2 \sqrt{2}} \lambda \lambda_x \, .
\end{equation}
If we now require $\mathcal{G}_2(\lambda)$ to have zero-average on the torus, we get as unique solution 
\begin{equation}
	\mathcal{G}_2(\lambda) \, = \, \frac{\alpha}{4 \sqrt{2}} \big( \langle \lambda^2 \rangle - \lambda^2 \big) \, .
\end{equation}
Here and henceforth we denote by $\langle \lambda \rangle =\int_{\mathbb{T}} \lambda \, dx$ the average of $\lambda$ on the torus.

Summarizing, so far we obtained that the equation $\rho=\frac{\alpha \sqrt{\epsilon}}{4 \sqrt{2}}(\langle \lambda^2 \rangle - \lambda^2)$ defines an almost-invariant manifold for the FPUT system. On this manifold, the equations of motion for the variable $\lambda$ are given by \eqref{eq:FunctionalF} and \eqref{eq:LRwithF}
\begin{equation}\label{eq:OnInvariantManifold}
	\begin{split}
		\lambda_\tau \,&=\, \mathcal{F}(\lambda,\sqrt{\epsilon} \mathcal{G}_2(\lambda)) \\
		&=\, \lambda_x + \Big[\frac{h^2}{24} \lambda_{xxx} +\frac{\alpha \sqrt{\epsilon}}{\sqrt{2}} \lambda \lambda_x \Big]+\Big[ \frac{h^4}{1920} \lambda_{xxxxx} \\
		&\quad +\frac{\alpha \sqrt{\epsilon} h^2}{24 \sqrt{2}} (3 \lambda_x \lambda_{xx}+\lambda \lambda_{xxx}) +\frac{3 \epsilon}{4}\Big( \beta - \frac{\alpha^2}{2} \Big) \lambda^2 \lambda_x + \frac{\alpha^2 \epsilon}{8} \langle \lambda^2 \rangle \lambda_x \Big] \, .
	\end{split}
\end{equation}

One would now like to study equation \eqref{eq:OnInvariantManifold}. In the spirit of perturbation theory, one could ask, for example, whether there exists a close-to-identity transformation of vector fields that bring \eqref{eq:OnInvariantManifold} in an equation that is a linear combination of the first terms of the KdV hierarchy \eqref{eq:KdVHierarchy}.

This latter problem has been analyzed by Hiraoka and Kodama in a series of works for equations on the line \cite{Kodama1985,Kodama1987,Hiraoka} and it has been adapted on the torus by Ponno, Rink and myself in \cite[Theorem 4]{Gallone2021}. The net result is that, when $\alpha \neq 0$, for all values of $\alpha$ and $\beta$, there exists a near to identity transformation of vector fields that brings \eqref{eq:OnInvariantManifold} into an integrable equation (up to terms of order $h^6$). This is in agreement with the most recent numerical experiments that according to which, the metastable state persists for times of order $\epsilon^{-5/4}$. 

Indeed, if we link the two parameters $\epsilon=h^4$, we showed formally that the dynamics is integrable up to order $h^4$. That is, KdV actions are almost preserved for times $\tau \lesssim h^{-4}$. Recalling that $\tau=ht$, we see that KdV actions are preserved up to times $t \lesssim h^{-5}$, that is $\epsilon^{-\frac{5}{4}}$, that is the same time-scale observed in Figure~\ref{fig:EvolutionSpectrum}.

Then, one may be curious to see what happens at the next order. Applying the same idea (computations are longer), one has that the Kodama transformation brings the equations of motion on the invariant manifold on an integrable form \emph{if and only if}
\begin{equation}
	14 \alpha^3-27 \alpha \beta + 12 \gamma \, = \, 0 \, .
\end{equation}
Note that, this latter relation is satisfied for parameters of the Toda model, which are
\begin{equation}
	\beta\,=\, \beta_T \, = \, \frac{2}{3} \alpha^2 \, , \qquad \gamma \, = \, \gamma_T \, = \, \frac{1}{3} \alpha^3 \, . 
\end{equation}

Even though these results are far from conclusive, as many justifications and theorems are still missing, it is remarkable that the perturbative order in $h$ at which the normal form becomes non-integrable coincides with numerical observations, also highlighting the special role of the Toda chain.

\subsection{The continuous model and KdV as resonant normal form}\label{subsect:KdVResonantNF}
The idea one wants to follow is to \emph{embed} the FPUT chain in a continuous model. At first, one follows the construction of analytic functions $Q$ and $P$ as \eqref{eq:InterpolatingFields}. Then, one inserts \eqref{eq:InterpolatingFields} in the Hamiltonian
\begin{equation}
	\begin{split}
		H(Q,P) \, &= \, \sum_{j \in \mathbb{Z}_N} \left[ \frac{\epsilon P^2(x)}{2} +  \Phi_F\big(\sqrt{\epsilon} D_hQ(x+h/2) \big) \right] \Bigg|_{x=jh,\tau=ht} \\
		&=\,\epsilon N \frac{1}{N} \sum_{j \in \mathbb{Z}_N} \left[ \frac{P^2(x)}{2} +  \frac{1}{\epsilon} \Phi_F\big(\sqrt{\epsilon} D_hQ(x+h/2) \big) \right] \Bigg|_{x=jh,\tau=ht} \, .
	\end{split}
\end{equation}
With a suitable limiting procedure, we now decouple the limit $h\to 0$ and the limit $N\to 0$. The reader might be confused by this procedure, and it is natural to be. We imagine the limiting procedure $N \to +\infty$, to be done on functions $Q$ and $P$ whose support, in Fourier space, is constant in $k/N$. A prototypical example is a set of data for which only the first ten percent of Fourier modes is initially excited (as it is done in the numerical simulations of \cite{Benettin2008-1D,Benettin2011,Benettin2022}). In this \emph{adapted} limiting procedure, the first sum then becomes a Riemann sum and 
\begin{equation}
	\mathscr{H}(Q,P) \, := \,  \widetilde{\lim_{N \to +\infty}} \frac{1}{\epsilon N} H(Q,P)
\end{equation}
which gives
\begin{equation}\label{eq:HCont}
	\mathscr{H}(Q,P) \, = \, \int_{\mathbb{T}}\left[ \frac{P^2(x)}{2} + \frac{1}{\epsilon} \Phi_F\big(\sqrt{\epsilon} D_h Q(x+h/2) \big) \right] \, dx \, .
\end{equation}

Then, using the explicit expression of $\Phi_F$ \eqref{eq:FPUT-Potential} and the asymptotic expansion of $D_h$ in \eqref{eq:DhAsymptExp} we get, up to an error of order $O(h^4)$,
\begin{equation}
	\begin{split}
			\frac{1}{\epsilon} \Phi_F \big(&\sqrt{\epsilon} D_h Q(x+h/2) \big) \\ &=\frac{1}{2}(D_h Q(x+h/2))^2+\frac{\alpha \sqrt{\epsilon}}{3}(D_h Q(x+h/2))^3 + O(\epsilon,h^4) \\
			&=\frac{1}{2} (Q_{x}(x))^2+ \frac{h}{2}Q_x(x) Q_{xx}(x) + \frac{h^2}{8}(Q_{xx}(x))^2 \\
			&\quad+ \frac{h^2}{6}Q_x(x) Q_{xxx}(x)+ \frac{\alpha \sqrt{\epsilon}}{3}(Q_x(x))^3
	\end{split}
\end{equation}
Using now periodicity of $ Q(x)$, we note that
\begin{equation}
	\begin{split}
		\int_{\mathbb{T}} Q_x(x) Q_{xx}(x) \, dx &=\frac{1}{2} \Big[ (Q_x(1))^2-(Q_x(0))^2\Big]=0 \, , \\
		\int_{\mathbb{T}} Q_x(x) Q_{xxx}(x) \, dx &=-\int_{\mathbb{T}} (Q_{xx}(x))^2 \, dx \, ,
	\end{split}
\end{equation}
and therefore, we can rewrite \eqref{eq:HCont} as
\begin{equation}\label{eq:Tu6517}
	\begin{split}
		\mathscr{H}(Q,P) \, &=\, \int_{\mathbb{T}} \left[\frac{P^2(x)}{2}+\frac{1}{2}(Q_x(x))^2 \right] \, dx  \\
		&\quad+ \int_{\mathbb{T}} \left[-\frac{h^2}{48} (Q_{xx}(x))^2+ \frac{\alpha \sqrt{\epsilon}}{3}(Q_x(x))^3 \right] \, d x + O(\epsilon,\sqrt{\epsilon}h^2,h^4) \, .
	\end{split}
\end{equation}
We can now introduce the left and right traveling waves as in \eqref{eq:LeftAndRightTV}. 
The transformation $(Q,P) \mapsto (L,R)$ is non canonical and transforms the Poisson tensor from $J_E$ to $J_G$ (see \eqref{eq:PoissonTensors}). Then, calling $\mathscr{K}$ the transformed Hamiltonian, we have
\begin{equation}\label{eq:LRHamiltonianFPUT}
	\begin{split}
		\mathscr{K}(L,R) \,&=\, \int_{\mathbb{T}} \left[ \frac{L^2}{2}+\frac{R^2}{2}\right] \, dx \\
		&\quad +\int_{\mathbb{T}} \left[ \frac{h^2}{16}(L_x+R_x)^2+\frac{\alpha \sqrt{\epsilon}}{6 \sqrt{2}}(L+R)^3 \right] \, dx + O(\epsilon,\sqrt{\epsilon}h^2,h^4) \, .
	\end{split}
\end{equation}

We are now in the position to apply perturbation theory for infinite dimensional systems. Indeed, defining
\begin{equation}
	\begin{split}
		\mathscr{K}_0(L,R) \, &= \, \int_{\mathbb{T}} \left[ \frac{L^2}{2}+ \frac{R^2}{2} \right] \, dx \, , \\
		\mathscr{P}_1(L,R) \, &= \, \int_{\mathbb{T}} \left[ \frac{h^2}{16}(L_x+R_x)^2+\frac{\alpha \sqrt{\epsilon}}{6 \sqrt{2}}(L+R)^3 \right] \, dx \, ,
	\end{split}
\end{equation} 
we have that the flow of $\mathscr{K}_0$ is periodic of period $T=1$ and thus Proposition \ref{prop:Averaging} can be applied. In other words, there exists a close-to-identity canonical transformation that maps the Hamiltonian $\mathscr{K}$ into $\tilde{\mathscr{K}}$, with
\begin{equation}\label{eq:HamKTilde}
	\tilde{\mathscr{K}}(\lambda,\rho) \, = \, \mathscr{K}_0(\lambda,\rho) + \mathscr{Z}_1(\lambda,\rho) + \cdots 
\end{equation}
with $\mathscr{Z}_1(\lambda,\rho)=[\mathscr{P}_1]_{\mathscr{K}_0}(\lambda,\rho)$. Then, one has to compute the latter. With this aim, we start by introducing the anti-derivative operator, that is defined as
\begin{equation}\label{eq:DxMuno}
	\partial_x^{-1} U(x) \,:=\, \sum_{k \in \mathbb{Z} \setminus \{0\}} \frac{1}{2 \pi i k} \hat{U}_k e^{2 \pi i k x} \, .
\end{equation}
Note that, in particular
\begin{equation}
	\partial_x \partial_{x}^{-1} U \, = \, U-\langle U \rangle_{\mathbb{T}} \, , \qquad \partial_x^{-1} \partial_{x} U \, = \, U-\langle U \rangle_{\mathbb{T}} \, .
\end{equation}
Then, the following Lemma summarizes the computations needed to obtain the normal form and the generator in Proposition \ref{prop:Averaging} for the Hamiltonian \eqref{eq:Tu6517}.

\begin{lemma}\label{lem:AveragingLemmaComput}
	Let $U$ and $V$ be functions on the torus $\mathbb{T}$. Then
	\begin{itemize}
		\item[(a)] for any $U,V \in L^1(\mathbb{T})$,
		\begin{equation}
			\int_{\mathbb{T}} \int_0^1	U(x-s) V(x+s) \, ds \, dx \,= \, \left( \int_{\mathbb{T}} U(x) \, dx \right) \left( \int_{\mathbb{T}} V(x) \, dx \right) \, ,	
		\end{equation}
		\item[(b)] for any $U,V \in L^2(\mathbb{T})$
		\begin{equation}
			\begin{split}
				\int_{\mathbb{T}} \int_0^1 &s \,  U(x-s) V(x+s) \, ds \, dx \\
				&=\,\frac{1}{2} \left(\int_{\mathbb{T}} U(x) \, dx \right) \left( \int_{\mathbb{T}} V(x) \, dx \right)+\frac{1}{2} \int_{\mathbb{T}} U(x) \partial_x^{-1} V(x) \, dx
			\end{split}
		\end{equation}
	\end{itemize}
\end{lemma}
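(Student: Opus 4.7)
The strategy for both identities is the same: push the $x$-integral inside, compute it explicitly by Fourier series, and then perform the elementary $s$-integration against the weight $1$ in (a) or $s$ in (b).

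First I would apply Fubini and, for fixed $s\in[0,1]$, change variables $y=x-s$ on $\mathbb{T}$ so that $x+s=y+2s$, obtaining
\begin{equation*}
  \int_{\mathbb{T}} U(x-s)\,V(x+s)\,dx \;=\; \int_{\mathbb{T}} U(y)\,V(y+2s)\,dy.
\end{equation*}
Expanding $U(y)=\sum_k \hat U_k e^{2\pi i k y}$, $V(y)=\sum_k \hat V_k e^{2\pi i k y}$ and using orthogonality of the characters on $\mathbb{T}$ yields the closed form
\begin{equation*}
  \int_{\mathbb{T}} U(y)\,V(y+2s)\,dy \;=\; \sum_{k\in\mathbb{Z}} \hat U_k\, \hat V_{-k}\, e^{-4\pi i k s}.
\end{equation*}

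For (a), integrating this expression in $s$ over $[0,1]$ kills every $k\neq 0$ mode, since $\int_0^1 e^{-4\pi i k s}\,ds=0$ for $k\in\mathbb{Z}\setminus\{0\}$ (as $e^{-4\pi i k}=1$). Only the $k=0$ term survives, giving $\hat U_0\,\hat V_0=\langle U\rangle\langle V\rangle$, which is the claim. For (b), the same mechanism applies with weight $s$. Integration by parts yields $\int_0^1 s\,e^{-4\pi i k s}\,ds=\tfrac{1}{2}$ for $k=0$ and $\tfrac{-1}{4\pi i k}$ for $k\neq 0$. The $k=0$ contribution produces the first term $\tfrac{1}{2}\langle U\rangle\langle V\rangle$. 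For the remaining sum I would recognize the factor $\tfrac{-1}{4\pi i k}=\tfrac{1}{2}\cdot\tfrac{1}{-2\pi i k}$ as one half of the Fourier multiplier defining $\partial_x^{-1}$ in \eqref{eq:DxMuno}, so that Parseval gives
\begin{equation*}
  \sum_{k\neq 0} \hat U_k\,\hat V_{-k}\,\frac{-1}{4\pi i k} \;=\; \frac{1}{2}\int_{\mathbb{T}} U(x)\,\partial_x^{-1}V(x)\,dx,
\end{equation*}
which is the second term in (b).

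The main (mild) obstacle is purely bookkeeping: keeping the signs and index swaps ($k\leftrightarrow -k$) consistent between the Fourier expansion of the product and the Fourier definition of $\partial_x^{-1}$. The interchange of summation and integration is routine: in the $L^2$ setting of (b) the Fourier series of $U$ and of $\partial_x^{-1}V\in L^2(\mathbb{T})$ are unconditionally convergent and $[0,1]$ is bounded, so Fubini applies on partial sums and one passes to the limit. For (a), where only $U,V\in L^1$ is assumed, the same conclusion follows either by the density of trigonometric polynomials or by a direct Fubini argument on $[0,1]\times\mathbb{T}$, since $(s,x)\mapsto U(x-s)V(x+s)$ is integrable there.
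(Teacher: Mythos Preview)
Your proof is correct and follows essentially the same approach as the paper: expand in Fourier series, reduce the $x$-integral to a Kronecker delta, and then evaluate the remaining $s$-integral (trivial for (a), an elementary integration by parts for (b)) to recover the $\partial_x^{-1}$ multiplier. The only cosmetic difference is that you first perform the substitution $y=x-s$ to collapse to a single Fourier index, whereas the paper carries the double sum over $k,k'$ and lets the two Kronecker deltas do the work; the computations are otherwise identical.
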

\begin{proof}
	Both (a) and (b) can be proved by passing to Fourier coefficients. Indeed, for (a), we have
	\begin{equation}
		\begin{split}
			\int_{\mathbb{T}} \int_0^1 &U(x-s) V(x+s) \, ds \, dx \,\\
			&= \, \int_0^1 \int_0^1 \sum_{k,k' \in \mathbb{Z}} \hat{U}_k \hat{V}_{k'} e^{2 \pi i k (x-s)} e^{2 \pi i k' (x+s)} \, dx \, ds \\
			&= \sum_{k,k' \in \mathbb{Z}} \hat{U}_k \hat{V}_{k'} \delta_{k+k',0} \delta_{k-k',0} \, = \, \hat{U}_0 \hat{V}_0 \, .
		\end{split}
	\end{equation}
	
	Concerning (b), we have 
	\begin{equation}
		\begin{split}
			\int_0^1 \int_{\mathbb{T}} s U(x+s) V(x-s) \, ds \, dx \, &= \, \sum_{k \in \mathbb{Z}} \hat{U}_k \hat{U}_{-k} \int_0^1 s \, e^{4 \pi i k s} \, ds \\
			&=\sum_{k \in \mathbb{Z}} \hat{U}_k \hat{V}_{-k} \left(\frac{1}{2} \delta_k+\frac{1}{4 \pi i k}(1-\delta_{k,0}) \right) \, .
		\end{split}
	\end{equation}
	One gets the thesis by recalling the definition of $\partial_x^{-1}$ in \eqref{eq:DxMuno}.
\end{proof}
Applying Lemma \ref{lem:AveragingLemmaComput}, we get
\begin{equation}
	\mathscr{Z}_1(\lambda,\rho) \, = \, \int_{\mathbb{T}} \left[\frac{h^2}{48} \big((\lambda_x)^2+(\rho_x)^2 \big)+\frac{\alpha \sqrt{\epsilon}}{6 \sqrt{2}} \big(\lambda^3+\rho^3\big) \right] \, dx
\end{equation}
and for the generating function
\begin{equation}
	\mathscr{G}_1(\lambda,\rho) \, = \, \int_0^1 \Big( \frac{\alpha \sqrt{\epsilon}}{4 \sqrt{2}} \big( \rho^2 \partial_x^{-1} \lambda - \lambda^2 \partial_x^{-1} \rho \big)+\frac{1}{2} \frac{h^2}{4!} \rho \lambda_x \Big) \, dx \, .
\end{equation}

Note that, the equations of motion associated to $\widetilde{\mathscr{K}}$ are decoupled at the leading order. This is to say that, at leading orders, there is no interaction between left- and right-traveling waves $\lambda$ and $\rho$. Thus, the normal form of the continuum version of the FPUT model consists in a \emph{pair of counter-propagating} Korteweg-de Vries equations:
\begin{equation}\label{eq:EquationsOfMotion12}
	\begin{split}
		\lambda_\tau \, &=\, \lambda_x + \frac{\alpha \sqrt{\epsilon}}{\sqrt{2}} \lambda \lambda_x-\frac{h^2}{24} \lambda_{xxx} + \cdots \\
		\rho_\tau \, &=\, -\rho_x - \frac{\alpha \sqrt{\epsilon}}{\sqrt{2}} \rho \rho_x + \frac{h^2}{24} \rho_{xxx} + \cdots \, . 
	\end{split}
\end{equation}
Note that this construction is coherent with Zabusky and Kruskal's idea and, if $\rho \sim 0$ at $t=0$, then $\rho_\tau=O(\epsilon,\sqrt{\epsilon}h^2,h^4)$. That is, the renormalized right traveling wave remains small for large times.

The next question we would like to address, is how to connect the behavior of the continuum model to the lattice or, in other words, how to get information on the lattice dynamics from the pair of counter-propagating Korteweg-de Vries? The connection has been clarified by \cite{Bambusi2006}. In particular the following theorem holds

\begin{theorem}[Bambusi-Ponno \cite{Bambusi2006}]\label{thm:Bambusi-Ponno}
	Consider an initial condition of the form
	\begin{equation}
		\frac{E_{1/N}(0)}{N} \, = \, C_0 \mu^4 \, , \qquad E_{k/N}(0) \, = \, 0 \, , \qquad \forall k \neq 1 \, ,
	\end{equation}
	where $C_0$ is any fixed constant and $\mu=h=1/N \ll 1$. 
	
	Then, for any fixed time $T_f$ there exist positive constants $\mu^*$, $\sigma$, $C_1$ and $C_2$ (dependent on $C_0$ and $T_f$) such that, for all $k, \mu < \mu^*$ and $|t| \leq T_f/ \mu^3$
	\begin{equation}\label{eq:ExponentialDecay}
		\frac{E_{k/N}(t)}{N} \, \leq \mu^4 C_1 e^{-\sigma \frac{k}{N \mu}} \, .
	\end{equation}
\end{theorem}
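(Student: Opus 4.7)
The plan is to rigorously implement the formal derivation of Subsection~\ref{subsect:KdVResonantNF}: embed the FPUT lattice in a continuum model, extract the decoupled pair of KdV equations as a resonant normal form, exploit analyticity persistence of the KdV flow on $\mathbb{T}^1$, and transfer the exponential Fourier decay back to the lattice variables.

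First I would fix the scaling by setting $h=\mu=1/N$ and tying the amplitude to the discretization through $\sqrt{\epsilon}\asymp h^2$; this is precisely the balance under which the nonlinear and dispersive terms in \eqref{eq:EquationsOfMotion12} act on the same timescale. The hypothesis $E_{1/N}(0)=C_0\mu^4 N$, $E_{k/N}(0)=0$ for $k\neq 1$, together with the interpolation \eqref{eq:InterpolatingFields}, then corresponds to initial fields $Q(\cdot,0),P(\cdot,0)$ that are entire, with only $k=\pm1$ Fourier modes nonzero. Passing to the Riemann variables $\lambda,\rho$ of \eqref{eq:LeftAndRightTV} gives $\lambda(\cdot,0)\propto \cos(2\pi x)$ and $\rho(\cdot,0)=O(h^2)$, so the initial configuration already lives essentially on the one-directional almost-invariant manifold.

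Next I would apply Proposition~\ref{prop:Averaging} to the continuous Hamiltonian \eqref{eq:LRHamiltonianFPUT} to obtain a close-to-identity canonical transformation $\Phi_{\mathscr{G}_1}^{1}$ of size $O(\sqrt{\epsilon})$ taking $\mathscr{K}$ to $\widetilde{\mathscr{K}}$, whose equations of motion are \eqref{eq:EquationsOfMotion12}: two decoupled KdV equations in $\lambda$ and $\rho$, plus a remainder of order $O(\epsilon,\sqrt{\epsilon}h^2,h^4)$. On the slow time $\tau'=\mu^{2}\tau/8$, the equation for $\lambda$ becomes a standard KdV on $\mathbb{T}^1$ with $O(1)$ analytic initial datum. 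Global well-posedness in a Gevrey-analytic class then produces constants $\sigma>0$ and $C_1>0$, depending only on $C_0$ and $T_f$, such that for every $\tau'\in[0,T_f]$ the solution $\lambda(\cdot,\tau')$ extends analytically to the strip $|\Im x|<\sigma$ with uniform bound; by Paley--Wiener this yields
\[
    |\hat\lambda_k(\tau)|^{2} \,\leq\, C_1\,\mu^{4}\,e^{-2\sigma|k|},
    \qquad \tau=h t,\quad \tau'=\mu^{2}\tau/8.
\]
Unwinding $\Phi_{\mathscr{G}_1}^{1}$ costs a multiplicative $1+O(\sqrt{\epsilon})$ and preserves the envelope; unwinding the interpolation identifies $\hat\lambda_k(\tau)$ with an explicit linear combination of the lattice coefficients $\hat p_k(t),\hat q_k(t)$, so the definition of $E_k$ and $\omega_k\sim k/N$ give the desired bound \eqref{eq:ExponentialDecay} on the full window $|t|\leq T_f/\mu^{3}$ (equivalent to $\tau'\in[0,T_f]$).

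The main obstacle is Step~5: controlling, uniformly in $\mu$ and on the long time scale $T_f/\mu^{3}$, the remainder separating the genuine lattice dynamics from the KdV normal form. One needs a priori estimates in an analytic (or Gevrey) norm that simultaneously absorb (i) the Taylor remainder in the expansion $D_h=\partial_x+\frac{h^{2}}{24}\partial_x^{3}+\dots$, (ii) the terms discarded by the averaging, and (iii) the coupling between $\lambda$ and $\rho$ neglected via the smallness of $\rho$. A Gronwall argument on the slow time $\tau'\in[0,T_f]$ keeps these errors of size $\mu^{4}$ without shrinking the analyticity strip, but one must verify that $\rho$ does not get resonantly excited during the whole interval and that the KdV analytic norm does not blow up; this is the technical core of \cite{Bambusi2006}, where Galerkin truncation combined with the integrable structure of KdV is used to propagate Gevrey regularity on the required scale.
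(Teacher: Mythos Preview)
Your proposal is correct and follows essentially the same route as the paper's sketch: interpolate to the continuum, pass to Riemann invariants, average to extract the decoupled KdV pair as resonant normal form, use analyticity preservation of the KdV flow on $\mathbb{T}^1$ to get the exponential envelope, and then control the remainder and the discretization on the slow time scale $|t|\leq T_f/\mu^3$. Two small refinements worth noting: the paper works explicitly in the weighted analytic spaces $A_{s,\sigma}$ and states a quantitative normal-form lemma with remainder of size $\mu^{4-12/(6+r)}$ (so a regularity index $r$ enters), and the analyticity-preservation input is the Kappeler--P\"oschel theorem ($u\in A_{s,\sigma}\Rightarrow u(t)\in A_{s,\sigma(t)}$), not a Galerkin/integrability argument internal to \cite{Bambusi2006}.
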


Before sketching the proof, let us remark that in the regime $E/N=\epsilon \sim \mu^4$, equation \eqref{eq:ExponentialDecay} reads
\begin{equation}
	\frac{E_{k/N}(t)}{N} \leq E_{\mathrm{tot}} C_1 e^{-\epsilon^{-\frac{1}{4}}\sigma \frac{k}{N}} \, ,
\end{equation}
which is the scaling found in \cite{Berchialla2004} and discussed in Subsection \ref{subsec:FurtherNumerics}.

\begin{proof}[Sketch of the proof]
	One first performs an analytic interpolation of the initial datum defining $Q$ and $P$ as in \eqref{eq:InterpolatingFields}. Then, one passes to Riemann invariants $\lambda$ and $\rho$. 
	
	Then, for the functional setting, for $\sigma,s\geq0$ one first introduces the spaces $A_{\sigma,s}$ as
	\begin{equation}
		A_{\sigma,s} \, := \, \{ F \in L^2(\mathbb{T}) \, | \, \Vert F \Vert_{A_{s,\sigma}} < +\infty\} \, ,
	\end{equation}
	where 
	\begin{equation}
		\Vert F \Vert_{A_{s,\sigma}}^2 \, := \, \sum_{k \in \mathbb{Z}} |\hat{F}_k|^2 |k|^{2s} e^{2 \sigma |k|} \, .
	\end{equation}
	Then, one considers as phase space of the continuous model 
	\begin{equation}
		\Gamma_{s,\sigma} \, := \, A_{s,\sigma} \oplus A_{s,\sigma}
	\end{equation}
	and denotes by $B_{s,\sigma}(R)$ the ball of radius $R$ in $\Gamma$, i.e.
	\begin{equation}
		B_{s,\sigma}(R) \, := \, \{Z \in \Gamma_{s,\sigma} \, | \, \Vert z \Vert_{s,\sigma}<R\} \, .
	\end{equation}
	
	One thus performs the normal form transformation using $\mu$ as perturbative parameter and proves a normal form Lemma. In particular, one proves thar for all $r \geq 5$, $\exists \mu_*=\mu_*(r)$ such that if $\mu < \mu_*$, there exists a canonical transformation $\Phi_{\mathscr{G}_1}^{\mu^2}: B_{r,\sigma}(1) \to B_{r,\sigma}(2)$ such that
	\begin{itemize}
		\item[(a)] $\mathscr{H} \circ \Phi_{\mathscr{G}_1}^{\mu^2}=\mathscr{H}_0+[\mathscr{P}]_{\mathscr{H}_0} + \mathscr{R}$
		\item[(b)] $\sup_{\Vert Z \Vert_{r,\sigma}} \Vert J \nabla_{L^2}\mathscr{R} \Vert_{0,\sigma} \leq C_r \mu^{4-\frac{12}{6+r}}$ \, ,
		\item[(c)] $\forall 1 \leq r_1 \leq r$, $\Phi_{\mathscr{G}_1}^{\mu^2} : B_{r_1,\sigma} \to \Gamma_{r_1,\sigma}$ and fulfills
		\begin{equation}
			\sup_{\Vert z \Vert_{r_1,\sigma}\leq  1} \Vert Z - \Phi_{\mathscr{G}_1}^{\mu^2}(Z) \Vert_{r_1,\sigma} \,\leq\ C \mu^{2-\frac{6}{6+r}} \, .
		\end{equation}
	\end{itemize}
	At this point, one has to translate the information on the dynamics of the normal form into an information on the evolution of Fourier normal modes. To do so, we recall that
	\begin{equation}
		\omega_k \, = \, \Big| 2 \sin \big( {\textstyle \frac{\pi k}{2 N}} \big) \Big| \, = \, |2 \sin \big( {\textstyle \frac{\pi}{2} h k } \big)| ,
	\end{equation}
	and therefore,
	\begin{equation}
		\begin{split}
			E_k \,&=\, \epsilon \frac{|\hat{P}_k|^2+ h^{-2} \omega_k^2|\hat{Q}_k |^2}{2} \\
			&=\, \epsilon \frac{|\hat{P}_k|^2+\pi^2 k^2 |\hat{Q}_k|^2}{2} + \epsilon \left(\frac{\omega_k^2}{h^2 \pi^2 k^2}-1 \right) \frac{\pi^2 k^2}{2} |\hat{Q}_k|^2 \, .
		\end{split}
	\end{equation}
	Passing now to $L$ and $R$ variables as in \eqref{eq:LeftAndRightTV}, defining $\epsilon=\mu^4$ and $h=\mu^{2}$, one has
	\begin{equation}
		\begin{split}
			E_k \, & \leq \, \mu^4 \frac{|\hat{L}_k|^2+|\hat{R}_k|^2}{2}+\mu^4 \left|\frac{\omega_k^2}{\mu^2 \pi^2 k^2}-1 \right|(|\hat{L}_k|^2+|\hat{R}_k|^2)
		\end{split}
	\end{equation}
	and, by Taylor expanding $\omega_k^2$, one has $|\omega_k^2-h^2 \pi^2 k^2| \leq C h^4 k^4$ for all $k \in \mathbb{Z}$ and one gets
	\begin{equation}
		|E_k| \, \leq \, \mu^4 \frac{|\hat{L}_k|^2+|\hat{R}_k|^2}{2}+C \mu^6 k^2(|\hat{L}_k|^2 + |\hat{R}_k|^2) \, .
	\end{equation}
	
	Neglecting the remainder, any information on the evolution \emph{along} the KdV of $\hat{L}_k$ and $\hat{R}_k$ can be translated into the evolution of $E_k$. Thus, one exploits the following theorem by Kappeler and P\"oschel \cite{Kappeler2003}: let $u \in A_{s,\sigma}$, then $u(t) \in A_{s,\sigma(t)}$ for all $t \in \mathbb{R}$. Thus, supposing $\mathscr{R}=0$, we would have that for the times of validity of the KdV approximation, one has
	\begin{equation}
		|E_k(t)| \, \leq \mu^4 C e^{-\sigma(t) |k|} \, .
	\end{equation}
	
	Then, if the remainder was zero and the restriction to the lattice was for free, one would have proven the theorem. The last, technical part, is to prove that for times $T \leq T_f/\mu^3$, both the remainder and the discretization give small corrections.
\end{proof}

\begin{xca}\label{exrc:BetaModelNF}
	Take the continuum system with Hamiltonian \eqref{eq:HCont} and set $\alpha=0$ in $\Phi_F$. Then show that, instead of \eqref{eq:EquationsOfMotion12} one obtains a pair of counter propagating modified Korteweg-de Vries equations (mKdV) as resonant normal forms. (Hint: pair terms of order $h^2$ and $\epsilon$; the exact mKdV equations are given in \eqref{eq:BetaMKDV}).
\end{xca}

\subsection{Zero-dispersive limit of KdV and Burgers turbulence} 
Any information on the thermodynamic limit of the Fermi-Pasta-Ulam-Tsingou is encoded formally in the zero-dispersion limit of the KdV. Formally, this limit can be encoded by taking $h\to0$ in the equations of motion \eqref{eq:EquationsOfMotion12}. 

This limit is quite singular as solutions of the equation after the limit are not defined for all times. Indeed, the zero-dispersion limit is, in general, a hard problem and only recently some rigorous mathematical results have been obtained in the context of the Bejamin-Ono equation \cite{Grard2024}. To my knowledge, this zero-dispersion limit problem was first addressed by Lax and Levermore \cite{Lax1983-I,Lax1983-II,Lax1983-III} in the case of the Korteweg-de Vries equation, using the integrability of this equation. Lax and Levermore initiated a series of remarkable results on special initial data, see in particular \cite{Claeys2008,Deift1997,Grava2007,Venakides1985} and references therein. Nevertheless, all properties needed for the analysis of the thermodynamic limit of the FPUT model are still lacking.

So without the attempt of being mathematically rigorous, we now derive some dynamical information of the Burgers equation that are actually relevant in FPUT dynamics. The idea is taking formally the limit $h\to0$ in the equations \eqref{eq:EquationsOfMotion12} and to consider long-wavelength initial data. In this section we will focus on the one-parameter set of initial data given by
\begin{equation}\label{eq:InitialDataDisc}
	\begin{split}
		q_j(0) \,&=\, \frac{N \sqrt{\epsilon}}{\pi} \cos \varphi \sin \big({\textstyle \frac{2 \pi j}{N} }\big) \\
		p_j(0) \, &= \, \frac{\omega_1 N \sqrt{\epsilon}}{\pi} \sin \varphi \cos \big({\textstyle \frac{2 \pi j}{N}}\big) \, ,
	\end{split}
\end{equation}
where $\omega_k$ is given by \eqref{eq:OmegaK}. In the $N \to +\infty$ limit, the initial datum is given by
\begin{equation}\label{eq:InitialDataCont}
	\begin{split}
		Q(x,0) \,&=\,\frac{\sqrt{\epsilon}}{\pi} \cos \varphi \sin(2 \pi x) \, , \\
		P(x,0) \, &=\, 2\sqrt{\epsilon} \sin \varphi \cos(2 \pi x) \, .
	\end{split}
\end{equation}
This analysis, started in the work Poggi-Ruffo-Kantz in 1995 \cite{Poggi1995} and then, the relevant non-trivial properties have been derived, discussed and numerically tested in recent works \cite{Gallone2022-PRL,Gallone2024,Gallone-Grande-2026}. In the latter ones, a second order normal form is performed in order to get more refined information. In these notes, for the sake of simplicity, we limit ourselves to the analysis of the first order, that is: we keep terms of order at most $\sqrt{\epsilon}$. Hamiltonian \eqref{eq:HamKTilde} with $h=0$ reads
\begin{equation}
	\widetilde{\mathscr{K}}_{h=0}(\lambda,\rho) \, = \, \int_{\mathbb{T}} \frac{\lambda^2 + \rho^2}{2} \, dx \, + \, \int_{\mathbb{T}} \frac{\alpha \sqrt{\epsilon}}{6 \sqrt{2}}(\lambda^3+\rho^3) \, dx \, .
\end{equation}
The equations of motion associated to this Hamiltonian, up to terms $O(\sqrt{\epsilon})$, are
\begin{equation}
\label{eq:larhoev}
\left\{
\begin{split}
& \lambda_\tau=\left[1+\frac{\alpha\sqrt{\epsilon}}{\sqrt{2}}\lambda \right]\lambda_x \, , \\
& \rho_\tau=-\left[1+\frac{\alpha\sqrt{\epsilon}}{\sqrt{2}}\rho\right]\rho_x \, .
\end{split}
\right.
\end{equation} 

Concerning the initial conditions $(\lambda_0,\rho_0)$ satisfied by the fields $\lambda$ and $\rho$, one has to invert the transformation generating the change of coordinates $(R,L) \mapsto (\rho,\lambda)$ with $h=0$.
This is the flow at time $1$ of the Hamiltonian $\mathscr{G}_1$, which is obtained from \eqref{eq:G1Calcolo} by setting $h=0$, explicitly
\begin{equation}
\label{eq:g1}
\mathscr{G}_1(\lambda,\rho)=\frac{\alpha\sqrt{\epsilon}}{4\sqrt{2}}\ \left\langle\rho\partial_x^{-1}(\lambda^2)+\rho^2\partial_x^{-1}\lambda\right\rangle\ .
\end{equation} 
One can then express the new fields $\lambda$ and $\rho$ in terms of the old ones, $L$ and $R$, 
\begin{equation}
\begin{split}
\lambda&= e^{-\mathcal{L}_1}L=L-\{L,\mathscr{G}_1\}(L,R)+O(\epsilon)\ , \\ 
\rho&= e^{-\mathcal{L}_1}R=R-\{R,\mathscr{G}_1\}(L,R)+O(\epsilon)\ .
\end{split}
\end{equation}
Neglecting a remainder of $O(\epsilon)$, the final result is
\begin{equation}
\label{eq:Caoa}
\begin{split}
\lambda=&  L+\frac{\alpha \sqrt{\epsilon}}{4\sqrt{2}}\left(R^2-\langle R^2\rangle\right)+
\frac{\alpha \sqrt{\epsilon}}{2\sqrt{2}}(LR+L_x\partial_x^{-1}R)\ ,\\
\rho=& R+\frac{\alpha \sqrt{\epsilon}}{4\sqrt{2}}\left(L^2-\langle L^2\rangle\right)+
\frac{\alpha \sqrt{\epsilon}}{2\sqrt{2}}(LR+R_x\partial_x^{-1}L)\ .
\end{split}
\end{equation} 
Substituting in the latter expression the initial condition \eqref{eq:InitialDataCont}, where $\theta=\varphi-\pi/4$, and neglecting the remainder $O(\epsilon)$, one gets
\begin{equation}
\label{eq:indatlarho}
\left\{
\begin{split}
\lambda_0= & 2\cos\theta\cos(2\pi x)+\frac{\alpha\sqrt{\epsilon}(\sin^2\theta-2\sin2\theta)}{2\sqrt{2}}\cos(4\pi x)\\
\rho_0 = &-2\sin\theta\cos(2\pi x)+\frac{\alpha\sqrt{\epsilon}(\cos^2\theta-2\sin2\theta)}{2\sqrt{2}}\cos(4\pi x)
\end{split}
\right.\ .
\end{equation}
We finally observe that the transformation (\ref{eq:Caoa}) preserves the space average of the fields, so that $\langle \lambda\rangle=\langle L\rangle=0$ and $\langle \rho\rangle=\langle R\rangle=0$.

\begin{figure*}[t]
\begin{subfigure}{0.46\textwidth}
		\includegraphics[width=\textwidth]{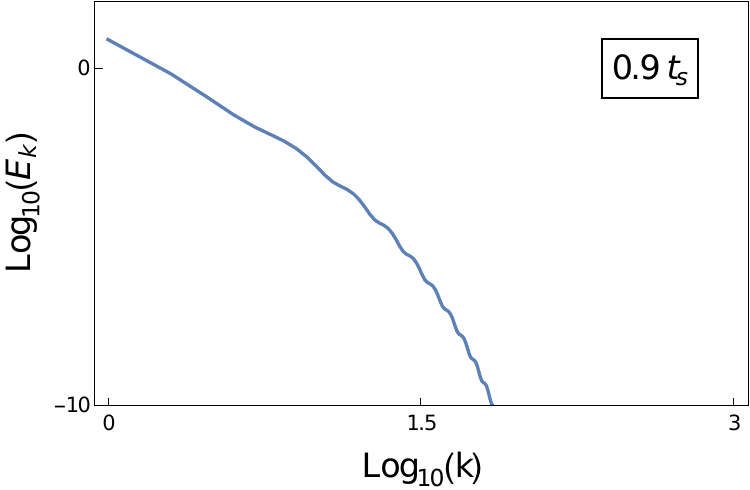} 
		\caption{}
\end{subfigure}
\begin{subfigure}{0.46\textwidth}
	\includegraphics[width=\textwidth]{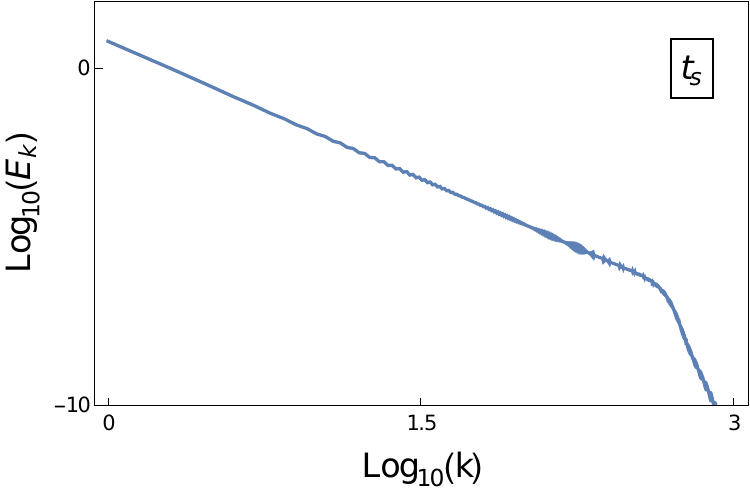}
	\caption{}
\end{subfigure}
	
\vspace{0.2cm}	
	
\begin{subfigure}{0.46\textwidth}
	\includegraphics[width=\textwidth]{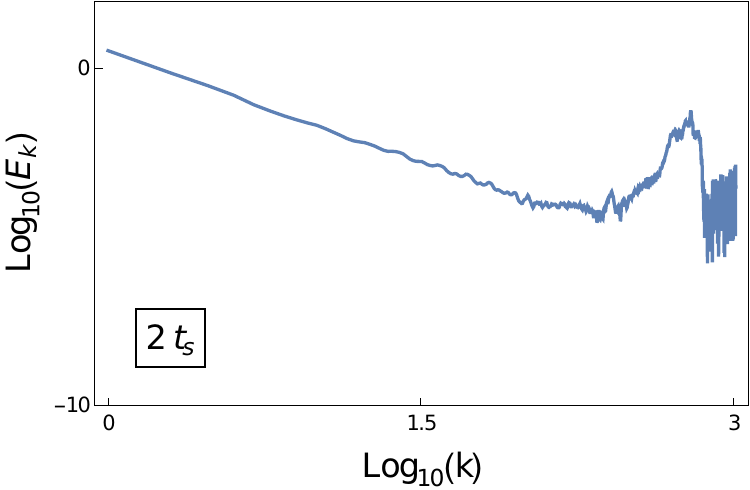}
	\caption{}
\end{subfigure}
\begin{subfigure}{0.46\textwidth}
	\includegraphics[width=\textwidth]{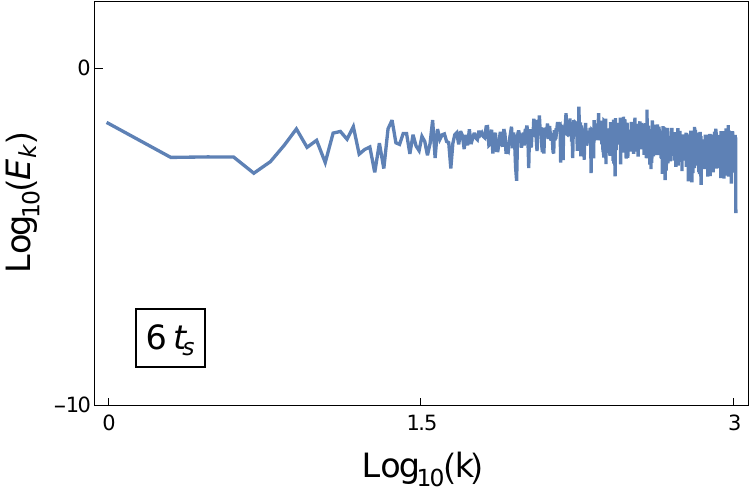}
	\caption{}
\end{subfigure}
				
		\caption{Fourier spectrum of the FPUT chain highlighting the power--law region. The panels show the spectrum at different times, measured in units of the shock time $t_s$ of the Burgers equation. Panel a: spectrum before the shock time. Panel b: spectrum at the shock time where the power--law slope is $\zeta=\frac{8}{3}$ as predicted by the Burgers dynamics. Panel c: spectrum in the turbulent range of the Burgers equation with $\zeta=2$. Panel d: spectrum at equipartition. The initial datum is given by \eqref{eq:InitialDataDisc} with $\theta=0$ with $\epsilon=0.05$, $\alpha=1$ and $\beta=0.5$. From \cite{Gallone2024}.}\label{fig:Spectra}
\end{figure*}

With equations \eqref{eq:larhoev} at hand and the initial conditions \eqref{eq:Caoa} one can now analyze the approximate dynamics and make a comparison with numerics.

The system (\ref{eq:larhoev}), (\ref{eq:indatlarho}), has the form of two decoupled, inviscid Burgers equations. From now on, we will omit the term inviscid for simplicity. The solution of the Burgers equation $U_\tau=f(U)U_x$, with initial datum $U_0(x)$, is implicitly defined by the equation $U-U_0(x+f(U)\tau)=0$ (see, e.g.~Section 3.4 in \cite{evans2010partial}). The latter identity admits an explicit solution if the implicit function theorem applies, namely, taking the derivative with respect to $U$, if
\begin{equation}
1-U_0'(x+f(U)\tau)f'(U)\tau=1-\tau\frac{d}{d\xi}f(U_0(\xi))\neq0\ ,
\end{equation}
where $\xi\equiv x+f(U)\tau$.
The above condition is satisfied for all $\tau$'s in the interval $[0,\tau_s)$, where $\tau_s$, the shock time, is given by 
\begin{equation}
\label{eq:16}
\frac{1}{\tau_s}= \max_x\frac{d}{dx}f(U_0(x))\ .
\end{equation}
The system \eqref{eq:larhoev} consists of two independent equations of the form $\lambda_\tau=f(\lambda) \lambda_x$ and $\rho_\tau=f(\rho)\rho_x$. We define the shock time of the FPUT system as $\tau_s=\min\{\tau_s^l,\tau_s^r\}$, whereas the left and right shock times $\tau^l_s$ and $\tau_s^r$ are given by
\begin{equation}
\label{eq:taulr}
\begin{split}
\frac{1}{\tau_s^l}&=\max_{x\in[0,1]}\left[\frac{d}{dx}\Phi(\lambda_0(x))\right]\ \ ;\\ 
\frac{1}{\tau_r^l}&=\max_{x\in[0,1]}\left[-\frac{d}{dx}\Phi(\rho_0(x))\right]\ .
\end{split}
\end{equation}
Here $\Phi$ is the function defined as
\begin{equation}\label{eq:Phi}
	\Phi(\lambda):=1+ \frac{ \alpha\sqrt{\epsilon}}{\sqrt{2}} \lambda \, ,
\end{equation}
where $\lambda_0$ and $\rho_0$ are given in (\ref{eq:indatlarho}). In $\Phi(\lambda_0(x))$ and $\Phi(\rho_0(x))$ we have consistently neglected terms of order $O(\epsilon)$. The explicit computation of $\tau_s^l$ and $\tau_s^r$ in \eqref{eq:taulr} for the left shock time yields
\begin{equation}
\label{eq:taul}
\tau_s^l=\left(\frac{1}{2\pi\sqrt{2 \epsilon} \alpha}\right)\frac{1}{\cos\theta} 
\end{equation}
whereas the right shock time is given by
\begin{equation}
\label{eq:taur}
\tau_s^r=\left(\frac{1}{2\pi\sqrt{2 \epsilon}\alpha}\right)\frac{1}{|\sin\theta|} \, .
\end{equation}
In the range $-\frac{\pi}{4}\leq\theta\leq\frac{\pi}{4}$, for $a$ small enough, and any $\alpha$, $\beta$, 
the inequality $\tau_s^l\leq\tau_s^r$ holds,
the equality being valid only for $\theta=\pm\frac{\pi}{4}$.   Recalling that $\tau=\frac{t}{N}$, it follows that in the same range of $\theta$ and $\epsilon$ and for any
$\alpha$, $\beta$, $\tau_s=\min\{\tau_s^l,\tau_s^r\}=\tau_s^l$, one gets 
\begin{equation}\label{eq:ts}
	t_s=\left( \frac{N}{2 \pi \sqrt{2 \epsilon} \alpha} \right) \frac{1}{\cos \theta} \, .
\end{equation}

We remark that the name ``shock time for FPUT'' can be misleading. Indeed, once $\beta > 0$ in \eqref{eq:HamEqFPUT}, the solutions to the equations of motion of the lattice model are defined for all times. Shock appears only in the continuum approximation when $h=0$. Indeed, as we shall see soon, at the Burgers shock time the Fourier Energy Spectrum of the FPUT system displays a large region with a precise and universal power-law decay. This latter fact follows from a general asymptotics of the Burgers spectrum at the shock time.

Given the generalized Burgers equation $U_\tau=f(U)U_x$, we first write its solution in Fourier series, namely $U(x,\tau)=\sum_k\hat{U}_k(\tau)e^{i 2\pi kx}$, where the Fourier coefficient $\hat U_k(\tau)$
can be expressed in terms of the initial condition $U_0(x)$ by the explicit formula
\begin{equation}\label{eq:uk}
	\hat{U}_k(\tau)=\frac{1}{2 \pi i k} \oint U'_0(x)e^{- i 2 \pi k[x-\tau f(U_0(x))]} dx \, .
\end{equation}
Taking into account that $U=U_0(x+\tau f(U))$, and introducing the variable $\xi$ such that $\xi=x+\tau f(U_0(\xi))$, one has
\begin{equation}\label{eq:Cinquanta}
\begin{split}
\hat{U}_k(\tau)=& \oint e^{-2 \pi i kx}U_0(x+f(U)\tau)\ dx \\
=&   \oint e^{-2 \pi i k[\xi-\tau f(U_0(\xi))]} U_0(\xi) \frac{d}{d\xi}\left[\xi-\tau f(U_0(\xi))\right]\ d\xi \\
=& -\frac{1}{2 \pi i k}\oint U_0(\xi)\frac{d}{d\xi}\left[e^{-2 \pi i k[\xi-\tau f(U_0(\xi))]}\right]\ d\xi\\
=&
 \frac{1}{2 \pi i k}\oint U'_0(\xi)e^{-2 \pi i k[\xi-\tau f(U_0(\xi))]}\ d\xi\ .
\end{split}
\end{equation}
Using \eqref{eq:Cinquanta} in the first equation of
\eqref{eq:larhoev}, that is with $f=\Phi$, yields \eqref{eq:uk}.

Then, for a more general class of nonlinearities $f$ and a more general class of initial data $U_0$ we prove the following.

\begin{theorem}
Let us now assume that the initial datum $U_0(x)$ of the generalized Burgers equation
$U_\tau=f(U)U_x$ satisfies the following three conditions: 
\begin{enumerate}
\item[(i)] $U_0(x)=\sum_{n=-M}^Mc_ne^{2 \pi i nx}$; $c_0=0$, $M$ finite;
\item[(ii)] $df(U_0(x))/dx$ admits a finite number $m$ of absolute maximum points $x_1,\dots,x_m\in[0,1[$;
\item[(iii)] $\gamma_j\equiv d^3f(U_0(x_j))/dx^3\neq0$.
\end{enumerate}
Then, for $k$ large
\begin{equation}
	|\hat{U}_k(\tau_s)|^2 \, \sim \, k^{-\frac{8}{3}} \, .
\end{equation}
\end{theorem}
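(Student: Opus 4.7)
The plan is to apply the method of stationary phase to the oscillatory integral \eqref{eq:uk} at $\tau=\tau_s$, exploiting the fact that the shock time is precisely the moment at which the critical points of the phase degenerate from quadratic to cubic.

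Set
\[
\phi(x) \, := \, x - \tau_s f(U_0(x)) \, , \qquad a(x) \, := \, U_0'(x) \, ,
\]
so that \eqref{eq:uk} reads
\[
\hat{U}_k(\tau_s) \, = \, \frac{1}{2\pi i k} \oint a(x)\, e^{-2\pi i k \phi(x)} \, dx \, .
\]
By hypothesis (ii) there are $m$ points $x_1,\dots,x_m\in[0,1)$ where $\frac{d}{dx}f(U_0(x))$ attains its absolute maximum $1/\tau_s$, and by the definition \eqref{eq:16} of the shock time one has $\phi'(x_j)=1-\tau_s\frac{d}{dx}f(U_0(x_j))=0$. Since each $x_j$ is a maximum of $\frac{d}{dx}f(U_0)$, also $\phi''(x_j)=-\tau_s\frac{d^2}{dx^2}f(U_0(x_j))=0$, while hypothesis (iii) gives $\phi'''(x_j)=-\tau_s\gamma_j\neq 0$. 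Thus each $x_j$ is a \emph{cubic} critical point of the phase.

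Next, I would apply stationary phase at a cubic zero. Away from the $x_j$'s one has $\phi'\neq 0$, and repeated integration by parts (legitimate since $a$ is trigonometric by hypothesis (i)) shows that the complement of small neighborhoods of the $x_j$'s contributes $O(k^{-\infty})$. Near each $x_j$ expand
\[
\phi(x) \, = \, \phi(x_j) + \frac{\phi'''(x_j)}{6}(x-x_j)^3 + O\big((x-x_j)^4\big) \, ,
\]
and rescale $u=(2\pi k)^{1/3}(x-x_j)$; the resulting cubic exponential (Airy-type) integral yields
\[
\oint a(x)\, e^{-2\pi i k \phi(x)} \, dx \, = \, \sum_{j=1}^{m} \frac{c_j\, a(x_j)}{k^{1/3}}\, e^{-2\pi i k \phi(x_j)} \, + \, O(k^{-2/3}) \, ,
\]
with nonzero complex constants $c_j$ proportional to $|\tau_s\gamma_j|^{-1/3}\Gamma(1/3)$. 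Note that $a(x_j)=U_0'(x_j)\neq 0$, for otherwise $\frac{d}{dx}f(U_0)=f'(U_0)U_0'$ would vanish at $x_j$, contradicting $\frac{d}{dx}f(U_0(x_j))=1/\tau_s>0$.

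Dividing by $2\pi i k$ in accordance with \eqref{eq:uk} gives
\[
\hat{U}_k(\tau_s) \, = \, \frac{1}{2\pi i \, k^{4/3}} \sum_{j=1}^{m} c_j\, U_0'(x_j)\, e^{-2\pi i k \phi(x_j)} + O(k^{-5/3}) \, ,
\]
from which $|\hat{U}_k(\tau_s)|^2 \sim k^{-8/3}$ follows. The main subtlety lies in the case $m\geq 2$: the leading amplitude is then a finite sum of complex exponentials in $k$, which may nearly cancel for resonant integer values of $k$. For generic (non-resonant) phases $\{\phi(x_j)\}$ the ``$\sim$'' holds pointwise modulo oscillations, and in any case it holds in a Ces\`aro sense along $k$; for the physical initial datum \eqref{eq:indatlarho} one has essentially $m=1$ per period, so the asymptotic is unambiguous.
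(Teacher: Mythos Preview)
Your proof is correct and follows essentially the same approach as the paper: both rewrite \eqref{eq:uk} as an oscillatory integral with phase $x-\tau_s f(U_0(x))$, observe that at the shock time the absolute-maximum points $x_j$ become cubic critical points of the phase (using \eqref{eq:16} and hypotheses (ii)--(iii)), and apply the stationary-phase/Airy asymptotics to extract the $k^{-4/3}$ decay of $\hat U_k(\tau_s)$. Your treatment is in fact slightly more careful than the paper's, since you explicitly note that $U_0'(x_j)\neq 0$ and flag the possible cancellation issue for $m\geq 2$, which the paper passes over; the only minor slip is the constant, which involves $\Gamma(2/3)$ rather than $\Gamma(1/3)$.
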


This is a generalization of the case where the first Fourier mode is excited. Indeed, in that case, the number of critical points is two, a maximum and a minimum.

\begin{proof}
We start  by the expression (\ref{eq:uk}) for $\hat U_k(\tau)$, and split the unit integration interval into $m$ disjoint subintervals $I_1,\dots,I_m$, such that $I_j$ contains only the maximum point $x_j$ in its interior. Thus
\begin{equation}
\label{eq:uktau}
\hat{U}_k(\tau)=\frac{1}{k}\sum_{n=-M}^Mc_nn\sum_{j=1}^m\int_{I_j}e^{2 \pi i nx}
e^{-2 \pi i k[x-\tau f(U_0(x))]}\ dx\ .
\end{equation}
In the asymptotics $k\to\infty$ each of the integrals on $I_j$ is treated with the method of stationary phase (see e.g.~ Section 2.4 in \cite{erdélyi1956asymptotic}). One has to take into account that, by the definition \eqref{eq:16} of the shock time $\tau_s$, and by the hypotheses (ii) and (iii) above, in the interval $I_j$
\begin{equation}
x-\tau_s f(U_0(x))=x_j-\tau_s f(U_0(x_j))-\frac{\tau_s\gamma_j}{6}(x-x_j)^3+O((x-x_j)^4)\ .
\end{equation}
Thus, if $I_j=[x_j-a_j,x_j+b_j)$, for $k\gg 1$ and $\tau=\tau_s$ one finds
\begin{equation}
\label{eq:Ij}
\begin{split}
\int_{I_j}e^{2 \pi i nx}&e^{-2 \pi i k[x-\tau_s f(U_0(x))]}\ dx \\
&\sim\frac{e^{2 \pi i nx_j}e^{-2 \pi i k[x_j-\tau_s f(U_0(x_j))]}}{(\pi\tau_s |\gamma_j| k)^{\frac{1}{3}}}\ 
\frac{2\pi}{3^{\frac{2}{3}}\Gamma(\frac{2}{3})}\ .
\end{split}
\end{equation}
 Inserting (\ref{eq:Ij}) into (\ref{eq:uktau}) at $\tau=\tau_s$ one gets
\[
\hat u_k(\tau_s)\sim -i\left[\sum_{j=1}^m
\frac{U_0'(x_j)e^{-2 \pi i k[x_j-\tau_s f(U_0(x_j))]}}{(9\pi\tau_s|\gamma_j|)^{\frac{1}{3}}\Gamma(\frac{2}{3})}\right] k^{-\frac{4}{3}}\ ,
\] 
where $\Gamma(\frac{2}{3})$ is the Euler gamma function at $\frac{2}{3}$.
Its square modulus yields the asymptotic formula 
\begin{equation}
\label{eq:fesgen}
|\hat U_k(\tau_s)|^2\sim\left|\sum_{j=1}^m\frac{U_0'(x_j)e^{-2 \pi i k[x_j-\tau_sf(U_0(x_j))]}}{
(9\pi \tau_s |\gamma_j|)^{\frac{1}{3}}\Gamma(\frac{2}{3})}\right|^2k^{-\frac{8}{3}} \, ,
\end{equation}
which holds as $k\to+\infty$.
\end{proof}

Using this latter result, for our choice of initial data we obtain formula
\begin{equation}\label{eq:FESatSHOCK}
	E_k(t_s) \, \sim \, 0.779 \, k^{-\frac{8}{3}} \, ,
\end{equation}
 for the normalized Fourier Energy Spectrum. In fact, the function
$d\Phi(\lambda_0(x))/dx$, which enters the definition (\ref{eq:taulr}) of the shock time (\ref{eq:taul}), displays a single absolute maximum point $\hat x$ if $\epsilon$ is small enough. Then, formula (\ref{eq:fesgen})
simplifies to
\begin{equation}
\label{eq:laktaus}
|\hat \lambda_k(\tau_s)|^2\sim\left|\frac{\lambda_0'(\hat x)}{
(9\pi \tau_s d^3\Phi(\lambda_0(\hat x))/dx^3)^{\frac{1}{3}}\Gamma(\frac{2}{3})}\right|^2k^{-\frac{8}{3}} = C\ k^{-\frac{8}{3}}\ ,
\end{equation}
where the constant $C$ is independent of $k$. Assuming further that the form of the Fourier Energy Spectrum at the shock time is given by $C k^{-\frac{8}{3}}$ for all $k\geq 1$, the normalized Fourier Energy Spectrum turns out to be
\begin{equation}
\frac{E_k(t_s)}{\sum_{k>0}E_k(t_s)}=\frac{|\hat \lambda_k(\tau_s)|^2}{\sum_{k>0}|\hat \lambda_k(\tau_s)|^2}=\frac{k^{-\frac{8}{3}}}{\zeta_R(\frac{8}{3})}\ ,
\end{equation}
where $\zeta_R(s)=\sum_{k>0}k^{-s}$ is the Riemann zeta function. One finds numerically that
$\zeta_R(\frac{8}{3})=1.28419\dots$, whose reciprocal is $0.77870\dots$, which justifies formula \eqref{eq:FESatSHOCK}. 

In comparing Burgers prediction with numerics on the FPUT lattice, one sees a good agreement on long-wavelength (small $k$). Therefore, one can in principle expect that, when the FPUT solution is supported essentially on small $k$, Burgers dynamics could work as an approximate description of FPUT lattice. Indeed, this should happen at small times, once a long wavelength initial datum is given. 

We now analyze the growth of the Fourier Energy Spectrum for short times along Burgers dynamics. To this aim, we consider for simplicity the dynamics of the inviscid Burgers equation with all coefficients equal to $1$, i.e.
\begin{equation}\label{eq:Starrr}
	U_t=UU_x \, .
\end{equation}
Expanding in Fourier series, and considering only $k\geq 0$, we get
\begin{equation}\label{eq:FourierBurg}
	\frac{d \hat{U}_k}{dt} = i  \pi k \sum_{1 \leq p \leq k-1} \hat{U}_p \hat{U}_{k-p} + 2 i  \pi k \sum_{p \geq 1} \hat{U}_{k+p} \hat{U}^*_{p} \, .
\end{equation}
In order to study the initial dynamics where the high modes contain a small amount of energy, we can neglect the second term in \eqref{eq:FourierBurg}. Thus, by imposing that $\hat{U}_0=0$, we can explicitly solve the evolution equations. One immediately observes that
\[
	\begin{split}
	\frac{d \hat{U}_1}{dt}&=0  \, , \quad
	\frac{d \hat{U}_2}{dt}=i \pi (\hat{U}_1)^2 \, , \quad 
	\frac{d \hat{U}_3}{dt}=2 i \pi \hat{U}_1 \hat{U}_2 \, , \qquad\cdots
	\end{split}
\]
and then one guesses a solution of the form $\hat{U}_1(t)=\hat{U}_1(0)$, $\hat{U}_2(t) \sim t$, $\hat{U}_3(t) \sim t^2$ and so on. Thus, for the generic mode $k \in \mathbb{Z}$, one predicts
\begin{equation}\label{eq:AsymptT}
	\hat{U}_k(t) \sim t^{k-1} \, .
\end{equation}
This ansatz is compatible with \eqref{eq:FourierBurg}, since one has $\frac{d}{dt}\hat{U}_k(t)=t^{k-2}$ on the left--hand side and $t^{p-1}t^{k-p-1}=t^{k-2}$ on the right--hand side.

This analysis predicts that the initial growth of the energy of the normal modes follows the scaling law
\begin{equation}\label{eq:InitialGrowthCC}
	E_k(t) = |\hat{U}_k(t)|^2 \sim t^{2k-2} \, .
\end{equation}
This result can be also obtained using the explicit representation of the solution of the inviscid Burgers equation and the integral representation of the Bessel functions (see Appendix A of \cite{Gallone2024}). However, solving the equations using the iterative procedure described above, describes the physical origin of the scaling, which comes from an energy cascade from small to higher values of $k$. Indeed, in neglecting the second term on the right--hand side of \eqref{eq:FourierBurg}, we are neglecting the effects of a backward cascade of energy and we are keeping only the forward cascade. 

In Figure~\ref{fig:AngularN} (left panel) we interpolate the initial growth of the energies of the first modes with straight lines and we see a general agreement with eq.~\eqref{eq:InitialGrowthCC}.

\begin{figure}[t]
\includegraphics[width=0.48\textwidth]{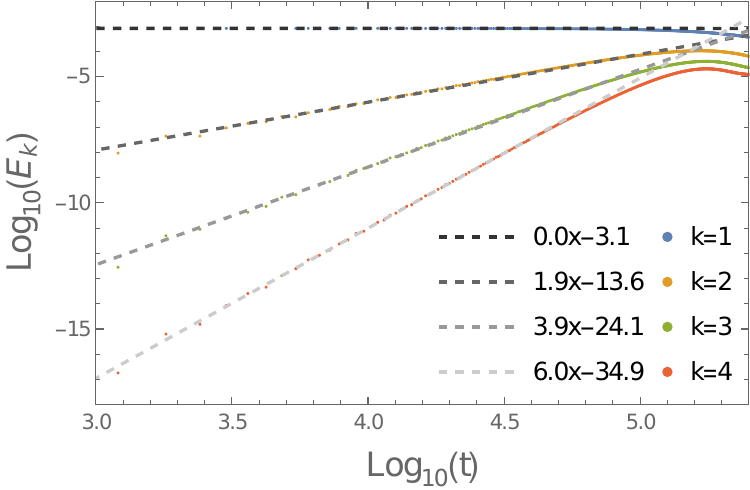}
	\includegraphics[width=0.48\textwidth]{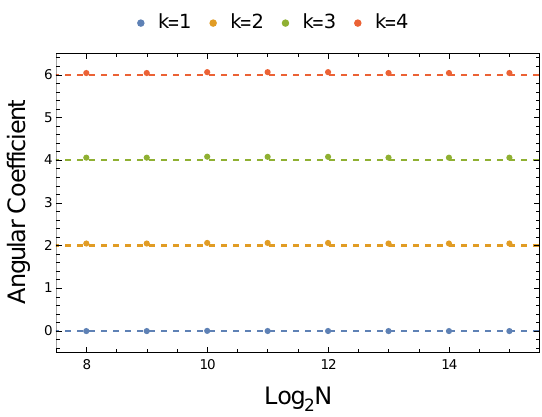}
	
	\caption{Angular coefficients of lines interpolating the growth of Fourier modes with indices $k=1,2,3,4$ for different values of $N$ showing the persistence of the slopes when increasing $N$ with fixed $\varepsilon$. The initial datum is a traveling wave excitation with $\alpha=1$, $\beta=\frac{1}{2}$ and $\epsilon=0.0005$. From \cite{Gallone2024}.}\label{fig:AngularN}
\end{figure}

In Figure~\ref{fig:AngularN} (right panel) we analyze the behavior of the angular coefficients as functions of $N$. This analysis shows that the scenario described by the Burgers equation is stable in the thermodynamic limit.

For initial data that are not exactly of the form of \eqref{eq:InitialDataDisc}, one observes that the behavior of Figure~\ref{fig:AngularN}(-left) remains the same, apart from the additional presence of small amplitude oscillations around the interpolating line. The latter phenomenon can be clearly observed in Figures 2 and 11 of \cite{Ponno2011-CHAOS}.

Differently from the asymptotics \eqref{eq:FESatSHOCK}, formula \eqref{eq:AsymptT} strongly relies on the form of the nonlinearity. Therefore, when the energy of the initial datum is increased, the monomial with higher degree in \eqref{eq:larhoev} cannot be neglected anymore and this effect dramatically changes the exponent of the power--law of the time evolution.

Before completing this section, let us briefly comment on the relation between the spectra of Figure \ref{fig:Spectra} and turbulence. According to Kolmogorov, the Fourier spectrum of a turbulent fluid has the following structure. At small $k$, energy in injected and the spectrum may depend on the mechanism of energy injection. At intermediate $k$ one observes the \emph{inertial range}, where the energy spectrum fits a power-law. At larger $k$, dissipation is responsible for an exponential decay. Note that panel (b) and (c) present precisely this structure even if, in our case, there is no presence of dissipation (it is very delicate to understand, but dispersion -- in the FPUT case -- is responsible for the exponential damping).

Needless to say, proving such a behavior -- even in simplified models -- is one of the major open problems in dispersive PDEs for which very few mechanisms of transport of energy are known. This is the problem of growth of Sobolev norms (see \cite{Colliander2010,Hani2013,Guardia2015,
Haus2015,Guardia2016,Guardia2022,Giuliani2022,Giuliani2025,Pasquali2025,MasperoMurgante,Langella2025,BLM2022}). Existence of initial data that does not thermalize, such as self-similar solutions is another very active field of research both with simplified models or stochastic PDEs \cite{Beck2024,Dolce2024,Kierkels2015}.

\section{Beta model, higher dimension and quantum prethermalization}

The point of view raising from the analysis of the FPUT problem could be summarized as follows. On some extent, when the energy is low enough, the process of thermalization of the system constitutes of two different \emph{separated} time-scales. On the first time-scale, a partial ergodization is observed. Then, thermalization occurs in a much longer timescale. This is a paradigmatic example of what is nowadays called \emph{prethermalization} and it is illustrated in the cartoon picture Figure \ref{fig:Prethermal}. 

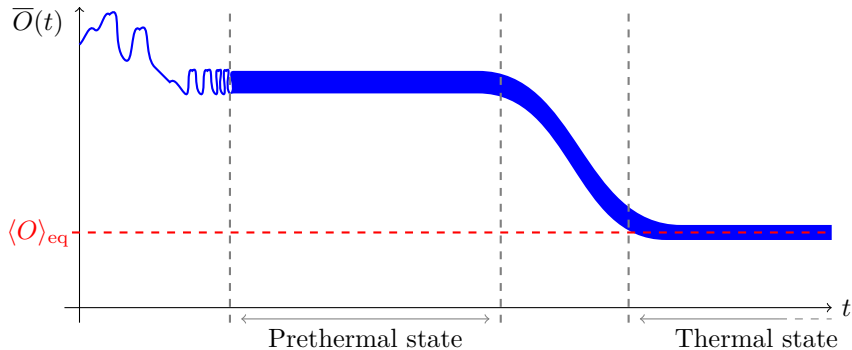
\begin{figure}[h!]
	\begin{center}
\begin{tikzpicture}
	\draw[->] (-0.2,0) -- (10,0);
	\draw[->] (0,-0.2) -- (0,4);
	
   \coordinate (A) at (0,3.5);
    \coordinate (B) at (0.2,3.7);
    \coordinate (C) at (0.4,3.9);
    \coordinate (D) at (0.6,3.3);
    \coordinate (E) at (0.8,3.7);
    \coordinate (F) at (1,3.2);
    \coordinate (G) at (1.2,3.0);
    \coordinate (H) at (1.4,2.85);
    \coordinate (I) at (1.5,3.15);
    \coordinate (J) at (1.6,2.85);
    \coordinate (K) at (1.7,3.15);
    \coordinate (L) at (1.8,2.85);
    \coordinate (M) at (1.85,3.15);
    \coordinate (N) at (1.9,2.85);
    \coordinate (O) at (1.95,3.15);
    \coordinate (P) at (2,2.85);
    
       \draw[thick, smooth,color=blue] 
        (A) to[out=45, in=135] (B)
        to[out=-45, in=135] (C)
        to[out=45, in=135] (D)
        to[out=-45, in=135] (E)
        to[out=45, in=135] (F)
        to[out=-45, in=135] (G)
        to[out=45, in=135] (H)
        to[out=-45, in=135] (I)
        to[out=45, in=135] (J)
        to[out=-45, in=135] (K)
        to[out=45, in=135] (L)
        to[out=-45, in=135] (M)
        to[out=45, in=135] (N)
         to[out=-45, in=135] (O)
        to[out=45, in=135] (P);
        
     \coordinate (UB) at (2.05,3.15);
     \coordinate (BB) at (2.05,2.85);
     \coordinate (UF) at (5.3,3.15);
     \coordinate (BF) at (5.3,2.85);
     \coordinate (TU) at (8.0,1.1);
     \coordinate (TB) at (7.8,0.9);
     \coordinate (EU) at (10,1.1);
     \coordinate (EB) at (10,0.9);
     
     \fill[blue]
     	(P) to[out=45, in=180] (UB)
     	to[out=0, in=180] (UF)
     	to[out=0, in =180] (TU)
     	to[out=0, in=180] (EU)
     	to[out=-90, in=90] (EB) 
     	to[out=180, in=0] (TB)
     	to[out=180, in=0] (BF)
     	to[out=180, in=0] (BB)
     	to[out=180, in=-45] (P);
     	
     \draw[-,dashed,color=red,thick] (-0.1,1) -- (10,1);
     \draw[-, dashed, color=gray,thick] (2,-0.2) -- (2, 4);
     \draw[-, dashed, color=gray,thick] (5.6,-0.2) -- (5.6, 4);
	 \draw[-, dashed, color=gray,thick] (7.3,-0.2) -- (7.3, 4);     	
	 
	 \node at (-0.55,3.8) {$\overline{O}(t)$};
	 \node at (10.2,0) {$t$};
	 \node at (-0.55,1) {\color{red}$\langle O \rangle_{\mathrm{eq}}$};
	 
	 \draw[<->,color=gray] (2.15,-0.15) -- (5.45,-0.15);
	 
	 \node at (3.8,-0.4) {Prethermal state};
	 
	 \draw[<-,color=gray] (7.45,-0.15)--(9.3,-0.15);
	 \draw[-,dashed, color=gray] (9.3,-0.15) -- (10,-0.15);
	 
	 \node at (9,-0.4) {Thermal state};
\end{tikzpicture}
	\end{center}
	\caption{Cartoon of the idea of prethermalization. $\overline{O}(t)$ denotes the time-average of the observable $O$, $\langle O \rangle_{\mathrm{eq}}$ denotes the value of the observable $O$ at thermal equilibrium. The time-average of $O$ has an initial transient and then it becomes almost stationary for a certain (usually, long) time. This is the prethermal state. Eventually, the time-average of $O$, on a much longer time-scale, evolves and reaches its thermal value.}\label{fig:Prethermal}
	
\end{figure}

 Thus, in other words, prethermalization refers to the fact that certain physical systems, when initialized far from equilibrium, quickly relax to long-living non-thermal states, before eventually
reaching the thermal state on much longer timescales \cite{Berges2004,Moeckel2008,Bertini2015,
Gring2012,Mori2016,Lindner2017,
Howell2019,Mallayya2019,
Huveneers2020,RubioAbadal2020,Birnkammer2022,
Collura2022}. 

In this conclusive section we briefly discuss the phenomenon of prethermalization in higher dimensional classical lattices and in quantum systems.

\subsection{Beta and higher order models} Most of these lecture notes are devoted to the analysis of FPUT model when $\alpha \neq 0$, the so-called $\alpha+\beta$ model. Clearly, something dramatic happens in the analysis of the $\beta$-model, that is setting $\alpha=0$. Indeed, in that case, the Toda chain is not tangent to the FPUT model and, in the discrete setting, the reference integrable model is the harmonic chain. 

In the continuum model, one can repeat the analysis of Section \ref{sec:FPUTIntegrable} and, instead of the pair of counter propagating KdV equations, one would deal with a pair of counter propagating mKdV equations which are
\begin{equation}\label{eq:BetaMKDV}
	\begin{split}
		\lambda_\tau \,&=\, \big({\textstyle 1+\frac{3 \beta \epsilon}{4} \langle \rho^2 \rangle }\big)\lambda_x + \frac{3 \beta \epsilon}{4} \lambda^2 \lambda_x + \frac{h^2}{24} \lambda_{xxx} \\
		\rho_\tau \,&=\, -\big({\textstyle 1+\frac{3 \beta \epsilon}{4} \langle \lambda^2 \rangle }\big)\rho_x - \frac{3 \beta \epsilon}{4} \rho^2 \rho_x-\frac{h^2}{24} \rho_{xxx} \, .
	\end{split}
\end{equation} 
Those two equations are integrable and therefore, to look for breaking of integrability, higher order computations have to be done.

Moreover, to conclude an argument as in Theorem \ref{thm:Bambusi-Ponno}, there is the lack of theorems guaranteeing the width of the strip of analiticity for the solution, as provided by Kappeler and P\"oschel for KdV.

For higher order models, one obtains generalized KdV equations which, in general, are not integrable. Those are much less studied. Normal form construction, in the continuum limit for these models are computed in \cite{Bambusi-Libro} (see also Exercise \ref{exrc:BetaModelNF}).

\subsection{Higher dimensional lattices: FPUT, ETL and KP-II}

In Section \ref{sec:FPUT-Integrable} we showed that the FPUT phenomenon can be well described by the vicinity of the FPUT lattice to certain integrable systems. Then one meets the fact that despite a certain number of integrable systems are known in spatial dimension 1, very few (if any) in higher spatial dimension. Considering models on a lattice, in absence of particular symmetries, for dimension larger than one the reference integrable model should be the harmonic chain.

Since no analogue of Toda approximating the dynamics of two- or three- dimensional lattice exists, and no analogue of KdV in dimension higher than three exists, one may expect to observe a quick thermalization process.

On the other side, available analytical tools for higher spatial dimension are a few and this makes the problem particularly hard to treat. 

Numerically, the analysis of FPUT systems in higher spatial dimensions have been performed by \cite{Ooyama1969,Benettin2005-2D,Benettin2008-2D}. They found that no FPUT phenomenon occurs. However, thermalization is not as trivial as one might expect. Indeed, for all most of the models considered in \cite{Benettin2005-2D} and \cite{Benettin2008-2D}, the authors notice that -- for large $N$ -- the spectral entropy estimate of equipartition gives $T_{\mathrm{eq}}(\epsilon) \sim \frac{1}{\epsilon}$. The most unexpected result on these investigations is the fact that boundary conditions seem to have a relevant role in the equipartition time. Indeed, the models with periodic boundary conditions exhibit longer thermalization time, that could be guessed to be $T_{\mathrm{eq}}(\epsilon) \sim \epsilon^{-\frac{5}{4}}$. This latter possibility could be related to the presence of additional symmetries in the model. 

This latter observation is, in fact, true. In  \cite{Gallone2021-2D}, the authors analyzed an electric-transmission lattice adapting the techniques of \cite{Bambusi2006} (and quickly reviewed in Section \ref{sec:FPUT-Integrable}) for a two-dimensional lattice. The net result is that, when the system is considered on a lattice of size $N \times N^2$, then it is possible to approximate the dynamics using the integrable Kadomtsev--Petviashvili-II (KP-II) equation. Then, using that analytic initial data remains analytic along the flow of the KP-II equation, one is able to prove the existence of a FPUT-like scenario for this weakly two-dimensional lattice model.

Then, it is worth mentioning that KP-II equation has been shown to be a good approximation of long-wavelength small-amplitude initial conditions with a weak transverse modulation by Hristov, Pelinovski and Schneider \cite{Hristov2022,Pelinovsky2023}

Last, we mention a different formal approach, supported by numerical data with $N=32$ particles, that was developed in \cite{Onorato2015} (see also \cite{Onorato2023} for a review) using the theory of wave turbulence. This approach, which at present is only formal, is certainly worthy of attention from the perspective of the mathematical-physics community. 

The theory of wave turbulence is a growing research field in mathematical physics, both concerning the analysis of the so-called kinetic equation, and its derivation from hydrodynamic models with random initial data \cite{HaniDeng,Buckmaster2021} and with stochastic forcing \cite{GrandeHani}.

\subsection{Prethermalization in quantum systems}
In this short section we want to discuss, without any attempt of completeness, the mathematically rigorous results on prethermalization for quantum systems. The general setting, which is relevant for experimental applications, is a time-dependent perturbation of a time-independent many-body Hamiltonian.

Then, it is somehow expected that generic time-dependent systems \emph{do not} have any constant of motion and then, by unjustified used of the ergodic theorem \ref{thm:ErgodicEquivalence},  the system is expected to be ergodic or even, mixing on the whole phase space. 

Nevertheless, certain physical systems, when initialized far from equilibrium, quickly relax to long-living non-thermal states, and reach thermal equilibrium only after a much longer time-scale. The long-living state is often referred to as the  prethermal state. In general, the 
dynamics of a model in its prethermal phase is not necessarily described by an integrable
Hamiltonian. This is the case, for example, of certain systems subjected to external drivings, mathematically encoded in a time dependent Hamiltonian: in the cases when the system
prethermalizes, during the prethermal state the dynamics is often well-approximated by the
dynamics generated by a time independent Hamiltonian, but this in general guarantees an
approximate conservation of energy only. This approximate conservation of energy is enough to avoid quick thermalization and the system can -- at most -- explore a small region around the surface at constant energy for this scale of times. 

The fact that driven systems approach a long-standing prethermal state has been used in laboratory to produce  states with
certain controlled features given by external drivings. It has been predicted theoretically
that in absence of Many-Body Localization, the time dependent perturbation increases the
energy of the system, which eventually reaches thermal equilibrium in a featureless state,  often improperly referred to as a ``infinite-temperature Gibbs state'' \cite{DAlessio2014-PRET,Lazarides2014,Ponte2015}. In the particular case of a fast periodic driving, the 
heating rate of the system is suppressed and for a very long time, which is exponential in the
frequency \cite{Abanin2017}, the system remains in a prethermal state. This prethermal state often presents
interesting, non trivial features, and it is now customary to refer to the phases arising in this
way as Floquet phases of matter \cite{Eckhardt2022,Potter2016,Ye2021,Zhang2022,Zhang2022-BB}.

In the case of quasi-periodic drivings, recent theoretical works analyzed the possibility of
creating prethermal states which exhibit non-trivial thermodynamic features that are different
from the Floquet ones \cite{Lapierre2020,Long2022,Martin2017,Qi2021,
Martin2022,Zhao2021,Zhao2022}. Experimental works confirmed this
possibility \cite{Boyers2020,He2023,Malz2021}. 

Mathematically rigorous investigations on
this phenomenon have also been initiated, especially in the context of quantum many-body
systems in presence of external drivings \cite{Abanin2017,DeRoeckVerreet,Else2020,Gallone-Langella-2024}, and in the context of single-body quantum Hamiltonians with time-dependent external potential (see \cite{Gio_Zend,BL-QN,BL-QN2}).

These works adapt techniques from perturbation theory of classical mechanics to the quantum setting. As an example, let us briefly discuss the breakthrough result by Abanin and co-workers on fast driven Floquet systems.

The Hilbert space of a quantum spin chain is given in terms of the local Hilbert space on each site. If $\mathfrak{h}=\mathbb{C}^q$ for some $q\geq 2$ is the local Hilbert space, then the total Hilbert space is given by $\mathfrak{H}:= \otimes_{x \in \Lambda_L} \mathfrak{h}$, where $\Lambda_L$ is $\mathbb{Z}^d \cap [-L/2,L/2)^d$ for some $L>0$. Then, in mathematical physics, a certain important role is played by local Hamiltonians, which models lattice systems interacting with finite range potentials. We decide to state the result for the $d$-dimensional quantum Ising chain whose Hamiltonian is
\begin{equation}\label{eq:QuantumH}
	H \, = \, -J \sum_{x \in \Lambda_L} \sum_{j=1}^d \sigma_{x}^{(3)} \sigma_{x+e_j}^{(3)} + a \sum_{S \subset \Lambda} f(S) \prod_{x \in S} \sigma_x^{(3)}
\end{equation}
with $f(S)$ being a function supported only on connected sets with $|S| \leq M < \infty$.  This Hamiltonian is a model for quantum magnets with next-to-nearest neighbor interaction ruled by $J$ and local many body interaction of strength $a \in \mathbb{R}$. For the model described by the Hamiltonian \eqref{eq:QuantumH}, the total magnetization along the $(3)$-axis is defined as 
\begin{equation}
	M^{(3)} \, := \, \sum_{x \in \Lambda} \sigma_x^{(3)}
\end{equation}
is a constant of motion, i.e.~$[M^{(3)},H]=0$.

If we now add a time-dependent perturbation like a fast rotating magnetic field described by the operator
\begin{equation}
	W(t) \, := \, \cos(\omega t) \sum_{x \in \Lambda} \sigma_x^{(1)} + \sin(\omega t) \sum_{x \in \Lambda} \sigma_x^{(2)} \, ,
\end{equation}
we can ask what is the effect on the total magnetization. The time-evolution of total magnetization $M^{(3)}(t)$ is described by Heisenberg equations
\begin{equation}
	\frac{d M^{(3)}}{dt}(t) \, = \, [M^{(3)}(t),H+W(t)]=[M^{(3)}(t),W(t)] \, ,
\end{equation}
and an a-priori estimate as the one in Section \ref{sec:IntegrablePerturbations} yields 
\begin{equation}
	\frac{1}{|\Lambda|} \Vert M(t)-M(0) \Vert_{\mathrm{op}} \lesssim \frac{t}{\omega} \, ,
\end{equation}
(where $\Vert \cdot \Vert_{\mathrm{op}}$ denotes the operator norm.
In fact, the situation is more subtle and if $\omega$ is large enough, then it is possible to prove that 
\begin{equation}
	\frac{1}{|\Lambda|} \Vert M(t) - M(0) \Vert_{\mathrm{op}} \lesssim \omega^{-\frac{1}{2}} \, , \qquad |t| \lesssim e^{\omega} \, .
\end{equation}
That is, the magnetization of the system remains substantially unaltered for times that are exponentially long in the frequency of the driving. In other words, the quantum system will not thermalize earlier than $t \sim e^{\omega}$.

\vspace{1cm}

{\footnotesize
\noindent\textbf{Acknowledgments.} I would like to thank the organizers of the summer school ASIDE-15 who invited me to give a course on this subject and stimulated me to write these lecture notes. These notes would never have been written without the continuous discussions with my colleagues and collaborators on the subject: Ricardo Grande, Beatrice Langella, Alberto Maspero, Stefano Pasquali, Bob Rink, Stefano Ruffo and Giulio Ruzza. A special thanks goes to Giancarlo Benettin for having introduced me to the problem and for the continuous enthusiasm he puts in his research, and to Antonio Ponno for his invaluable contributions, insightful discussions, and unwavering support throughout our work.}


\textbf{} \\
\vspace{2cm}

\end{document}